\def\MINSUMSP{\textsc{MinSum}\xspace}
\def\MINSUMSPC{\textsc{MinSumC}\xspace}
\def\MINSUMSPQ{\textsc{MinSumQ}\xspace}
\def\MINMAXSP{\textsc{MinMax}\xspace}
\def\MINMAXSPQ{\textsc{MinMaxQ}\xspace}
\def\MINMAXSPC{\textsc{MinMaxC}\xspace}
\def\HR{\textsc{hr}\xspace}
\def\Plc{P_{\text{LC}}}
\def\Pe{P_{\text{empty}}}
\title{Generalized Capacity Planning for the Hospital-Residents Problem\thanks{A preliminary version of this work appeared in IWOCA 2023~\cite{iwoca}.}}
\author{Haricharan Balasundaram \inst{1} \and Girija Limaye \inst{2}\thanks{This work is partially done when the author was a Ph.D. scholar at IIT Madras.} \and
Meghana Nasre \inst{1} \and Abhinav Raja\inst{1}} 
\institute{Indian Institute of Technology Madras, India\\
\email{haricharan12122003@gmail.com, meghana@cse.iitm.ac.in, abhinav.raja03@gmail.com}
\and FLAME University, Pune, India \\
\email{girija.limaye@flame.edu.in}
}
\authorrunning{Balasundaram et. al.}
\newcommand{\AAA}{\mathcal{A}}
\newcommand{\BBB}{\mathcal{P}}
\newcommand{\mpref}{\succ}
\newcommand{\lpref}{\prec}
\newcommand{\CCQCC}{{\sf \MINSUMSPC_{c1,c2}}}
\newcommand{\s}{a}
\newcommand{\cc}{p}
\newcommand{\NP}{\sf NP}
\newcommand{\Poly}{\sf P}
\newcommand{\OPT}{{\sf OPT}}
\let\oldforall\forall
\let\forall\undefined
\DeclareMathOperator{\forall}{\oldforall}
\begin{document}

\maketitle
\begin{abstract}

The Hospital Residents setting models important problems like school choice, assignment of undergraduate students to degree programs, among many others. In this setting, fixed quotas are associated with the programs that limit the number of agents that can be assigned to them. Motivated by scenarios where {\em all} agents must be matched, we propose and study a generalized capacity planning problem, which allows cost-controlled flexibility with respect to quotas.

Our setting is an extension of the Hospital Resident setting where programs have the usual quota as well as an associated cost, indicating the cost of matching an agent beyond the initial quotas. We seek to compute a matching that matches all agents and is optimal with respect to preferences, and minimizes either a local or a global objective on cost.

We show that there is a sharp contrast -- minimizing the local objective is polynomial-time solvable, whereas minimizing the global objective is $\NP$-hard. On the positive side, we present approximation algorithms for the global objective in the general case and a particular hard case. We achieve the approximation guarantee for the special hard case via a linear programming based algorithm. We strengthen the $\NP$-hardness by showing a matching lower bound to our algorithmic result.
    
\keywords{Stable matchings, capacity augmentation, matchings under preferences}
\end{abstract}

\section{Introduction}

\label{sec:introduction}

The problem of computing optimal many-to-one matchings under two-sided preferences is extensively investigated in the literature~\cite{GS62,Roth,real_world_example_3,BCCKP19,AAST03}. This setting is commonly known as the Hospital Residents (\HR) setting. 

In the \HR setting~\cite{GS62} we are given a set of agents (residents), a set of programs (hospitals), and a set of mutually acceptable pairs between them. Each agent and every program has a preference ordering over its mutually acceptable partners. Additionally, every program has a positive integral capacity that denotes the maximum number of agents that can be assigned to the program. The goal is to compute a matching, that is, an assignment between agents and programs that is {\em optimal} with respect to preferences and capacities. This setting models several real-world problems, such as assigning students to schools~\cite{AAST03}, elective courses~\cite{real_world_example_3}, assigning medical interns to hospitals~\cite{Roth}, and assigning undergraduate students to university programs~\cite{BCCKP19}.

In this setting, {\em stability} is a well-accepted notion of optimality. An assignment between agents and programs is stable if no agent-program pair has an incentive to deviate from it~\cite{GS62}. It is known that all stable assignments match the same set of agents~\cite{GS62}. In certain applications of the $\HR$ setting, {\em every} agent must be matched. For instance, in school choice~\cite{AAST03} every child must find a school; while matching sailors to billets in the US Navy~\cite{yang2003two,robards2001applying}, every sailor must be assigned to some billet. In the $\HR$ setting, the {\em rigid} upper-quotas limit the number of agents that can be matched in any matching. 

The problem of {\em capacity expansion} is investigated very recently in~\cite{capvar,capplan,ijcai_AbeKI22,chen}. In the capacity expansion problem, the quotas of programs are {\em augmented} to improve the {\em welfare} of the agents. In another work, Gajulapalli~et~al.~\cite{Vazirani} study a two-round mechanism for school admissions in which the goal of the second round is to accommodate more students by suggesting quota increments to schools.

In our work, we are interested in the capacity augmentation problem to ensure that {\em every} agent is matched in a stable matching of the resulting instance. Our setting is similar to the \HR\ setting except that in addition to the (initial) capacities, every program also specifies a {\em cost} of matching additional agents to it. The capacity of programs can be {\em augmented} by spending an additional cost per augmented seat such that a stable matching in the augmented instance matches every agent.

Two special cases of this setting have been studied recently by Chen~et~al.~\cite{chen} and Nasre and Limaye~\cite{iwoca}. In~\cite{chen}, the authors assume that each program has a unit cost per augmented seat. In~\cite{iwoca}, the authors assume that the cost per augmented seat can be a non-negative integer but the initial capacities are zero. In both these works, the two main problems investigated are as follows: given an instance with agents, programs, preferences on both sides, capacities, and costs, augment the instance so that it admits a stable matching that matches every agent such that one of the following goals is achieved.

\begin{itemize}
    \item the maximum augmentation cost spent at a program is minimum
    \item the total augmentation cost spent across all programs is minimum
\end{itemize}

In the generalized setting considered in this paper, we allow {\em both}, that is, an arbitrary positive integral initial capacities (unlike zero initial capacities in~\cite{iwoca}) and arbitrary non-negative integral costs (unlike unit costs in~\cite{chen}).
We are ready to define our problems formally.

\subsection{Notation and Problem Definition}

We are given a bipartite graph $G = (\mathcal{A} \cup \mathcal{P}, E)$, where $\mathcal{A}$ denotes the set of \textit{agents} and $\mathcal{P}$ denotes the set of \textit{programs}. An edge $(a, p) \in E$ indicates that $a$ and $p$ form an \textit{acceptable agent-program pair}. For each vertex $v \in \mathcal{A} \cup \mathcal{P}$, we define $\mathcal{N}(v)$ to be the set of vertices adjacent to $v$ (that is, the \textit{neighborhood} of $v$). Each vertex $v \in \mathcal{A} \cup \mathcal{P}$ ranks vertices in $\mathcal{N}(v)$ in a strict order, called the \textit{preference list} of vertex $v$. For any vertex $v \in \mathcal{A} \cup \mathcal{P}$, if $v$ prefers $x$ over $y$ then we denote it by $x \succ_v y$, equivalently, $y \prec_v x$. 
The length of the longest preference list of an agent is denoted by $\ell_a$ and the length of the longest preference list of a program is denoted by $\ell_p$.
Each program $p$ has an initial quota $q(p)$ which is a positive integer. In literature, this instance is referred to as the $\HR$ instance (programs being the hospitals and agents being the residents). 
In our setting, in addition to the quotas, a program $p$ has an associated non-negative integral cost $c(p)$. As long as a matching matches upto $q(p)$ many agents to $p$, no cost is incurred. For each agent matched above the quota of $p$, we incur a cost $c(p)$ for exceeding the quota of $p$.

A many-to-one matching (called matching here onwards) $M \subseteq E$ in an \HR instance is an assignment of agents to programs such that each agent is matched to at most one program and each program $p$ is matched to at most $q(p)$ many agents. Let $M(a)$ denote the program to which the agent $a \in \mathcal{A}$ is matched. We say $M(a) = \bot$ if $a$ is unmatched in $M$. Let $M(p)$ denote the set of agents matched to program $p$. We call a program $p \in \mathcal{P}$ under-subscribed in a matching $M$ if $|M(p)| < q(p)$ and fully-subscribed if $|M(p)| = q(p)$. We assume that any agent prefers to be matched (to some program) rather than being unmatched. 

\begin{definition}[Stable Matching]
A pair $(a, p) \in E \setminus M$ is a blocking pair w.r.t. the matching $M$ if $p \succ_a M(a)$ and $p$ is either under-subscribed in M or there exists at least one agent $a' \in M(p)$ such that $a \succ_p a'$. A matching M is said to be stable if there is no blocking pair w.r.t. M.
\end{definition}

Given an \HR instance, a stable matching is guaranteed to exist and can be computed in linear time by the celebrated Gale and Shapley algorithm~\cite{GS62}. A matching $M$ is $\mathcal{A}$-perfect if every agent is matched in $M$. In this work, our goal is to compute a matching that is stable {\em and} $\mathcal{A}$-perfect. 
There exist  simple \HR instances that do not admit an $\mathcal{A}$-perfect stable matching. An $\HR$ instance may admit multiple stable matchings, however, by the Rural Hospitals' Theorem~\cite{Roth}, it is known that all stable matchings in an $\HR$ instance match the {\em same} set of agents. Thus, one can efficiently determine whether an \HR instance admits an $\mathcal{A}$-perfect stable matching.

In this work, to achieve $\mathcal{A}$-perfectness, we consider quota augmentation for programs. Let $\mathbb{N}$ be the set of non-negative integers. Let $\tilde{q}: \mathcal{P} \rightarrow \mathbb{N}$ be a function that maps a non-negative integer to every program $p \in \mathcal{P}$. The $\tilde{q}(p)$ indicates the amount by which the quota at program $p$ should be increased such that the modified $\HR$ instance admits an  $\mathcal{A}$-perfect stable matching $M$ and for every program $p$ the quota of $p$ is equal to  $q(p) + \tilde{q}(p)$.

A trivial quota augmentation wherein for every program $p$, $\tilde{q}(p)$ is set such that $q(p) + \tilde{q}(p) = |\mathcal{N}(p)|$ always results in an $\mathcal{A}$-perfect stable matching.
To control $\tilde{q}(p)$, we  {use the} cost function $c: \mathcal{P} \rightarrow \mathbb{N}$ over the set of programs. Given a matching $M$, the cost incurred at a program $p$ is $\tilde{q}(p) \cdot c(p)$.  In other words, the initial $q(p)$ many positions of a program  $p$ are {\em free}. 

Given an \HR instance along with costs, our goal is to compute an $\mathcal{A}$-perfect, stable matching that incurs the {\em minimum} cost. We consider two natural notions of minimization over cost:

\begin{itemize}
    \item minimize the total cost incurred, that is, $\sum_{p \in \mathcal{P}} ( \tilde{q}(p) \cdot c(p) )$.
    \item minimize the maximum cost incurred at any program, that is, $\max_{p \in \mathcal{P}} \{ \tilde{q}(p) \cdot c(p) \}$.
\end{itemize}

Based on this, we define the \MINSUMSP and the \MINMAXSP problems as follows:

\textbf{\noindent\MINSUMSP problem:} Given $G = (\mathcal{A} \cup \mathcal{P}, E)$, preferences of agents and programs, quota $q(p)$ for every program $p$, a cost function $c$ (defined earlier), the \MINSUMSP problem asks for a quota augmentation function $\tilde{q}$ and an $\mathcal{A}$-perfect stable matching in the augmented instance such that the total cost of augmentation is minimized. 

\textbf{\noindent\MINMAXSP problem:} Given $G = (\mathcal{A} \cup \mathcal{P}, E)$, preferences of agents and programs, quota $q(p)$ for every program $p$, a cost function $c$ (defined earlier), the \MINMAXSP problem asks for a quota augmentation function $\tilde{q}$ and an $\mathcal{A}$-perfect stable matching in the augmented instance such that the maximum augmentation cost spent at a program is minimized.

\smallskip
\noindent Next, we define two special cases of \MINSUMSP and \MINMAXSP.

\begin{enumerate}
    \item \textbf{Unit cost for augmentation:} When every program $p$ has $c(p) = 1$, we denote the problems as \MINSUMSPQ and \MINMAXSPQ. This is equivalent to the setting investigated by Chen~et~al.~\cite{chen}.  
    \item \textbf{Initial quotas being zero:} When every program $p$ has $q(p) = 0$, we denote the problems as \MINSUMSPC and \MINMAXSPC. This is equivalent to the \textit{cost-controlled quotas} setting investigated by Limaye and Nasre~\cite{iwoca}. 
\end{enumerate}

The notion of stability considered earlier, is defined with respect to input quotas. In a setting where initial quotas may be zero,  we use the following well-studied relaxation of stability.
Recall that if the matching $M$ is not stable then there exists a blocking
pair $(a,p)$ with respect to $M$. The blocking pair may arise due to under-subscription of the program or may arise due to the matching $M$ assigning to $p$ an agent that $p$ prefers lower than $a$.
If we allow blocking pairs arising due to an under-subscribed program $p$,
then we get a relaxation of stability, called {\em envy-freeness}~\cite{WR18}.

\begin{definition}[Envy-Freeness]
Given a matching $M$, an agent $a$ has a justified envy (here onwards, called {\em envy}) towards another agent $a'$ if $(a', p) \in M$, $p \succ_a M(a)$ and $a \succ_p a'$. The pair $(a, a')$ is called an envy-pair w.r.t. $M$. A matching $M$ is envy-free if there is no envy-pair w.r.t. $M$.     
\end{definition}

We observe the following about envy-freeness and stability when the initial quotas of all programs are zero.
Let $H$ be an instance in our setting in which the initial quotas of all programs are zero. 
Let $M$ be an $\mathcal{A}$-perfect matching in the augmented instance $H'$. If $M$ is stable in $H'$, then by definition, $M$ is also envy-free in $H'$. Next, suppose that $M$ is an envy-free matching in $H'$. Let $G$ denote the $\HR$ instance wherein the preferences are borrowed from $H'$ (that is, $H$), and for every program $p$,  we set its quota in $G$ to be equal to $|M(p)|$. Then, $M$ is stable in $G$. This implies that when we start with initial quotas of all programs being zero, envy-freeness and stability are equivalent.

\medskip

\noindent{\bf Example.} Consider an instance shown in Fig.~\ref{fig:ex} with three agents $a_1, a_2, a_3$ and three programs $p_1, p_2, p_3$. The tuple $(q, c)$ preceding a program indicates that the program has initial quota $q$ and the cost $c$ of matching a single agent to it. That is, $q(p_1) = 1$, $q(p_2) = 1$, $q(p_3) = 0$ and $c(p_1)=0$, $c(p_2) = 3$, $c(p_3) = 4$. Consider the two $\mathcal{A}$-perfect stable matchings: $M_1 = \{(a_1, p_2), (a_2, p_2),(a_3, p_2)\}$ and $M_2 = \{(a_1, p_2),(a_2, p_3),(a_3, p_2)\}$. The total augmentation cost spent in $M_1$ and $M_2$ is $6$ and $7$ respectively, whereas the maximum augmentation cost spent in $M_1$ and $M_2$ is $6$ and $4$ respectively.

It is straightforward to verify that $M_1$ is the unique optimal solution for \MINSUMSP, whereas
$M_2$ is the unique optimal solution for \MINMAXSP.
\begin{figure}[!ht]
\begin{minipage}{0.4\textwidth}
	\begin{align*}
		\s_1 &: \cc_2 \mpref \cc_1\\
		\s_2 &: \cc_3 \mpref \cc_2\\
		\s_3 &: \cc_2
	\end{align*}
\end{minipage}\hfill
	\begin{minipage}{0.4\textwidth}
		\begin{align*}
			(1,0)\ \cc_1 &: \s_1\\
			(1,3)\ \cc_2 &: \s_1 \mpref \s_2 \mpref \s_3\\
			(0,4)\ \cc_3 &: \s_2
		\end{align*}
	\end{minipage}\hfill
	\caption{An illustrative example for \MINSUMSP and \MINMAXSP.}
	\label{fig:ex}
\end{figure}

\subsection{Our results}

We show that the \MINMAXSP problem can be solved in polynomial time whereas the \MINSUMSP problem is $\NP$-hard. Moreover, the \MINSUMSP problem is inapproximable under standard complexity-theoretic assumptions.

\begin{theorem}
\label{thm:minmaxsp_exact_algo}
The \MINMAXSP problem can be solved in $O(m \log m)$ time  where $m = |E|$.
\end{theorem}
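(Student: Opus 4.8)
The plan is to solve \MINMAXSP by binary searching the optimal objective value over a polynomial-size set of candidates, using a single stable-matching computation as the feasibility oracle for each candidate.

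First, I would fix a threshold $T \ge 0$ and ask whether there is an $\mathcal{A}$-perfect stable matching whose augmentation cost at every program is at most $T$. The constraint $\tilde q(p) \cdot c(p) \le T$ is equivalent to an upper bound on the augmented quota: for a program $p$ with $c(p) > 0$ it permits at most $\lfloor T / c(p) \rfloor$ extra seats, while a program with $c(p) = 0$ may be augmented for free. Accordingly I define an \HR\ instance $G_T$ that is identical to the input except that the quota of each program $p$ is raised to $u_T(p) = \min\{\,|\mathcal{N}(p)|,\ q(p) + \lfloor T/c(p)\rfloor\,\}$ when $c(p) > 0$ and to $u_T(p) = |\mathcal{N}(p)|$ when $c(p) = 0$. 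Threshold $T$ is achievable precisely when $G_T$ admits an $\mathcal{A}$-perfect stable matching. To test this I compute any stable matching of $G_T$ by the Gale--Shapley algorithm in $O(m)$ time and check whether it is $\mathcal{A}$-perfect; correctness of this single test is immediate from the Rural Hospitals' Theorem, since all stable matchings of $G_T$ match the same set of agents.

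The crux of the argument is that achievability is \emph{monotone} in $T$: if $T$ is achievable then so is every $T' \ge T$. Since $u_{T'}(p) \ge u_T(p)$ for every $p$, it suffices to prove the following quota-monotonicity lemma for \HR: raising the quota of a program by one (weakly) enlarges the set of agents matched by a stable matching. I would prove this by a vacancy-chain argument. Starting from a stable matching $M$ of $G_T$, in $G_{T'}$ the only blocking pairs of $M$ are of the under-subscription type, since the ``displacement'' condition depends only on $M(p)$ and the preferences, which are unchanged, so it cannot newly hold. Repairing these by letting each newly created vacancy be filled by the program's most preferred blocking agent triggers a chain of strictly improving moves; because every move fills a genuine vacancy, no matched agent is ever displaced, the chain terminates, and a previously unmatched agent can only be added. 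The resulting matching is stable in $G_{T'}$ and matches a superset of the agents matched by $M$; combined with the Rural Hospitals' Theorem in $G_{T'}$ this gives monotonicity. Establishing this lemma cleanly is the main obstacle.

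Finally, monotonicity justifies binary search, and it remains to bound the candidate set. In any optimal solution the objective equals $\tilde q(p^\star)\cdot c(p^\star)$ for the tight program $p^\star$, with $1 \le \tilde q(p^\star) \le |\mathcal{N}(p^\star)|$, or the objective is $0$. Hence the optimum lies in $\{0\} \cup \{\, j\cdot c(p) : p \in \mathcal{P},\ 1 \le j \le |\mathcal{N}(p)|\,\}$, a set of size $O\big(\sum_{p} |\mathcal{N}(p)|\big) = O(m)$. I would sort these $O(m)$ values in $O(m \log m)$ time and binary search for the smallest achievable one using the $O(m)$ feasibility test, incurring $O(\log m)$ tests. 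A short optimality check shows that this smallest achievable candidate equals the optimum: it yields a valid solution of cost at most itself, while any cheaper valid solution would, by monotonicity, be a strictly smaller achievable candidate. The total running time is therefore $O(m \log m)$.
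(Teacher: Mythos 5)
Your proposal is correct and follows essentially the same route as the paper: the same candidate set $\{0\}\cup\{j\cdot c(p)\}$ of size $O(m)$, the same binary search with a Gale--Shapley feasibility oracle justified by the Rural Hospitals' Theorem, and the same monotonicity argument (your vacancy-chain repair is exactly the paper's conversion of an envy-free matching into a stable matching by repeatedly promoting the program's most-preferred blocking agent, which preserves the set of matched agents). The only difference is cosmetic: you explicitly cap the augmented quota at $|\mathcal{N}(p)|$ and handle $c(p)=0$ separately, which the paper leaves implicit.
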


We say that the preferences of elements of a particular set, say the agent set, are derived from a master list if there is an ordering of the programs and the preferences of all agents respect this ordering. 

\begin{theorem}
        \label{thm:minsumsp_hardness}
        \MINSUMSPC cannot be approximated within a constant factor unless $\Poly = \NP$. 
        The result holds even when the preferences of agents and programs are derived from a master list and there are only two distinct costs in the instance.
\end{theorem}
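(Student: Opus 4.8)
The plan is to give a gap-preserving (approximation-preserving) reduction from an optimization problem that is itself inapproximable within every constant factor, so that \MINSUMSPC\ inherits this hardness. A natural source is \textsc{Minimum Set Cover}, which admits no constant-factor (indeed no $o(\log n)$) approximation unless $\Poly=\NP$; it therefore suffices to build, from a Set Cover instance $(U,\mathcal S)$, a \MINSUMSPC\ instance whose optimum is $\Theta(\tau)$, where $\tau$ is the optimal cover size, using only two distinct costs and master-list preferences on both sides. Since the initial quotas are zero, I would work throughout with envy-freeness, which (as noted above) coincides with stability here, so an optimal solution is an $\mathcal A$-perfect envy-free matching minimizing $\sum_{p}|M(p)|\,c(p)$.

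The key mechanism I would exploit is that cost is incurred \emph{only} when an agent is forced onto an expensive program by envy. Set the two costs to be a cheap value and an expensive value $L$. If every agent had a cheap acceptable program, one could send all agents to their most-preferred cheap program and obtain an envy-free matching, because any strictly preferred program would be empty and hence cannot be envied; this is exactly why the reduction must plant \emph{seed} agents whose acceptable programs are all expensive. Such a seed $s$ must occupy some expensive program $q$; once it does, every agent $a$ with $a \succ_q s$ that prefers $q$ to its own assignment envies $s$ unless it too sits on a program it likes at least as much as $q$. Under the master orders this envy relation is monotone and predictable, so placing seeds triggers a controlled cascade of forced expensive assignments. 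I would encode each element of $U$ by a seed whose expensive options are exactly the sets containing it, so that ``which expensive programs are activated to absorb all seeds envy-freely'' corresponds to choosing a sub-family of $\mathcal S$, and $\mathcal A$-perfect envy-freeness forces this sub-family to cover $U$. The cost, being $L$ times the number of agents parked on expensive programs, is then within a constant factor of the number of activated sets.

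With the gadget in hand, the proof splits into the two standard directions. Given a cover of size $\tau$, I would route each seed to an expensive program corresponding to a chosen set, send all remaining agents to their best cheap program, and verify envy-freeness using the monotonicity of the master lists (a globally better agent is never stranded below a globally worse agent at a program it prefers), giving cost $\Theta(\tau)$. Conversely, from any $\mathcal A$-perfect envy-free matching I would read off the set of nonempty expensive programs; envy-freeness together with $\mathcal A$-perfectness guarantees that these sets cover $U$, and the number of agents on expensive programs lower-bounds $\tau$ up to a constant, so the cost is $\Omega(\tau)$. These two bounds make the reduction gap-preserving, and feeding in the inapproximability of Set Cover rules out any constant-factor approximation for \MINSUMSPC\ unless $\Poly=\NP$; if one instead prefers to start from a source with only a fixed constant gap, the gap can be amplified to any constant by taking disjoint copies and scaling $L$.

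The main obstacle I anticipate is making envy-freeness mirror the covering constraint \emph{exactly} while simultaneously respecting a single master list on each side and using only two distinct costs. Concretely, the challenge is to choose the master orders and the seed/padding structure so that (i) the envy-cascade from a set of active expensive programs is precisely ``cover all seeds and nothing spurious'', so that the optimum equals the cover optimum up to a constant, (ii) no unintended envy appears in the forward direction that would inflate the cost, and (iii) the master-list restriction does not accidentally trivialize the instance. Getting these three conditions to hold together, rather than the routine verification of either direction, is where the real work lies.
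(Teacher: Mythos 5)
Your proposal is correct and follows essentially the same route as the paper: a gap-preserving reduction from Set Cover with two costs, in which each element becomes a ``seed'' agent whose acceptable programs are exactly the (expensive) set-programs containing it, and each set-program carries padding agents that envy-freeness forces onto it whenever it is opened, so that the cost of any $\mathcal{A}$-perfect envy-free matching is proportional to the number of opened sets. The concrete gadget realizing your ``controlled cascade'' is to attach $n$ dummy agents to each set-program $c_j$, each preferring $c_j$ to a private cost-$0$ program and each preferred by $c_j$ over all element-agents, which pins the cost at $n(|T|+1)$ for an opened family $T$ and makes the master-list orders on both sides immediate.
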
  

The above theorem immediately implies that the \MINSUMSP problem is constant-factor inapproximable. The constant factor inapproximability of the \MINSUMSP problem is known from the result of Chen et al.~\cite{chen}, however their result does not imply Theorem~\ref{thm:minsumsp_hardness}. 
We further show that under the Unique Games Conjecture (UGC)~\cite{khot2008}, \MINSUMSPC cannot be approximated to within $(\ell_a-\epsilon)$ for any $\epsilon > 0$.

\begin{theorem}
    \label{thm:minsumsp_inapproximability}
    \MINSUMSPC cannot be approximated to within a factor of ($\ell_a-\epsilon$) for any $\epsilon > 0$ under UGC. This holds even when the preferences of agents and programs are derived from a master list and there are only two distinct costs.
\end{theorem}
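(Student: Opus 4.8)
The plan is to give a gap-preserving reduction from the minimum vertex cover problem on $\ell_a$-uniform hypergraphs (equivalently, minimum set cover in which every element lies in at most $\ell_a$ sets), starting from the construction already used for Theorem~\ref{thm:minsumsp_hardness} and carefully tracking the length of the agent lists. The external input is the UGC-based inapproximability of vertex cover on $k$-uniform hypergraphs: for every $\epsilon>0$ it is hard to approximate within $k-\epsilon$ (the hypergraph strengthening of the Khot--Regev bound, under the UGC of~\cite{khot2008}). I would set $k=\ell_a$, so that an $(\ell_a-\epsilon)$-approximation for \MINSUMSPC would yield a $(k-\epsilon)$-approximation for $k$-uniform \MVC, contradicting UGC. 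The reduction must simultaneously keep all preferences master-list-derived and use only two distinct costs, exactly as in Theorem~\ref{thm:minsumsp_hardness}; the genuinely new requirement is that the agents encoding hyperedges have lists of length \emph{exactly} $k$, so that $\ell_a=k$ and the hardness factor is pinned precisely to the agent-list length.

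Concretely, I would introduce one agent $a_e$ for each hyperedge $e=\{v_1,\dots,v_k\}$ whose preference list ranks exactly the $k$ programs encoding $v_1,\dots,v_k$ in the master-list order, so $|\mathcal{N}(a_e)|=k=\ell_a$. Each vertex $v$ is encoded by a small gadget consisting of an inexpensive ``overflow'' program and an expensive ``selection'' program, together with dedicated auxiliary agents placed at the top of the master list; the two program costs are the two distinct costs required by the statement, which we may take to be a large value $N$ and $0$ (or $N$ and $1$). Recall that, since all initial quotas are zero, we set $\tilde{q}(p)=|M(p)|$, so every program is fully subscribed and envy-freeness coincides with stability, as observed just after the definition of envy-freeness; thus the only feasibility constraint is the cutoff condition that no agent preferring a nonempty program $p$ to its own match outranks the worst agent of $M(p)$ in the master list. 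The gadget is designed so that, under this constraint, an $\mathcal{A}$-perfect envy-free matching can route the hyperedge-agents of $e$ only through vertices that have been ``selected,'' and selecting a vertex forces its auxiliary agents onto the expensive program, contributing cost $N$ once per selected vertex.

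Given the gadget, completeness and soundness are routine to state: a vertex cover $C$ yields an $\mathcal{A}$-perfect envy-free matching in which exactly the vertices of $C$ are selected, of cost $N\cdot|C|$ up to a fixed lower-order term; conversely, from any $\mathcal{A}$-perfect envy-free matching the set of selected vertices is a vertex cover of size at most $(\text{cost})/N$ up to the same term. Choosing $N$ to be a sufficiently large polynomial in the input size makes the additive overhead negligible relative to $N\cdot\OPT_{\MVC}$, so the optimum cost is $(1\pm o(1))\,N\cdot\OPT_{\MVC}$ and the multiplicative gap is preserved, giving the claimed $(\ell_a-\epsilon)$ hardness under UGC. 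I expect the main obstacle to be the design and verification of the per-vertex gadget, that is, engineering the master-list ranks and the two costs so that the selection charge is paid \emph{exactly once} per used vertex and cannot be evaded by rerouting agents onto the cost-$0$ overflow program; because the per-agent cost is additive and a cost-$0$ program can in principle absorb the auxiliary ``opener'' agents for free, the forcing has to be achieved purely through the envy (cutoff) condition, and making this watertight while holding every agent's list length at exactly $\ell_a$ is the crux of the argument.
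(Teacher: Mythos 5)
Your high-level route is viable and, at its core, parallels the paper's: both reductions use the Section~\ref{sec:ConstantFactorConstruction} machinery in which each vertex is encoded by a costly ``selection'' program guarded by auxiliary agents whose envy forces the selection charge to be paid whenever any edge-agent uses that vertex, and both amplify the per-vertex charge so that the additive $O(n)$ overhead from the edge-agents is swamped. The difference is the source problem. You reduce from vertex cover on $\ell_a$-uniform hypergraphs and invoke the UGC-based $(k-\epsilon)$-hardness for $k$-uniform hypergraph vertex cover, aiming to pin the hardness factor to an arbitrary list length $k=\ell_a$. The paper instead reduces from ordinary graph vertex cover and uses only the $(2-\epsilon')$ Khot--Regev bound~\cite{khot2008}: in its reduced instance every element-agent has exactly two programs on its list (the two endpoints of the edge), so $\ell_a=2$ and $(\ell_a-\epsilon)$-inapproximability is exactly $(2-\epsilon)$-inapproximability, which is all the theorem statement requires. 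The amplification is done not with one expensive program of cost $N$ but with $f(n,\epsilon)=\frac{2n(1-\epsilon)}{\epsilon}$ unit-cost dummy agents per subset-program (Lemmas~\ref{lem:ellainapp2} and~\ref{lem:ellainapp1}); the two mechanisms are interchangeable. Your version would prove a formally stronger family of statements (hardness $k-\epsilon$ on instances with $\ell_a=k$ for every fixed $k$), at the price of needing the hypergraph form of the UGC bound.

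The one substantive gap is that you defer the design and verification of the per-vertex gadget, calling it the crux, when in fact it is already in hand: the concern that the selection charge ``can be evaded by rerouting onto the cost-$0$ overflow program'' is precisely what the envy-freeness argument in Lemma~\ref{lem:CCQToSetCoverConstantApprox} rules out --- each dummy agent $u_j^l$ prefers $c_j$ to $w_j^l$ and is preferred by $c_j$ over every element-agent, so the moment any element-agent is matched to $c_j$, all $f(n,\epsilon)$ dummies must be matched to $c_j$ as well, and the charge is incurred in full. Also note that you do not need every agent's list to have length exactly $\ell_a$; only the hyperedge-agents need lists of length $k$, while the auxiliary agents keep their length-two lists, since $\ell_a$ is the maximum. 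Once these points are filled in, your argument goes through, but as written the completeness and soundness lemmas --- which carry all the content --- are asserted rather than proved.
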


Theorem~\ref{thm:minsumsp_inapproximability} implies that the \MINSUMSP problem is also ($\ell_a - \epsilon$)-inapproximable ($\epsilon > 0$) under UGC. This gives another lower bound for the \MINSUMSP problem. 

We now state our algorithmic results for \MINSUMSP. We present two approximation algorithms for the general instances of the \MINSUMSP problem.

\begin{theorem}
    \label{thm:|P|-approx-algo}
    \MINSUMSP admits a $|P|$-approximation algorithm which runs in $O(m \log m)$ time.
\end{theorem}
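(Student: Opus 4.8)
The plan is to show that an optimal solution to the \MINMAXSP problem is automatically a $|\mathcal{P}|$-approximation for \MINSUMSP, and then to invoke the $O(m\log m)$ algorithm of Theorem~\ref{thm:minmaxsp_exact_algo}. The crucial observation is that the two problems share \emph{exactly} the same feasible region: a pair consisting of a quota augmentation $\tilde{q}$ together with an $\mathcal{A}$-perfect stable matching in the augmented instance. Only the objective function differs. Write $\OPT_{\Sigma}$ for the optimal \MINSUMSP value and $\OPT_{\max}$ for the optimal \MINMAXSP value on the same input.

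First I would relate the two optima. For any feasible solution each term $\tilde{q}(p)\cdot c(p)$ is non-negative, so the maximum over programs is at most the sum over programs; hence the \MINMAXSP objective value of an optimal \MINSUMSP solution is at most $\OPT_{\Sigma}$. Since $\OPT_{\max}$ is the minimum \MINMAXSP value over the (identical) feasible region, this yields $\OPT_{\max} \le \OPT_{\Sigma}$.

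Next I would bound the \MINSUMSP cost of an optimal \MINMAXSP solution. Let $(\tilde{q}, M)$ be such a solution, so that $\max_{p \in \mathcal{P}} \tilde{q}(p)\cdot c(p) = \OPT_{\max}$. Summing over the at most $|\mathcal{P}|$ programs, each of which contributes at most $\OPT_{\max}$, gives
\[
\sum_{p \in \mathcal{P}} \tilde{q}(p)\cdot c(p) \;\le\; |\mathcal{P}|\cdot \OPT_{\max} \;\le\; |\mathcal{P}|\cdot \OPT_{\Sigma},
\]
where the last inequality uses the bound established in the previous step. Thus the \MINMAXSP optimal solution, read as a \MINSUMSP solution, is within a factor $|\mathcal{P}|$ of optimal.

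Finally, the algorithm simply computes an optimal \MINMAXSP solution via Theorem~\ref{thm:minmaxsp_exact_algo} and returns it; its correctness as a $|\mathcal{P}|$-approximation follows from the chain of inequalities above, and its running time is dominated by that computation, namely $O(m\log m)$. I do not expect a genuine obstacle here: the entire argument rests on the elementary inequality $\max \le \text{sum} \le |\mathcal{P}|\cdot \max$ for non-negative terms, so the only point requiring care is confirming that \MINMAXSP and \MINSUMSP optimize over an identical feasible region, which is immediate from their definitions.
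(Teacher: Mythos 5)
Your proposal is correct and follows essentially the same route as the paper: the paper's Lemma~\ref{lem:lemma_cost_bound} likewise observes that $y^* \ge t^*$ (since the max is dominated by the sum over the same feasible region) and then bounds the total cost of the \MINMAXSP optimum by $|\mathcal{P}|\cdot t^* \le |\mathcal{P}|\cdot y^*$, invoking Theorem~\ref{thm:minmaxsp_exact_algo} for the running time. Your write-up is just a slightly more explicit version of the same argument.
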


\begin{theorem}
\label{thm:l_p-algo-theorem}
\MINSUMSP admits an $\ell_p$-approximation algorithm which runs in $O(m \cdot \ell_p)$ time, where $\ell_p$ denotes the length of the longest preference list of a program.
\end{theorem}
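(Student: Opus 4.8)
The plan is to give a constructive algorithm that builds an $\mathcal{A}$-perfect stable matching while augmenting only a controlled set of programs, and then to bound its cost against $\OPT$ using the elementary fact that no program is ever augmented beyond the length of its own preference list. First I would run the agent-proposing Gale--Shapley procedure with the original quotas $q(p)$ to obtain a stable matching $M_0$ together with its canonical unmatched set $U$ (canonical by the Rural Hospitals' Theorem). If $U=\emptyset$ then $M_0$ is already $\mathcal{A}$-perfect at zero cost. Otherwise I would absorb the agents of $U$ one at a time: since such an agent $a$ has exhausted its list, matching it \emph{requires} augmenting some program on its list, and the algorithm augments a suitably chosen program $p$, adding $a$ as the new least-preferred agent of $p$. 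Augmenting $p$ can create a blocking pair $(b,p)$ with $b\succ_p a$, and once $b$ moves to $p$ it can leave $b$'s former program under-subscribed; both events are repaired by ordinary blocking-pair resolution and proposal/rejection chains, which may in turn expose a new unmatched agent and trigger further augmentations. The basic invariant to prove is that after each step the matching is stable with respect to the current quotas, and that the process terminates with an $\mathcal{A}$-perfect stable matching $M$ with augmentation function $\tilde q$.

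Let $S=\{p : \tilde q(p)\ge 1\}$ be the set of programs the algorithm augments. The upper-bound half is then immediate: a program can be matched only to agents on its list, so $\tilde q(p)\le |M(p)|\le |\mathcal{N}(p)|\le \ell_p$ for every $p$. Hence the cost of $M$ satisfies $\sum_{p\in S}\tilde q(p)\,c(p)\le \ell_p\sum_{p\in S} c(p)$, and the entire theorem reduces to the lower bound $\sum_{p\in S} c(p)\le \OPT$; that is, charging each augmented program its cost exactly once must not exceed the optimum.

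For the lower bound I would argue that the programs in $S$ are forced, in the weak sense that there exists \emph{an} optimal solution $M^*$, with support $S^*=\{p:\tilde q^*(p)\ge 1\}$, for which $S\subseteq S^*$. Since $\tilde q^*(p)\ge 1$ for $p\in S^*$, this gives $\sum_{p\in S}c(p)\le\sum_{p\in S^*}c(p)\le\sum_{p\in S^*}\tilde q^*(p)\,c(p)=\OPT$, yielding a cost of at most $\ell_p\cdot\OPT$. The containment $S\subseteq S^*$ I would prove by an exchange argument anchored to a Hall/deficiency certificate: each augmentation the algorithm performs at $p$ is witnessed by a set of agents whose local demand, in every $\mathcal{A}$-perfect stable matching, can be met only by increasing the quota of $p$ or of some program at least as expensive; starting from an arbitrary optimal $M^*$ one then rewrites it, without increasing its cost and while preserving stability, so that its support comes to contain $S$.

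I expect the lower-bound lemma to be the main obstacle. Note that the stronger property ``$p$ is augmented in \emph{every} optimal solution'' is false, since distinct optimal supports can coexist; I only need an optimal witness whose support contains $S$, and the delicate point is making the exchange cost-preserving while respecting stability across the induced rejection/augmentation chains. A secondary obstacle is the stability invariant for the cascades in the first phase, together with the running-time bound: amortizing over the chains, each program is reconsidered $O(\ell_p)$ times and each incident edge is touched $O(1)$ times per reconsideration, giving the claimed $O(m\cdot\ell_p)$ total.
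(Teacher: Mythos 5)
Your overall architecture --- Gale--Shapley on the original quotas, absorb the canonically unmatched agents by augmenting programs on their lists, repair stability, and bound each program's augmentation by $\ell_p$ --- is the same as the paper's. But the two steps you defer are exactly where the paper's proof lives, and as stated your versions do not close. First, the ``suitably chosen program'' cannot be left open: the paper matches each unmatched agent $a$ to $p_a^*$, the \emph{cheapest} program on $a$'s own preference list (ties broken by preference), and this choice is what turns the lower bound into a one-line charging argument --- every agent of $\mathcal{A}_u$ must be matched in any feasible solution, so $c(\OPT)\ge\sum_{a\in\mathcal{A}_u}c(p_a^*)\ge\sum_{p\in S}c(p)$, where $S$ is the set of programs so opened. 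No exchange argument and no claim about the support of an optimal solution is needed; the paper never proves (and does not need) your containment $S\subseteq S^*$.

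Second, your repair phase is the genuine gap. You allow proposal/rejection chains that can unmatch agents and ``trigger further augmentations''; any program augmented by such a cascade is not covered by the charging above (it need not be anybody's cheapest neighbour), and your fallback --- that some optimal support contains everything you augmented --- is precisely the lemma you admit you cannot prove, and it is implausible once cascades are allowed, since the algorithm could then open strictly more programs than any optimal solution does. The paper avoids this entirely: its repair is promotion-only --- when $(b,p)$ blocks, $b$ is simply added to $p$ on top of the agents already there, so nobody is ever rejected or unmatched --- and a separate structural lemma (Lemma~\ref{lem:p1p4}) uses the stability of the initial matching to show that promotions only ever land on programs already opened as some unmatched agent's least-cost program. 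Hence the set of augmented programs never grows beyond $S$, each such program absorbs at most $\ell_p$ agents, and the $\ell_p$ factor follows. To repair your argument you should (i) fix the augmented program to be the least-cost program on the unmatched agent's list, and (ii) replace rejection chains by promotions and prove the analogue of Lemma~\ref{lem:p1p4}; the support-containment lemma can then be discarded altogether.
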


We present an approximation algorithm with a guarantee of $\ell_a$ when the instance has two distinct
costs, thereby meeting the lower bound presented in Theorem~\ref{thm:minsumsp_inapproximability}.

\begin{theorem}
\label{thm:minsumccq_2_cost_approximation}
\MINSUMSPC admits an $\ell_a$-approximation algorithm when the instance has two distinct costs.
\end{theorem}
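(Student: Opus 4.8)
The plan is to give an LP-rounding algorithm, exploiting the zero-initial-quota structure of \MINSUMSPC. Since every program has $q(p)=0$ here, the cheapest augmentation realizing a matching $M$ sets $\tilde q(p)=|M(p)|$, so the cost of $M$ is $\sum_{p\in\mathcal P}c(p)\,|M(p)|=\sum_{(a,p)\in M}c(p)$, and recall that with zero quotas envy-freeness and stability coincide. Write $P_1,P_2$ for the programs of the smaller cost $c_1$ and the larger cost $c_2$ ($c_1<c_2$), and let $n=|\mathcal A|$. Since $M$ is $\mathcal A$-perfect, the numbers $n_1,n_2$ of agents it sends to $P_1,P_2$ satisfy $n_1+n_2=n$, hence the cost equals $c_1 n+(c_2-c_1)\,n_2$. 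As $c_1 n$ is a fixed additive constant and $c_2-c_1>0$, minimizing cost is equivalent to minimizing $n_2$, the number of agents matched to expensive programs; moreover any $\alpha$-approximation ($\alpha\ge1$) for $n_2$ yields an $\alpha$-approximation for cost, because our cost $c_1 n+(c_2-c_1)\,\alpha n_2^\star\le\alpha\big(c_1 n+(c_2-c_1)n_2^\star\big)=\alpha\,\OPT$. So I would reduce to minimizing $n_2$ over $\mathcal A$-perfect envy-free matchings.

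Next I would write the natural $0/1$ program, with a variable $x_{a,p}$ for each $(a,p)\in E$, and relax it. The constraints are $\sum_{p}x_{a,p}=1$ for every agent $a$ ($\mathcal A$-perfectness) and, for every program $p$ and every pair $a\succ_p a'$ of its neighbours,
\[
x_{a',p}\ \le\ \sum_{p'\,:\,p'\succeq_a p} x_{a,p'},
\]
which is exactly the linearization of envy-freeness: if the worse agent $a'$ is placed at $p$, the better agent $a$ must be placed at $p$ or higher on its own list. The objective is $\min\sum_{p\in P_2}\sum_a x_{a,p}$ (equivalently $\min\sum_{(a,p)\in E}c(p)\,x_{a,p}$). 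This LP has polynomially many constraints, is solvable in polynomial time, and its optimum $\mathrm{LP}^\star$ lower-bounds $\OPT$.

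The rounding is the heart of the argument and is where the bound $\ell_a$ enters. Guided by $x^\star$, each program $p$ would declare an acceptance threshold (a prefix of its preference list); each agent is then assigned to the best program on its list that accepts it, and a Gale--Shapley--style sweep realizes these cutoffs as an honest $\mathcal A$-perfect matching that is automatically envy-free, since the cutoff structure forbids envy-pairs. The factor $\ell_a$ comes from the covering view of the $\mathcal A$-perfectness constraint: each agent has at most $\ell_a$ programs on its list, so the unit of LP mass spread over these $\le\ell_a$ programs forces at least $1/\ell_a$ mass onto whichever program the agent is eventually matched to; an agent the rounding is compelled to send to an expensive program therefore witnesses at least $1/\ell_a$ units of $x^\star$-mass on expensive programs that can be charged to it, giving $n_2\le\ell_a\cdot\mathrm{LP}^\star\le\ell_a\cdot\OPT$. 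With the reduction above, this yields the claimed $\ell_a$-approximation for the cost.

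The main obstacle is making the rounded matching simultaneously (i) envy-free and (ii) within a factor $\ell_a$ of the LP, since these pull in opposite directions: pushing an agent down to a cheaper program to save cost can create an envy-pair at a better program, and repairing that envy-pair (by evicting a worse agent) can in turn force other agents upward. I would therefore have to show that the thresholds extracted from $x^\star$ are mutually consistent (monotone along each program's preference list) and that the repair process terminates without inflating $n_2$ beyond the $1/\ell_a$-charging bound. Finally, I would note that the guarantee is best possible: by Theorem~\ref{thm:minsumsp_inapproximability}, \MINSUMSPC is $(\ell_a-\epsilon)$-inapproximable under UGC even with two distinct costs and master-list preferences, so the factor $\ell_a$ cannot be improved.
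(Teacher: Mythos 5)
Your reduction to minimizing $n_2$ and your LP are fine: the cost of an $\mathcal{A}$-perfect matching in the zero-quota setting is the increasing affine function $c_1 n+(c_2-c_1)n_2$ of $n_2$, so an $\alpha$-approximation of $n_2$ does give an $\alpha$-approximation of the cost, and your LP is essentially the paper's primal. The genuine gap is the rounding, which is the entire content of the theorem and which you never actually specify. Two concrete problems. First, the charging claim is unjustified and false as stated: spreading an agent's unit of LP mass over at most $\ell_a$ programs guarantees only that \emph{some} program on its list carries mass at least $1/\ell_a$; it does not guarantee that the program the rounding ultimately assigns the agent to carries that much mass, and in particular an agent that the envy-repair cascade forces onto an expensive program may have essentially all of its $x^\star$-mass on cheap programs, so no $1/\ell_a$ units of expensive mass can be charged to it. Without a concrete rule for extracting cutoffs from $x^\star$ and a proof that the repair process (which you correctly identify as the obstacle) terminates without destroying the charging, there is no argument. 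Second, nothing in your sketch uses the two-cost hypothesis, yet the theorem is specifically about two distinct costs and the general case is provably harder for the paper's techniques; a correct proof must identify where that assumption enters.

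For comparison, the paper does not solve or round the LP at all: it runs a primal-dual algorithm that grows a feasible dual $(y,z)$ together with an envy-free matching supported on tight edges. The two-cost assumption enters through an invariant that every edge from an agent to a program above its current partner has slack either $0$ or exactly $c_2-c_1$, so every dual increment is by the fixed amount $c_2-c_1$ and makes a whole group of edges tight at once. Each unmatched agent $a$ charges the $z$-updates made on its behalf to its own $y_a$ increments, at most $\ell_a-1$ per increment after cancelling $z$ terms that appear positively in one matched edge's constraint and negatively in another's; weak duality ($\sum_a y_a\le c(\OPT)$) together with tightness of all matched edges then gives $c(M)\le\ell_a\sum_a y_a\le\ell_a\cdot c(\OPT)$. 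If you wish to pursue LP rounding instead, you must supply the missing threshold-extraction rule, prove the repair cascade terminates with the charging intact, and make explicit where two distinct costs are used.
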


Our results are summarized in Table~\ref{tab:results}.

\begin{table}[H]
    \centering
    \begin{tabular}{|p{0.16\textwidth}|p{0.4\textwidth}|p{0.4\textwidth}|} \hline
         $\MINMAXSP$ & \multicolumn{2}{c|}{$\MINSUMSP$} \\ \hline\hline
        \vspace{0.12cm}
            In $\Poly$ (Theorem~\ref{thm:minmaxsp_exact_algo})
        & 
            \begin{itemize}[topsep=0pt]
                \item $\NP$-hardness and constant 
         factor inapproximability (follows from Theorem \ref{thm:minsumsp_hardness})
                \item $(\ell_a-\epsilon)$-inapproximability (for any $\epsilon>0$) under UGC (Theorem~\ref{thm:minsumsp_inapproximability}) 
            \end{itemize}         
        & 
            \begin{itemize}[topsep=0pt]
            \item $|\BBB|$-approximation (Theorem~\ref{thm:|P|-approx-algo})
                \item $\ell_p$-approximation (Theorem~\ref{thm:l_p-algo-theorem})
                \item $\ell_a$-approximation for \MINSUMSPC when two distinct costs (Theorem~\ref{thm:minsumccq_2_cost_approximation})
            \end{itemize}            
        \\         
            \hline
    \end{tabular}
    \caption{Summary of our results}
    \label{tab:results}
\end{table}

\noindent{\em Outline of the paper. } In Section~\ref{sec:related_work_and_background}  we present a brief literature review. In Section~\ref{sec:algo} we present our algorithmic results for \MINMAXSP and general instances of \MINSUMSP. In Section~\ref{sec:two_cost} we present an algorithm for the restricted case of \MINSUMSPC problem with two distinct costs. Section~\ref{sec:hardness} presents our hardness and inapproximability results. We conclude in Section~\ref{sec:concl}.
\section{Related Work and Background}

\label{sec:related_work_and_background}

Capacity planning and or capacity augmentation has received attention in the recent past since it is a natural and practical approach to circumvent rigid quotas for matching problems. The capacity planning problem with a similar motivation as ours is studied extensively under the two-sided preference setting in ~\cite{Vazirani,capvar,capplan,ijcai_AbeKI22,chen,AFACAN2024277}. In the two-round school choice problem studied by Gajulapalli~et~al.~\cite{Vazirani}, their goal in round-2 is to match {\em all} agents in a particular set. This set is derived from the matching in round-1 and they need to match the agents in an envy-free manner (called stability preserving in their work). This can still leave certain agents unassigned. It can be shown that the $\MINSUMSPC$ setting generalizes the matching problems in round-2. We remark that in~\cite{Vazirani} the authors state that a variant of $\MINSUMSPC$ problem (Problem~33, Section~7) is $\NP$-hard. However, they do not investigate the problem in detail.

In the very recent works~\cite{capvar,capplan,ijcai_AbeKI22} the authors consider the problem of distributing extra seats (beyond the input quotas) limited by a {\em budget}
that leads to the best outcome for agents. Their setting does not involve costs, and importantly, $\mathcal{A}$-perfectness is not guaranteed. Bobbio~et~al.~\cite{capvar} show the $\NP$-hardness of their problem. Bobbio~et~al.~\cite{capplan} and Abe~et~al.~\cite{ijcai_AbeKI22} propose a set of
approaches which include heuristics along with empirical evaluations. In our work, we present algorithms with theoretical guarantees. Chen and Cs{\'{a}}ji~\cite{chen} investigate a variant of the capacity augmentation problem 
mentioned earlier and present hardness, approximation algorithms, and parameterized complexity results.
Chen and Cs{\'{a}}ji~\cite{chen} investigate several variants of $\MINSUMSPQ$ with objectives such as Pareto efficiency and student popularity instead of $\mathcal{A}$-perfectness. They show these variants to be hard, which implies the hardness of the corresponding objectives in the generalized $\MINSUMSP$ setting as well. Since these variants do not require $\mathcal{A}$-perfectness, they are trivial in the cost-controlled quotas setting (as in \cite{iwoca}) - if there are no programs with cost $0$, the solution will be the empty matching. If there are programs with cost $0$, the solution can be obtained by matching every agent to their highest-preferred such program.

Capacity augmentation along with costs has also been considered in the one-sided preference list setting, known as the house allocation problem. Here every agent has a preference ordering over a subset of the programs and programs are indifferent between the its neighbours.
In the one-sided preference list setting various notions of optimality like rank-maximilaity, fairness, popularity and weak dominance are studied. 
Kavitha and Nasre~\cite{KavithaN11PopularVar}  and Kavitha~et~al.~\cite{KNP12} address the capacity augmentation problem for the notion of popularity.   It is known that a popular matching is not guaranteed to exist in the one-sided preference list setting. Therefore, their objective was to optimally increase program quotas to create an instance that admits a popular matching. In their setting, the min-max version turns out to be $\NP$-hard whereas the min-sum problem (without the $\mathcal{A}$-perfectness requirement) is polynomial time solvable.

The cost-based quotas considered in our work are also considered in the one-sided preference setting  by Santhini~et~al.~\cite{Santhini}.
In their work, the authors ask for stronger guarantees on the output matching than $\mathcal{A}$-perfectness. This is expressed in terms of a {\em signature} of a matching which allows encoding requirements about the number of agents matched to a particular rank. They consider the problem of computing a min-cost matching with a desired signature and show that it is efficiently solvable. This results in the one-sided setting are in contrast to the hardness and inapproximability results we show for a similar optimization problem under two-sided preferences.

\medskip
Before we proceed to present our results, we discuss important connections between envy-free matchings and stable matchings in an \HR\ instance.

\subsection{Envy-free Matchings to Stable Matchings}

\label{sec:efm_to_stable_matching}
As noted earlier, envy-freeness and stability are not equivalent in the \HR setting.
We begin by noting a useful property about any blocking pair with respect to an envy-free matching.

\begin{lemma}\label{lem:underbp}
If $M$ is an envy-free matching that is not stable, then all the blocking pairs have under-subscribed programs.
\label{lemma:undersubscribed_posts}
\end{lemma}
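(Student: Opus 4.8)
The plan is to argue directly from the definition of a blocking pair, using a case split on whether the blocking program is under-subscribed, and to show that the only remaining case immediately contradicts envy-freeness. Since $M$ is assumed to be not stable, there is at least one blocking pair; I fix an arbitrary blocking pair $(a,p)$ and aim to show that $p$ is under-subscribed in $M$.

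First I would unfold the definition of a blocking pair. By definition, $(a,p) \in E \setminus M$ satisfies $p \succ_a M(a)$ together with the disjunction: either (i) $p$ is under-subscribed in $M$, or (ii) there exists an agent $a' \in M(p)$ such that $a \succ_p a'$. The key observation is that these two conditions are not mutually exclusive, so to establish the lemma it suffices to rule out the situation in which $(a,p)$ blocks while $p$ is fully-subscribed.

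The main step is then a short contradiction argument. Suppose toward a contradiction that $p$ is \emph{not} under-subscribed, i.e., $p$ is fully-subscribed in $M$. Since $(a,p)$ is a blocking pair, condition (i) fails, so the disjunction forces condition (ii): there exists $a' \in M(p)$ with $a \succ_p a'$. Now combine the three facts $(a',p) \in M$, $p \succ_a M(a)$ (from the blocking condition), and $a \succ_p a'$. These are exactly the three requirements in the definition of justified envy, so $(a,a')$ is an envy-pair with respect to $M$, contradicting the assumption that $M$ is envy-free. Hence $p$ must be under-subscribed, and since $(a,p)$ was arbitrary, every blocking pair has an under-subscribed program.

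I do not expect any genuine obstacle here; the argument is essentially a direct comparison of the two definitions. The only point requiring a little care is the handling of the disjunction in the definition of a blocking pair: one must note that fully-subscribed programs can only block via condition (ii), which is precisely the clause that coincides with the definition of an envy-pair. A minor edge case is when $a$ is unmatched, so that $M(a) = \bot$; here $p \succ_a M(a)$ holds by the stated convention that every agent prefers being matched, and this does not affect the envy argument, since the definition of an envy-pair only invokes $p \succ_a M(a)$.
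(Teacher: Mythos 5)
Your proposal is correct and follows essentially the same contradiction argument as the paper: assume the blocking program is fully-subscribed, use the disjunction in the blocking-pair definition to extract an agent $a' \in M(p)$ with $a \succ_p a'$, and observe that this yields an envy-pair. The paper additionally notes in passing that $q(p) > 0$ must hold, but this is subsumed by your direct appeal to condition (ii).
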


\begin{proof}
   Suppose for the sake of contradiction that
   there exists a blocking pair $(a, p)$ such that the program $p$
   is fully-subscribed ($|M(p)| = q(p)$).  Since $(a,p)$ is a blocking pair,
   we must have $q(p) > 0$.  Hence, $|M(p)| > 1$ and there exists an agent $a' \in M(p)$, 
   such that $a \succ_p a'$. This implies that $(a, a')$ is an envy pair, a contradiction to
   the fact that $M$ is an envy-free matching.
\end{proof}

Next we show that it is possible to convert an envy-free matching $M$ in an \HR instance to a stable matching $M_s$ in the same \HR instance (with the program quotas unchanged) such that the agents that are matched in $M$ remain matched in $M_s$. Algorithm~\ref{algo:promotion-algorithm} gives a simple procedure for the same.

\begin{algorithm}[H]

\caption{Convert an envy-free matching to a stable matching}
\begin{algorithmic}[1]

\Statex {\bf Input: } 
An envy-free matching $M$ in an \HR instance $G$

\Statex {\bf Output: } A stable matching $M_s$ in $G$ such that agents matched in $M$ remain matched in $M_s$
\State $M_s \gets M$

    \While{$\exists$ blocking pair $(a', p)$ w.r.t. $M_s$} \label{line:bp}
        \State{Let $a$ be the highest-preferred agent by $p$ such that $p \succ_a M_s(a)$} \label{line:selectAgent}
        \State $M_s \gets M_s \setminus \{(a, M_s(a))\} \cup \{ (a, p)\}$ \label{line:promote} 
    \EndWhile

    \State \Return $M_s$

\end{algorithmic}
\label{algo:promotion-algorithm}
\end{algorithm}

It is easy to see that no agent who is matched in $M$ gets unmatched in $M_s$. Next, we prove that in each iteration of the algorithm in line \ref{line:promote}, $M_s$ remains envy-free.

\begin{lemma}
    The matching $M_s$ remains envy-free after every execution of line \ref{line:promote}.
    \label{lem:envy_free_remains}
\end{lemma}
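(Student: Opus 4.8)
The plan is to argue by a single-step invariant. Since $M_s$ is initialized to the envy-free matching $M$, it suffices to show that one execution of line~\ref{line:promote} preserves envy-freeness; the claim for every iteration then follows by induction on the number of promotions. So I would fix an iteration, let $M_s$ denote the (envy-free) matching at its start and $M_s'$ the matching immediately after the promotion, and show $M_s'$ is envy-free. Before that, I would invoke Lemma~\ref{lem:underbp}: the blocking pair $(a',p)$ driving this iteration has $p$ under-subscribed, so adding $(a,p)$ keeps $p$ within its quota and $M_s'$ is a legal matching. I would also record the selection property from line~\ref{line:selectAgent}: the promoted agent $a$ is the agent most preferred by $p$ among all agents $x$ with $p \succ_x M_s(x)$, and that $a$ itself satisfies $p \succ_a M_s(a)$.

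The core of the argument is the observation that $M_s'$ differs from $M_s$ only in that $a$ moves from $M_s(a)$ to the strictly more preferred program $p$; equivalently, the only new matched edge is $(a,p)$, the only program gaining an agent is $p$, and the only program losing one is $M_s(a)$. Since a program losing an agent can never create envy, any envy pair in $M_s'$ must be traceable to the edge $(a,p)$. I would then suppose for contradiction that $(x,y)$ is an envy pair in $M_s'$, so $(y,p')\in M_s'$, $p'\succ_x M_s'(x)$, and $x\succ_{p'}y$, and split into cases. If the envied edge is old, i.e.\ $(y,p')\ne(a,p)$ so that $(y,p')\in M_s$, and moreover $x\ne a$, then $M_s'(x)=M_s(x)$, whence $(x,y)$ is already an envy pair in $M_s$, contradicting its envy-freeness. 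The remaining cases are precisely those in which $x=a$ or in which the envied edge is the new one $(a,p)$.

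The step I expect to be the crux is the case where the envied edge is $(a,p)$ itself, i.e.\ $y=a$ and $p'=p$. Here some $x$ with $x\succ_p a$ prefers $p$ to $M_s'(x)=M_s(x)$, so $x$ is one of the agents that line~\ref{line:selectAgent} ranges over; yet $a$ was chosen as the $p$-most-preferred such agent, forcing $a\succ_p x$ and contradicting $x\succ_p a$. This is exactly where the maximality built into the selection rule is indispensable, and it is the main thing to get right. The companion case $x=a$ is handled by transitivity of $\succ_a$: if $a$ envies some $y$ matched to $p'$ with $p'\succ_a M_s'(a)=p$, then since $p\succ_a M_s(a)$ we obtain $p'\succ_a M_s(a)$, so $(a,y)$ was already an envy pair in $M_s$, again a contradiction. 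Assembling these cases shows $M_s'$ admits no envy pair, which establishes the lemma.
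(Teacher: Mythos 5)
Your proof is correct and follows essentially the same route as the paper's: the two non-trivial cases you isolate (no new agent envies $a$, justified by the maximality of $a$ in $p$'s preference among agents preferring $p$ to their current match; and $a$ envies no new agent, since $a$ is only promoted) are exactly the two observations the paper's proof rests on. Your version merely makes the case analysis and the induction explicit where the paper is terser.
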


\begin{proof}
    Let $a$ be the agent and $p$ be the program selected in line \ref{line:selectAgent}. Agent $a$ is promoted in line~\ref{line:promote}, so $a$ will not envy any new agents after this step. Since $a$ is the highest-preferred agent that forms a blocking pair with respect to $p$, after the promotion in line~\ref{line:promote}, no new agents will envy $a$ either. To see this, note that all agents $a''$ such that $a'' \succ_p a$ are either already matched to $p$ or matched to some $p'$ such that $p' \succ_{a''} p$. Since the initial matching $M$ is envy-free and no new envy pairs are created after each execution of line \ref{line:promote}, $M_s$ remains envy-free.
\end{proof}

\begin{lemma}
Algorithm~\ref{algo:promotion-algorithm} terminates and outputs a stable matching $M_s$.   
\label{lem:envy_free_M_s}
\end{lemma}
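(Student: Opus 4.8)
The plan is to separate the lemma's two assertions. Stability at termination is immediate: the \textbf{while} loop of Algorithm~\ref{algo:promotion-algorithm} exits precisely when there is no blocking pair with respect to $M_s$, which is exactly the definition of a stable matching. By Lemma~\ref{lem:envy_free_remains}, $M_s$ is moreover envy-free after every execution of line~\ref{line:promote}, so the only nontrivial content is \emph{termination}, together with the fact that $M_s$ stays a legal matching throughout.

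First I would argue that $M_s$ remains a valid matching (respecting quotas) after each promotion. Since $M_s$ is envy-free at the start of each iteration, Lemma~\ref{lem:underbp} guarantees that the program $p$ of the chosen blocking pair is under-subscribed, i.e.\ $|M_s(p)| < q(p)$. Hence adding $(a,p)$ in line~\ref{line:promote} keeps $p$ within its quota, and deleting $(a, M_s(a))$ cannot violate any quota. I would also note that line~\ref{line:selectAgent} is well defined: the agent $a'$ of the blocking pair found in line~\ref{line:bp} satisfies $p \succ_{a'} M_s(a')$, so the set of agents whom $p$ prefers to take over their current (strictly worse) assignment is nonempty, and line~\ref{line:selectAgent} simply picks its $p$-best element $a$.

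The heart of the argument is a monotone potential. The key observation is that a promotion never worsens any agent: the promoted agent $a$ moves from $M_s(a)$ to a strictly more preferred $p$ (by the choice in line~\ref{line:selectAgent}, $p \succ_a M_s(a)$), while every other agent keeps its partner, since removing $a$ from its old program only frees a seat and does not dislodge anyone. I would make this quantitative by assigning to each agent $a$ a value $\rho(a)$ equal to the rank of its current partner in $a$'s preference list if $a$ is matched, and the sentinel value $|\mathcal{P}|+1$ (worse than every real rank) if $a$ is unmatched, and defining $\Phi = \sum_{a \in \mathcal{A}} \rho(a)$. Each promotion strictly decreases $\rho(a)$ for the promoted agent (including the case where $a$ was previously unmatched, since the rank of $p$ in $a$'s list is at most $|\mathcal{P}| < |\mathcal{P}|+1$) and leaves $\rho$ unchanged for all other agents; hence $\Phi$ drops by at least one. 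As $\Phi$ is a nonnegative integer, the loop can run only finitely many times, so the algorithm terminates, and by the first paragraph it then returns a stable matching.

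The main obstacle is precisely this termination argument: because a promotion frees a seat at $a$'s old program, brand-new blocking pairs can appear, so one cannot bound the number of iterations by counting blocking pairs or by any purely local measure. The monotone-improvement potential $\Phi$ circumvents this difficulty, since every iteration is a genuine promotion for some agent and harms no one, forcing global progress even as fresh blocking pairs are created.
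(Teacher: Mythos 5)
Your proposal is correct and follows essentially the same route as the paper: both arguments rest on the observation that each iteration strictly promotes one agent (with the program under-subscribed by Lemma~\ref{lem:underbp} and envy-freeness preserved by Lemma~\ref{lem:envy_free_remains}), so finiteness of the preference lists forces termination, after which the loop condition gives stability. You merely make the paper's ``preference lists are finite'' step explicit via the potential $\Phi$ and spell out quota feasibility, which is a harmless elaboration rather than a different approach.
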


\begin{proof}

The {\bf while} loop of the algorithm runs as long as the matching $M_s$ is not stable. At each iteration, since by Lemma~\ref{lem:envy_free_remains} the matching $M_s$ is envy-free and possibly not yet stable, we are guaranteed to find a blocking pair. By Lemma~\ref{lemma:undersubscribed_posts} such a blocking pair is guaranteed to be of an under-subscription type blocking pair. Furthermore, once a blocking pair is found, the agent $a$ gets promoted.  Since agent preference lists are finite, this procedure must terminate in $O(m)$ iterations. Finally, when we exit the {\bf while} loop, the matching does not admit any blocking pair which implies that $M_s$ is stable. 
\end{proof}

This establishes the following lemma. 

\begin{lemma}
    Given an envy-free matching $M$ in an \HR instance, we can obtain a stable matching $M_s$ in the same \HR instance such that the agents matched in $M$ remain matched in $M_s$.
    \label{lem:efm_to_stable}
\end{lemma}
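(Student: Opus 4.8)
The plan is to obtain $M_s$ simply by running Algorithm~\ref{algo:promotion-algorithm} on the given envy-free matching $M$ and then invoking the correctness guarantees already established for this procedure. First I would set $M_s \gets M$ and note that the initialization is envy-free by hypothesis. The work on stability is already done: Lemma~\ref{lem:envy_free_M_s} shows that the algorithm terminates after $O(m)$ iterations and that the returned $M_s$ admits no blocking pair, hence is stable in the same \HR instance with quotas unchanged. So the only thing left to argue is the structural claim that every agent matched in $M$ is still matched in $M_s$.

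For that, I would track what a single execution of the promotion step in line~\ref{line:promote} does to the set of matched agents. The step replaces $M_s$ by $M_s \setminus \{(a, M_s(a))\} \cup \{(a, p)\}$, where $(a', p)$ is a blocking pair and $a$ is the agent chosen in line~\ref{line:selectAgent}. The crucial observation is that this operation does not remove any edge incident to $p$: the agent $a$ is merely moved from its current partner (if any) onto $p$, and no agent currently assigned to $p$ is evicted. This is feasible precisely because, by Lemma~\ref{lem:underbp}, every blocking pair with respect to the envy-free matching $M_s$ involves an under-subscribed program, so $|M_s(p)| < q(p)$ and adding $a$ keeps $p$ within its quota. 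Consequently, after the step agent $a$ is matched (to $p$), and every agent that was matched before the step remains matched; indeed the set of matched agents is non-decreasing across iterations and can only grow when a previously unmatched agent is promoted.

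Combining these, since $M_s$ starts equal to $M$ and the set of matched agents never shrinks, every agent matched in $M$ is matched in the final $M_s$, which by Lemma~\ref{lem:envy_free_M_s} is stable; this is exactly the claim. I do not expect a genuine obstacle here, as the nontrivial content is carried by the earlier lemmas. The one point that requires care is verifying that a promotion never unmatches an already matched agent, and this rests entirely on the under-subscription property of Lemma~\ref{lem:underbp}, which guarantees the target program always has a free seat for the promoted agent.
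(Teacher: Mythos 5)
Your proposal is correct and follows essentially the same route as the paper: the paper also obtains $M_s$ by running Algorithm~\ref{algo:promotion-algorithm}, derives stability from Lemma~\ref{lem:envy_free_M_s}, and observes that promotions never unmatch an already matched agent, so the set of matched agents only grows. Your extra remark that Lemma~\ref{lem:underbp} guarantees the target program is under-subscribed (hence the promotion respects the quota) is a correct and slightly more explicit justification of a step the paper dismisses as ``easy to see.''
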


Lemma~\ref{lem:efm_to_stable} implies that if $M$ is an $\mathcal{A}$-perfect, envy-free matching in an \HR\ instance , then there exists an $\mathcal{A}$-perfect, stable matching $M_s$ in the same instance.
\section{Algorithmic results}\label{sec:algo}

In this section, we present our algorithmic results for the \MINMAXSP problem (Theorem~\ref{thm:minmaxsp_exact_algo}) and our approximation algorithms for  general instances of the
\MINSUMSP problem (Theorem~\ref{thm:|P|-approx-algo}
and Theorem~\ref{thm:l_p-algo-theorem}).

\subsection{Polynomial time algorithm for \MINMAXSP}

\label{subsec:minmaxsp}

In this section, we present a polynomial time algorithm for the \MINMAXSP problem. We begin with some observations. Let $M^*$ be an optimal solution for the \MINMAXSP problem and let $t^* = \max_p \{c(p) \cdot \tilde{q} (p)\}$ be the cost of $M^*$. For an integer $t$, we define $G_t$ to be the \HR instance such that for every program $p \in \mathcal{P}$, its quota is $q(p) + \left \lfloor \frac t{c(p)} \right \rfloor$. 
We observe the following:

\begin{itemize}
    \item For any integer $t \ge t^*$, there exists an $\mathcal{A}$-perfect stable matching in $G_t$. This holds because $M^*$ is an $\mathcal{A}$-perfect stable matching in $G_t$ and for every program $p$, $q(p)$ in $G_t$
    is at least as much as $q(p)$ in $G_{t^*}$. Therefore, $M^*$ is an $\mathcal{A}$-perfect envy-free matching in $G_t$. Combining these facts with Lemma~\ref{lem:efm_to_stable} we get the desired result.    
    \item For any integer $t < t^*$, there does not exist an $\mathcal{A}$-perfect stable matching in $G_t$. 
\end{itemize}

In the following lemma, we show an upper bound on the number of distinct values that $t^*$ can take. Recall that $m = |E|$. 

\begin{lemma}
\label{lem:tbound}
There are at most $m + 1$ many distinct values that $t^*$ possibly takes.
\end{lemma}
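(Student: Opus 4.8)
The plan is to bound the number of distinct values of $t^*$ by counting the number of ``meaningfully distinct'' quota-augmentation thresholds. The key observation is that although $t$ ranges over all integers, the instance $G_t$ only changes when some program's augmented quota $q(p) + \lfloor t / c(p) \rfloor$ changes. Thus two different values of $t$ that induce identical quota vectors across all programs yield the same instance $G_t$, and hence agree on whether an $\mathcal{A}$-perfect stable matching exists. Since $t^*$ is the smallest $t$ for which $G_t$ admits an $\mathcal{A}$-perfect stable matching, the possible values of $t^*$ are confined to those thresholds at which the quota vector strictly increases.

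First I would argue that the relevant quantity at each program is how many extra seats the budget $t$ buys, namely $\lfloor t / c(p) \rfloor$ (taking $c(p) = 0$ to mean the quota is effectively unbounded, so such programs never contribute a threshold). As $t$ increases from $0$, the value $\lfloor t / c(p) \rfloor$ jumps exactly at the multiples of $c(p)$. However, the total number of seats that can ever be usefully augmented at program $p$ is capped: increasing $p$'s quota beyond $|\mathcal{N}(p)|$ (its number of acceptable agents) provides no benefit, since $p$ can be matched to at most $|\mathcal{N}(p)|$ agents. Summing over all programs, the total number of useful seat-increments is at most $\sum_{p} (|\mathcal{N}(p)| - q(p)) \le \sum_p |\mathcal{N}(p)| = |E| = m$. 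Each such increment corresponds to at most one new threshold value of $t$ at which the quota vector changes.

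Next I would conclude the count. The function $t \mapsto G_t$ is a step function of $t$: it is constant between consecutive thresholds and changes only at thresholds where at least one program gains a (useful) seat. There are at most $m$ such thresholds, so the sequence of distinct instances $G_0, G_1, G_2, \dots$ comprises at most $m + 1$ distinct instances (the initial instance plus at most $m$ changes). Because $t^*$ is determined solely by which instance $G_{t^*}$ is — it is the least $t$ whose instance first admits an $\mathcal{A}$-perfect stable matching — $t^*$ must coincide with one of these at most $m + 1$ threshold values. This gives the bound.

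The main obstacle I anticipate is being careful about the correspondence between ``thresholds of $t$'' and ``seat increments,'' and in particular handling programs with cost $c(p) = 0$ cleanly. A program with cost $0$ has its quota raised to $|\mathcal{N}(p)|$ for free (any augmentation is costless), so it contributes no threshold and should be excluded from the count; I would fold this into the convention above. The other subtle point is ensuring that useless increments — seats beyond $|\mathcal{N}(p)|$ — are not counted, which is exactly what caps the per-program contribution at $|\mathcal{N}(p)| - q(p)$ and makes the total telescope to at most $m$. Once these bookkeeping conventions are fixed, the step-function argument yields the $m + 1$ bound directly.
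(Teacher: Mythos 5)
Your proof is correct and follows essentially the same counting as the paper: both identify the candidate set $\{0\} \cup \{c(p)\cdot k : p \in \mathcal{P},\ c(p) > 0,\ 1 \le k \le |\mathcal{N}(p)| - q(p)\}$ and bound its size by $\sum_{p} |\mathcal{N}(p)| + 1 = m+1$. The only cosmetic difference is that you justify membership of $t^*$ in this set via the minimality of $t^*$ and the step-function structure of $t \mapsto G_t$, whereas the paper observes directly that $t^* = c(p)\cdot\tilde{q}(p)$ for the maximizing program $p$ together with the bound $\tilde{q}(p) \le |\mathcal{N}(p)| - q(p)$.
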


\begin{proof}
For any program $p$, $0 \le \tilde{q} (p) \le |\mathcal{N}(p)| - q(p)$. This holds because in any solution ($\mathcal{A}$-perfect stable matching), for any program $p$, the initial quota and the quota augmentation together cannot be more than $|\mathcal{N} (p)|$. 

The cost $t^*$ is either equal to $0$ or equal to $c(p) \cdot k$, for some $k$ and some $p \in \mathcal{P}$ such that  $1 \le k \le |\mathcal{N}(p)| - q(p) \le |\mathcal{N}(p)|$. Thus, for a fixed program $p$, we have at most $|\mathcal{N}(p)|$ many non-zero values, if $c(p) > 0$. Thus, $t^*$ takes at most $\sum_{p \in \mathcal{P}} |\mathcal{N}(p)| = |E| = m$ many distinct non-zero values. Including $0$, the lemma holds.
\end{proof}

We now give a polynomial-time algorithm for the \MINMAXSP problem -- construct a sorted array $\hat c$ such that it contains all the possible values of $t^*$. Then perform a binary search on this array. For each cost $t$ being considered, construct the \HR instance $G_t$ and check whether $G_t$ admits an $\mathcal{A}$-perfect stable matching. If yes, search over values less than or equal to $t$, otherwise search over values strictly greater than $t$.

By Lemma~\ref{lem:tbound}, there are $O(m)$ values in the array $\hat c$ and binary search over these costs takes $O(\log m)$ iterations. In each iteration a stable matching is computed and checked for $\mathcal{A}$-perfectness, thereby spending $O(m)$ time per iteration. This implies that the algorithm runs in $O(m \log{m})$ time.
This proves Theorem~\ref{thm:minmaxsp_exact_algo}.

\subsection{$|P|$-approximation algorithm for \MINSUMSP}

Using the algorithm for the \MINMAXSP problem presented in Section~\ref{subsec:minmaxsp}, we present a $|P|$-approximation algorithm for the \MINSUMSP problem.

\begin{lemma}\label{lem:lemma_cost_bound}
    The optimal solution for the \MINMAXSP problem is a $|\mathcal{P}|$-approximation for the \MINSUMSP problem on the same instance.
\end{lemma}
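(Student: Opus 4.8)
The plan is to route the comparison between the two objectives through the minimax optimum value $t^*$ defined in Section~\ref{subsec:minmaxsp}. Let $M_{\mathrm{sum}}$ be an optimal solution for \MINSUMSP, with augmentation function $\tilde{q}_{\mathrm{sum}}$ and total cost $S^* = \sum_{p \in \mathcal{P}} \tilde{q}_{\mathrm{sum}}(p)\cdot c(p)$, and let $M_{\mathrm{max}}$ be an optimal solution for \MINMAXSP, with augmentation function $\tilde{q}_{\mathrm{max}}$ and value $t^* = \max_{p \in \mathcal{P}} \tilde{q}_{\mathrm{max}}(p)\cdot c(p)$. Both are $\mathcal{A}$-perfect stable matchings in their respective augmented instances built over the \emph{same} underlying instance, so it is legitimate to evaluate either one under either objective. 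The goal is to bound the total (\MINSUMSP) cost of $M_{\mathrm{max}}$, namely $\sum_{p \in \mathcal{P}} \tilde{q}_{\mathrm{max}}(p)\cdot c(p)$, by $|\mathcal{P}| \cdot S^*$.

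The first step is to observe that $t^* \le S^*$. Since $M_{\mathrm{sum}}$ is itself an $\mathcal{A}$-perfect stable matching, it is a feasible solution for the \MINMAXSP problem, and hence $t^*$ — the smallest maximum per-program cost achievable by any feasible solution — is at most the maximum per-program cost of $M_{\mathrm{sum}}$. Because every per-program cost $\tilde{q}_{\mathrm{sum}}(p)\cdot c(p)$ is non-negative, the maximum is dominated by the sum, giving $t^* \le \max_{p \in \mathcal{P}} \tilde{q}_{\mathrm{sum}}(p)\cdot c(p) \le \sum_{p \in \mathcal{P}} \tilde{q}_{\mathrm{sum}}(p)\cdot c(p) = S^*$.

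The second step is to bound the sum cost of $M_{\mathrm{max}}$ by $|\mathcal{P}| \cdot t^*$. By the definition of $t^*$ as the maximum per-program cost in $M_{\mathrm{max}}$, every program $p$ satisfies $\tilde{q}_{\mathrm{max}}(p)\cdot c(p) \le t^*$. Summing this inequality over the at most $|\mathcal{P}|$ programs yields $\sum_{p \in \mathcal{P}} \tilde{q}_{\mathrm{max}}(p)\cdot c(p) \le |\mathcal{P}| \cdot t^*$. Chaining the two steps gives $\sum_{p \in \mathcal{P}} \tilde{q}_{\mathrm{max}}(p)\cdot c(p) \le |\mathcal{P}| \cdot t^* \le |\mathcal{P}| \cdot S^*$, which is exactly the claimed $|\mathcal{P}|$-approximation guarantee.

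There is no substantial obstacle in this argument; the only point requiring care is the feasibility observation in the first step — that an optimum for one objective is a legitimate candidate for the other — which is what makes the inequality $t^* \le S^*$ valid. I would also remark, as a bridge to Theorem~\ref{thm:|P|-approx-algo}, that the algorithm of Theorem~\ref{thm:minmaxsp_exact_algo} constructs such an $M_{\mathrm{max}}$ (together with $\tilde{q}_{\mathrm{max}}$) in $O(m \log m)$ time, so this lemma turns the minimax solver into a constructive $|\mathcal{P}|$-approximation for \MINSUMSP.
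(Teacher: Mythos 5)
Your proof is correct and follows essentially the same route as the paper: establish $t^* \le y^*$ by noting the \MINSUMSP optimum is a feasible \MINMAXSP candidate, then bound the total cost of the minimax-optimal matching by $|\mathcal{P}| \cdot t^*$ and chain the inequalities. You spell out the first inequality in slightly more detail than the paper's one-line justification, but the argument is identical in substance.
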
 

\begin{proof}
Given an instance $G$, let $M^*$ be an optimal solution for \MINMAXSP and let its cost be $t^*$. Let the optimal cost of \MINSUMSP on the same instance be $y^*$. Clearly, $y^* \ge t^*$ (otherwise it contradicts the optimality of $t^*$ for \MINMAXSP). 

Since $t^*$ is the maximum augmentation cost spent at any program in $M^*$, the total cost of $M^*$ is at most $|\mathcal{P}| \cdot t^*$. Hence, the total cost of $M^*$ is at most $|\mathcal{P}| \cdot y^*$, giving us the desired approximation.
\end{proof}

This proves Theorem~\ref{thm:|P|-approx-algo}.

\subsection{$\ell_p$-approximation algorithm for \MINSUMSP}

In this section, we present an $\ell_p$-approximation algorithm for the \MINSUMSP problem. This algorithm is an extension of the algorithm presented in~\cite{iwoca} for a restricted setting wherein all initial quotas are zero. 

Let $G$ be the instance of the \MINSUMSP problem. Let $M_I$ be any stable matching in $G$. We call $M_I$ the \textit{initial} stable matching. If $M_I$ is $\mathcal{A}$-perfect then we return $M_I$. Otherwise, at least one agent is unmatched in $M_I$. Let $\mathcal{A}_u$ denote the set of agents that are unmatched in $M_I$.  It is well known that all stable matchings of an \HR\ instance match the same set of agents~\cite{Roth}. Hence, the set $\mathcal{A}_u$ is invariant of $M_I$.
For every agent $a$, let $p_a^*$ denote the least-cost program occurring in the preference list of $a$. If there are multiple such programs, let $p_a^*$ be the highest-preferred one among them. Using $M_I$ we construct an $\mathcal{A}$-perfect matching by matching every agent $u \in \mathcal{A}_u$ to $p_u^*$. This gives us an intermediate matching, we call it $M_L$. We observe that $M_L$ is $\mathcal{A}$-perfect but not necessarily stable. Next, we promote agents to ensure that $M_L$ is stable, as described below.

We start with the matching $M = M_L$. We consider a program $p$ and consider agents in the reverse order of the preference list of $p$.
If an agent $a$ envies agent $a' \in M(p)$ then we promote $a$ by unmatching it and matching it to $p$ (see Fig.~\ref{fig:promotion_agents_1}). This process repeats for every program.
The pseudo-code is given in Algorithm~\ref{algo:l_p-algorithm}. 

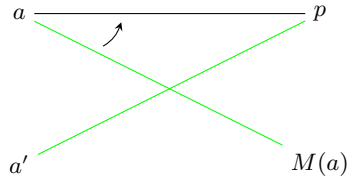
\begin{figure}[!h]
\centering
\begin{tikzpicture}[>=stealth]

\node (a) at (0,0) {$a$};
\node (ap) at (0,-2) {$a'$};
\node (p) at (4,0) {$p$};
\node (pp) at (4,-2) {$M(a)$};
\node (c) at (1, -0.5) {};
\node (d) at (1.4, 0) {};

\draw[-, color=green] (ap) -- (p);
\draw[-, color=green] (a) -- (pp);
\draw[-] (a) -- (p);

\draw[->] (c) to [out = 30, in = -110] (d);
\end{tikzpicture}
\caption{Promotion of agents who envy}
\label{fig:promotion_agents_1}
\end{figure}

\begin{algorithm}

\caption{$\ell_p$-approximation for \MINSUMSP problem}

\begin{algorithmic}[1]

\Statex {\bf Input:} an instance $G$ of the \MINSUMSP problem

\Statex {\bf Output:} an $\mathcal{A}$-perfect stable matching in an augmented instance of $G$

\State let $M_I$ be a stable matching in the $G$

\If {$M_I$ is $\mathcal{A}$-perfect}
    \State return $M_I$
\EndIf

\State let $M_L$ be the matching obtained as follows:

$M_L \gets \begin{cases}
p_a^* & a \in \mathcal{A}_u \\
M_I(a) & a \notin \mathcal{A}_u
\end{cases}$ \label{line:ml}

\State $M \gets M_L$ \label{line:mml}

    \For{every program $p$} \label{line:promoteloop}
        \For{every agent $a \in \mathcal{A}$ in reverse preference list ordering of $p$}\label{line:agent}
            \If{$\exists a' \in M(p)$ such that $a \succ_p a'$ and $p \succ_a M(a)$}
                \State $M = M \setminus \{(a, M(a)\} \cup \{ (a, p)\}$
            \EndIf
        \EndFor
    \EndFor
    \State \Return $M$\label{line:retM}

\end{algorithmic}
\label{algo:l_p-algorithm}
\end{algorithm}

We proceed to prove the correctness. We begin by showing that the matching $M$ computed by Algorithm~\ref{algo:l_p-algorithm} is an $\mathcal{A}$-perfect stable matching.

\begin{lemma}
The matching $M$ computed by Algorithm~\ref{algo:l_p-algorithm} is an $\mathcal{A}$-perfect stable matching.    
\end{lemma}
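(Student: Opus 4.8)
The plan is to prove the two halves---$\mathcal{A}$-perfectness and stability---separately, routing stability through envy-freeness. For $\mathcal{A}$-perfectness, observe that $M_L$ leaves no agent unmatched (every $u\in\mathcal{A}_u$ is assigned $p_u^*$ and every other agent keeps its $M_I$-partner), and each execution of the promotion step inside the loop of Algorithm~\ref{algo:l_p-algorithm} merely re-routes an agent from $M(a)$ to $p$, never unmatching anyone. Hence $M$ is $\mathcal{A}$-perfect, and this part is immediate.

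The substantive work is showing $M$ is envy-free, and I would drive it by two invariants of the promotion loop. \emph{Monotonicity:} every promotion sends an agent to a strictly more preferred program, so from $M_L$ onward each agent's assigned program only moves up its own preference list and no matched agent is ever unmatched; in particular, for any agent $b$ and any earlier moment, the value of $M(b)$ at that moment is $\preceq_b$ the final $M(b)$. \emph{Per-program growth:} while a fixed program $p$ is being processed, $M(p)$ is non-decreasing (the scan only promotes agents \emph{into} $p$), and $p$ can lose occupants only during the iterations of \emph{other} programs, i.e.\ after $p$'s own iteration has ended. Now suppose, for contradiction, that at termination $(b,a')$ is an envy pair, with $a'\in M(p)$, $p=M(a')$, $b\succ_p a'$ and $p\succ_b M(b)$. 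Since $a'$ is in the final $M(p)$, it is already in $M(p)$ at the end of $p$'s iteration; and because $b\succ_p a'$, the agent $a'$ is examined before $b$ in the reverse-order scan of $p$, so by per-program growth $a'\in M(p)$ at the moment $b$ is examined. At that moment $M(b)$ is, by monotonicity, no better than the final $M(b)$, which is worse than $p$; thus $p\succ_b M(b)$ holds then as well, and the promotion test for $b$ at $p$ is satisfied. Hence $b$ would be promoted to $p$, forcing $M(b)\succeq_b p$ at termination and contradicting $p\succ_b M(b)$. Therefore no envy pair survives and $M$ is envy-free. (Note this argument uses only the loop structure, not the stability of $M_I$.)

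Finally I would obtain stability from envy-freeness. Since $M$ is $\mathcal{A}$-perfect and envy-free in the augmented instance, Lemma~\ref{lem:efm_to_stable} produces an $\mathcal{A}$-perfect stable matching in which every matched agent---hence every agent---remains matched. I expect the main obstacle to sit exactly here: the promotion loop by itself clears all envy but may still leave an \emph{under-subscription} blocking pair, namely an agent preferring some program $p$ whose occupancy fell below its quota because an occupant of $p$ was promoted away. By Lemma~\ref{lem:underbp} every residual blocking pair is of this under-subscription type, and the conversion of Lemma~\ref{lem:efm_to_stable} resolves them by filling programs only up to their (free) quotas without unmatching anyone, so the final matching is stable, $\mathcal{A}$-perfect, and no costlier than $M$. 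The two places needing genuine care are the per-program growth invariant and the accompanying bookkeeping that an occupant leaving $p$ during a later iteration removes a potential envied agent but never creates fresh envy toward the remaining occupants of $p$.
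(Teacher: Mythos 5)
Your treatment of $\mathcal{A}$-perfectness and your envy-freeness argument coincide with the paper's proof: the paper likewise rules out exactly those blocking pairs witnessed by some $a' \in M(p)$ with $a \succ_p a'$, by locating the moment $a$ is scanned during $p$'s iteration and using that agents are never demoted (your two invariants, monotonicity and per-program growth, are precisely what that argument needs and are both correct). Where you diverge is at the end, and your instinct there is sound: the paper's proof opens with ``there exists an $a' \in M(p)$ such that $a \succ_p a'$'' and never treats the under-subscription branch of the blocking-pair definition, which is the only other way a pair can block. That branch is not vacuous in the general \MINSUMSP setting. A program $p$ with $q(p) \ge 1$ that is fully subscribed in $M_I$ can be drained below $q(p)$ when its occupants are promoted away during other programs' iterations, leaving an agent $a$ with $p \succ_a M(a)$ and $p$ under-subscribed. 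Concretely: take agents $a_0: p_1$; $a_1: p_1 \succ p_2$; $a_2: p_2 \succ p_4$; $a_3: p_1$, programs $p_1: a_0 \succ a_1 \succ a_3$, $p_2: a_1 \succ a_2$, $p_4: a_2$, each with quota $1$. Then $M_I = \{(a_0,p_1),(a_1,p_2),(a_2,p_4)\}$ and $\mathcal{A}_u = \{a_3\}$ with $p^*_{a_3} = p_1$; the loop promotes $a_1$ to $p_1$, and the returned $M$ leaves $p_2$ empty while $a_2$ is matched to $p_4$ and prefers $p_2$ --- an under-subscription blocking pair. So the matching returned at line~\ref{line:retM} is envy-free but not literally stable, and the lemma as stated really does need the extra step you supply.

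Your repair --- post-compose with Lemma~\ref{lem:efm_to_stable}, observing via Lemma~\ref{lem:underbp} that all residual blocking pairs are of under-subscription type and that Algorithm~\ref{algo:promotion-algorithm} only fills programs up to quotas already accounted for, so the cost does not increase and the bound of Lemma~\ref{lem:l_p-cost-lemma} survives --- is exactly what is needed. The one caveat is that this proves the statement for Algorithm~\ref{algo:l_p-algorithm} augmented with that conversion, not for its output as written; as stated, the lemma holds only with ``stable'' weakened to ``envy-free,'' or with the conversion folded into the algorithm. In short, your route is different from the paper's and, on the stability half, more careful than it.
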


\begin{proof}
If the algorithm returns the matching $M_I$ then it is an $\AAA$-perfect stable matching.
We show that the matching $M$ returned at line~\ref{line:retM} is $\AAA$-perfect and stable.

It is clear that $M_L$ is $\AAA$-perfect, by construction (line~\ref{line:ml}). Thus, we start with matching $M$ that is $\AAA$-perfect.
During the execution of the loop at line~\ref{line:promoteloop}, agents are only promoted and no agent becomes unmatched. Thus, $M$ remains $\AAA$-perfect at the termination.

Next, we show that $M$ is stable. Suppose for contradiction that there exists a blocking pair $(a, p)$ with respect to $M$. Then $p \succ_a M(a)$ and there exists an $a' \in M(p)$ such that $a \succ_p a'$. 
Consider the iteration of the {\bf for} loop (in line~\ref{line:promoteloop}) when $p$ was considered.
Agent $a'$ was either already matched to $p$ (before the iteration began) or is assigned to $p$ in this iteration. Note that $a \succ_p a'$ and agents are considered in the reverse order of preference of $p$. Thus, in either case, when $a$ was considered in line~\ref{line:agent}, $M(a') = p$ holds. 

If $M(a) \succ_a p$ or $M(a) = p$ at this point, since agents never get demoted in the algorithm, $M(a) \succ_a p$ or $M(a) = p$ holds at the end of the algorithm. 
Thus, we must have that $M (a) \prec_a p$ at this point. This implies that the algorithm matched $a$ to $p$ in this iteration. Since $a$ could only get promoted during the subsequent iterations, $M(a) = p$ or $M(a) \succ_a p$ at the end of the algorithm. This contradicts the claimed blocking pair.
This proves the lemma.
\end{proof}

Next, we show that $M$ is an $\ell_p$-approximation of \MINSUMSP.
Let $\Pe$ denote the set of programs $p$ such that no agent is matched to $p$ in the matching $M_I$ and $\Plc$ denote the set of programs $p$ such that for at least one agent $u \in \mathcal{A}_u$, $p^*_u = p$. Let $S^c$ denote the complement of $S$ with respect to $\mathcal{P}$.
We define the following four sets of programs. Clearly, these four sets are pairwise disjoint (see Fig.~\ref{fig:types_of_programs}).

\begin{itemize}
    \item $P_1 = \Pe^c \cap \Plc^c$
    \item $P_2 = \Pe^c \cap \Plc$
    \item $P_3 = \Pe \cap \Plc$
    \item $P_4 = \Pe \cap \Plc^c$
    
\end{itemize}

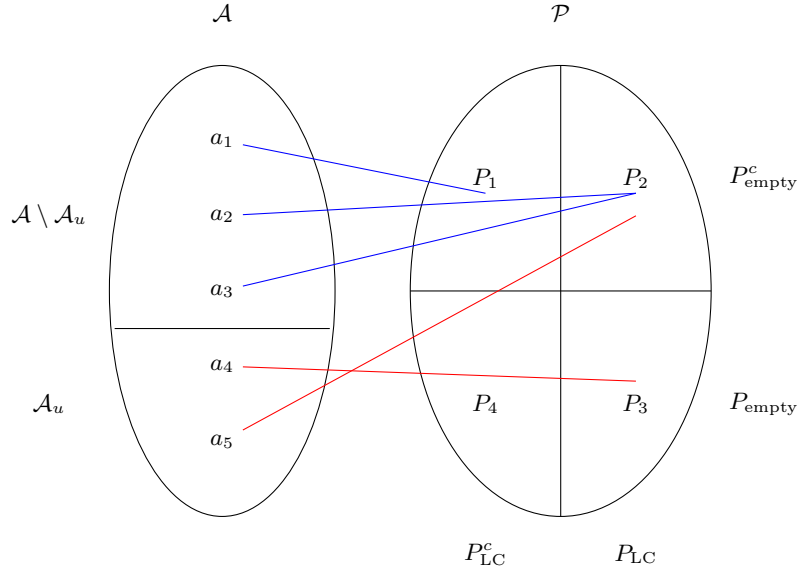
\begin{figure}[!ht]
    \centering

    \begin{tikzpicture}

    \node (f1) at (0, 0) {$a_1$};
    \node (f2) at (0, -1) {$a_2$};
    \node (f3) at (0, -2) {$a_3$};
    \node (f4) at (0, -3) {$a_4$};
    \node (f5) at (0, -4) {$a_5$};

    \node at (-2.3, -1) {$\mathcal{A} \setminus \mathcal{A}_u$};
    \node at (-2.3, -3.5) {$\mathcal{A}_u$};

    \draw (4.5, -2) ellipse (2 and 3);
    \draw (0, -2) ellipse (1.5 and 3);
    \draw (2.5, -2) -- (6.5, -2);
    \draw (4.5, -5) -- (4.5, 1);

    \node at (4.5, 1.7) {$\mathcal{P}$};
    \node at (0, 1.7) {$\mathcal{A}$};

    \node at (5.5, -0.5) {$P_2$};
    \node at (3.5, -0.5) {$P_1$};
    \node at (3.5, -3.5) {$P_4$};
    \node at (5.5, -3.5) {$P_3$};

    \node at (7.2, -0.5) {$\Pe^c$};
    \node at (7.2, -3.5) {$\Pe$};

    \node at (5.5, -5.5) {$\Plc$};
    \node at (3.5, -5.5) {$\Plc^c$};

    \draw[blue] (f1) -- (3.5, -0.7);
    \draw[blue] (f2) -- (5.5, -0.7);
    \draw[blue] (f3) -- (5.5, -0.7);
    \draw[red] (f5) -- (5.5, -1);
    \draw[red] (f4) -- (5.5, -3.2);

    \draw (-1.43, -2.5) -- (1.43, -2.5);

    \end{tikzpicture}

    \caption{A schematic depicting the various types of programs. \textcolor{blue}{Blue} edges denote edges in initial matching $M_I$ while \textcolor{red}{red} edges indicate the edges between an agent and its least-cost program. The graph contains edges from the matching $M_L$.}
    \label{fig:types_of_programs}
\end{figure}

Now we proceed to prove the approximation guarantee by observing important properties of the programs in these sets.

\begin{lemma}\label{lem:p1p4}
No agent is promoted and matched to any program in $P_1 \cup P_4$ during the execution of the {\bf for} loop in line~\ref{line:promoteloop}.
\end{lemma}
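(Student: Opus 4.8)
The plan is to exploit two simple monotonicity facts about Algorithm~\ref{algo:l_p-algorithm} and then split the analysis according to whether a program in $P_1 \cup P_4$ is empty in $M_L$. First I would record the structural invariant that agents are added to a program \emph{only} during that program's own iteration of the loop in line~\ref{line:promoteloop}; before or after that iteration a program can only \emph{lose} agents (when they are promoted elsewhere). Consequently, just before program $p$ is processed we have $M(p) \subseteq M_L(p)$, and this containment persists up to the first addition to $p$ in its own pass. Second, I would note that no agent is ever demoted: every $a \notin \mathcal{A}_u$ keeps or improves on $M_L(a) = M_I(a)$, so $M(a) \succeq_a M_I(a)$ holds throughout the loop; hence whenever $p \succ_a M(a)$ we also have $p \succ_a M_I(a)$.

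For $p \in P_4 = \Pe \cap \Plc^c$, neither a matched agent (since $p \in \Pe$) nor an unmatched agent (since $p \in \Plc^c$, so $p_u^* \neq p$ for all $u \in \mathcal{A}_u$) is placed at $p$ in $M_L$, i.e. $M_L(p) = \emptyset$. By the first invariant, $M(p) = \emptyset$ at the start of $p$'s iteration. A promotion to $p$ requires some $a' \in M(p)$ to be envied, which is impossible when $M(p)$ is empty; since the first agent examined cannot be promoted, $M(p)$ stays empty and no promotion to $p$ ever occurs.

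For $p \in P_1 = \Pe^c \cap \Plc^c$ we have $M_L(p) = M_I(p) \neq \emptyset$, so the empty-set argument fails and the real work is here. I would argue by contradiction: let $a$ be the \emph{first} agent promoted to $p$ during $p$'s iteration. At that instant no agent has yet joined $p$ in this pass, so by the first invariant $M(p) \subseteq M_L(p) = M_I(p)$; thus the envied agent $a'$ satisfies $a' \in M_I(p)$ with $a \succ_p a'$. The promotion condition also gives $p \succ_a M(a)$. If $a \notin \mathcal{A}_u$, the no-demotion invariant yields $p \succ_a M_I(a)$; if $a \in \mathcal{A}_u$, then $a$ is unmatched in $M_I$ and acceptable to $p$ (since $a \succ_p a'$ forces $a \in \mathcal{N}(p)$). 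Either way $(a,p)$ is a blocking pair for $M_I$ -- $a$ prefers $p$ to its $M_I$-assignment and $p$ holds an agent $a'$ it ranks below $a$ -- contradicting the stability of $M_I$.

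The main obstacle is the $P_1$ case, and within it the crux is justifying that $M(p) \subseteq M_I(p)$ \emph{exactly} at the moment of the first attempted promotion: this needs the ``additions happen only during the program's own iteration'' invariant together with the choice of the \emph{first} promoted agent, so that the envied $a'$ is guaranteed to be an original $M_I$-partner of $p$ rather than an agent freshly promoted in the same pass. Once that is pinned down, converting the promotion condition into a blocking pair of $M_I$ via the no-demotion invariant is routine.
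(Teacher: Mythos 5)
Your proposal is correct and follows essentially the same route as the paper: use $p \in \Plc^c$ to conclude $M_L(p) = M_I(p)$, take the \emph{first} agent promoted to $p$ so that the envied agent $a'$ lies in $M_I(p)$, and combine the no-demotion invariant with the stability of $M_I$ to exhibit a contradictory blocking pair. The only difference is that you treat $P_4$ as a separate (trivially empty) case, whereas the paper's single argument covers both $P_1$ and $P_4$ uniformly; this is purely presentational.
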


\begin{proof}

Let $p \in P_1 \cup P_4$. By definition of $P_1$ and $P_4$, $p \in \Plc^c$. This implies that for no agent $a \in A_u$, $p = p_a^*$. Therefore, after the execution of line~\ref{line:ml}, $(a,p) \in M_L$ if and only if $(a,p) \in M_I$. Subsequently, at line~\ref{line:mml}, we have $(a,p) \in M$ if and only if $(a,p) \in M_I$.

For the sake of contradiction, assume that some agent is promoted and matched to $p$. Let $a$ be the first such agent. For this promotion to happen, there must exist agent $a' \in M(p)$ such that $a \succ_p a'$ and $p \succ_a M(a)$ in this iteration. By the choice of $a$, we have $(a',p) \in M_I$. 
Note that $M(a) = M_I(a)$ or $M(a) \succ_a M_I(a)$ since agents can only get promoted during the execution of the loop. 
By stability of $M_I$, we have that either $a \in A_u$ and $a' \succ_p a$ or $M_I(a) \succ_a p$. The former case contradicts that $a \succ_p a'$ whereas the latter case contradicts that $M(a) = M_I(a)$ or $M(a) \succ_a M_I(a)$.
This proves the lemma.
\end{proof}

\begin{lemma}
Matching $M$ computed by Algorithm~\ref{algo:l_p-algorithm} is an $\ell_p$-approximation of the \MINSUMSP problem.
\label{lem:l_p-cost-lemma}
\end{lemma}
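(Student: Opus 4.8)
The plan is to isolate the programs that can possibly incur cost in $M$, bound their contribution crudely by $\ell_p$, and compare this against a lower bound of $\sum_{p \in \Plc} c(p)$ on the optimum.

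First I would argue that only programs in $\Plc$ contribute to the cost of $M$. By Lemma~\ref{lem:p1p4}, no agent is ever promoted to a program in $P_1 \cup P_4 = \Plc^c$. Moreover, for such a program $p$ we have $(a,p) \in M_L$ if and only if $(a,p) \in M_I$ (since $p$ is not the least-cost program of any unmatched agent), so $M_L(p) = M_I(p)$ and thereafter $p$ can only lose agents through promotions away from it. Hence $|M(p)| \le |M_I(p)| \le q(p)$ for every $p \in \Plc^c$, so these programs incur no cost, and $\operatorname{cost}(M) = \sum_{p \in \Plc} \max\{0, |M(p)| - q(p)\}\, c(p)$. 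For every $p \in \Plc$ all matched agents lie in $\mathcal{N}(p)$, so $|M(p)| \le |\mathcal{N}(p)| \le \ell_p$ and the augmentation at $p$ is at most $\ell_p$; this yields the upper bound $\operatorname{cost}(M) \le \ell_p \sum_{p \in \Plc} c(p)$.

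It then remains to show the lower bound $y^* \ge \sum_{p \in \Plc} c(p)$, where $y^*$ is the optimal \MINSUMSP cost, and this is the step I expect to be the main obstacle. The starting observation is that stability of $M_I$ pins down the $\Plc$ programs: if $p \in \Plc$ were under-subscribed in $M_I$, then its witness agent $u$ (an unmatched agent with $p_u^* = p$, which exists by definition of $\Plc$) would form a blocking pair with $p$, a contradiction; hence every $p \in \Plc$ is full at quota in $M_I$ (with $q(p)=0$ when $p$ is empty). Consequently no agent of $\mathcal{A}_u$ can be absorbed for free, and I would make this quantitative by an exchange argument between $M_I$ and an optimal solution $\OPT$: choose one witness $u_p$ per program $p \in \Plc$ (these are distinct, since $p_{u_p}^*=p$ determines $u_p$'s least-cost program uniquely) and trace the alternating structure of $M_I \oplus \OPT$ from each $u_p$. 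Since $u_p$ is unmatched in $M_I$ but matched in $\OPT$, each trace must terminate at an augmented slot of $\OPT$; using $c(\OPT(u_p)) \ge c(p_{u_p}^*) = c(p)$ together with the stability of $M_I$ to rule out cost-decreasing rearrangements, one charges $c(p)$ to a distinct augmented slot, giving $y^* \ge \sum_{p \in \Plc} c(p)$. The delicate point is precisely controlling the cost along these alternating paths so that the slot charged to $p$ is guaranteed to cost at least $c(p)$ and no slot is charged twice.

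Combining the two bounds gives $\operatorname{cost}(M) \le \ell_p \sum_{p \in \Plc} c(p) \le \ell_p\, y^*$, which is the claimed $\ell_p$-approximation.
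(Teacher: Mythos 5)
Your upper bound coincides with the paper's: Lemma~\ref{lem:p1p4} confines all augmentation cost to $\Plc = P_2 \cup P_3$, and each such program can receive at most $|\mathcal{N}(p)| \le \ell_p$ agents, giving $c(M) \le \ell_p \sum_{p \in \Plc} c(p)$. The divergence --- and the gap --- is in the lower bound. The paper disposes of it in one line, asserting that since $\OPT$ must match every $a \in \mathcal{A}_u$ we have $c(\OPT) \ge \sum_{a \in \mathcal{A}_u} c(p_a^*) \ge \sum_{p \in \Plc} c(p)$. You instead propose an exchange argument along the alternating components of $M_I \oplus \OPT$, and you explicitly leave its crux unproven: that each trace terminates at a distinct augmented slot of cost at least $c(p)$. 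A proof whose key inequality is announced as ``the main obstacle'' rather than established is incomplete, so as written the proposal does not prove the lemma.

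Moreover, your instinct that this step is delicate is well founded: with positive initial quotas the charging you envisage cannot be carried out, because an agent of $\mathcal{A}_u$ can be absorbed into the \emph{free} quota of its least-cost program once another agent is rerouted onto a cheaper augmented seat elsewhere. Concretely, take $a_1 : p_2 \mpref p_1$ and $a_2 : p_1$, with $p_1 : a_1 \mpref a_2$, $q(p_1)=1$, $c(p_1)=10$, and $p_2 : a_1$, $q(p_2)=0$, $c(p_2)=1$. Then $M_I=\{(a_1,p_1)\}$, $\mathcal{A}_u=\{a_2\}$, $p_{a_2}^*=p_1$, so $\sum_{p\in\Plc}c(p)=10$; yet augmenting $p_2$ by one seat yields the $\mathcal{A}$-perfect stable matching $\{(a_1,p_2),(a_2,p_1)\}$ at total cost $1$. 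The alternating path from $a_2$ ends at an augmented seat of $p_2$, whose cost is far below $c(p_{a_2}^*)$, so the slot reached by your trace need not cost $c(p)$ and the bound $y^* \ge \sum_{p\in\Plc}c(p)$ fails. (The inequality, and hence the paper's one-line justification, is immediate only in the zero-quota setting of \MINSUMSPC, where every agent matched to $p$ contributes $c(p)$ to the cost.) Closing your gap for general initial quotas would therefore require a genuinely different lower bound on $y^*$, not a refinement of the alternating-path charging.
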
 

\begin{proof}
Let $c(M)$ denote the total augmentation cost of the matching $M$. During the execution of the {\bf for} loop in line~\ref{line:promoteloop}, the cost is spent only when an agent is promoted and matched to some program $p$. By Lemma~\ref{lem:p1p4}, agents are promoted and matched to some program $p$ such that $p \in P_2 \cup P_3$.

Let $len(p)$ denote the length of the preference list of program $p$. During the execution of the {\bf for} loop in line~\ref{line:promoteloop}, at most $len(p)$ many agents can be promoted and matched to the program $p \in P_2 \cup P_3$. Then $c(M) \le \sum_{p \in P_2 \cup P_3} len(p) \cdot c(p)$.

Programs in $P_2 \cup P_3$ are least-cost programs for some agent $a \in A_u$. Let $\OPT$ denote an optimal solution and $c(\OPT)$ denote the optimal cost. Since $\OPT$ must match all agents in $A_u$, $c(\OPT) \geq \sum_{a \in A_u} c(p_a^*) \geq \sum_{p \in P2 \cup P3} c(p)$. 
Moreover, for every $p$, we have $len(p) \leq \ell_p$.
Thus,

\begin{equation*}
c(M) \le \sum_{p \in P_2 \cup P_3} len(p) \cdot c(p) \leq \ell_p \sum_{p \in P_2 \cup P_3} c(p) \le \ell_p \cdot c(\OPT)
\end{equation*}

\end{proof}

Matchings $M_I$ and $M_L$ can be computed in $O(m)$ time using the Gale-Shapley algorithm. The {\bf for} loop in line \ref{line:promoteloop} takes $O(m \ell_p)$ time. Thus, algorithm \ref{algo:l_p-algorithm} runs in $O(m \ell_p)$ time. 
This proves Theorem~\ref{thm:l_p-algo-theorem}.

\section{\MINSUMSPC with two distinct costs}

\label{sec:two_cost}

In this section, we present a linear program (LP) for the $\MINSUMSPC$ problem followed by an approximation algorithm for a restricted hard case. Recall that under the \MINSUMSPC setting, an envy-free matching is itself stable. Therefore, in this section, we compute an envy-free matching.

\subsection{Linear Program and its dual}

Fig.~\ref{fig:lp-dual} shows the LP relaxation for the $\MINSUMSPC$ problem. Let $H = (\AAA \cup \BBB, E)$ be the underlying graph of the $\MINSUMSPC$ instance. Let $x_{a,p}$ be a primal variable for the edge $(a, p) \in E$: $x_{a, p}$ is $1$ if $a$ is matched to $p$, $0$ otherwise. The objective of the primal LP (Eq.~\ref{eq:lp1}) is to minimize the total cost of all matched edges. Eq.~\ref{eq:lp2} encodes the envy-freeness constraint: if agent $a$  is matched to $p$ then every agent $a' \mpref_p a$ must be matched to either $p$ or a higher-preferred program than $p$, otherwise $a'$ envies $a$. In the primal LP, the envy-freeness constraint is present for a triplet $(a', p, a)$ where $a'\mpref_p a $. We call such a triplet a {\em valid} triplet. Eq.~\ref{eq:lp3} encodes $\AAA$-perfectness constraint.

\begin{figure}[!ht]
{\footnotesize
\setlength\columnsep{55pt}
\noindent{\bf Primal: minimize}
\begin{equation}\label{eq:lp1}
	\sum\limits_{\cc\in \BBB}{c(\cc)\cdot\sum\limits_{(\s,\cc)\in E}{x_{\s,\cc}}}
\end{equation}
\noindent{\bf subject to}
\begin{equation}\label{eq:lp2}
		\sum_{\substack{p': \\p' = p\ \text{or} \\ \cc' \mpref_{\s'} \cc}}{x_{\s',\cc'}} \geq x_{\s,\cc}, \  \forall (\s',\cc)\in E, \s \lpref_{\cc} \s'
\end{equation}
\begin{equation}\label{eq:lp3}
	\sum\limits_{(\s,\cc)\in E}{x_{\s,\cc}} = 1,\ \ \ \forall \s \in \AAA
\end{equation}
\begin{equation}\label{eq:lp4}
	x_{\s,\cc} \geq 0,\ \ \ \forall (\s,\cc) \in E
\end{equation}
\\
\noindent{\bf Dual: maximize}
	\begin{equation}\label{eq:dualobj}
	\sum\limits_{a\in \AAA}{y_a}
\end{equation}
\noindent{\bf subject to}
\begin{equation}
\label{eq:lp5}
		y_a + \sum_{\substack{p': \\p' = p\ \text{or} \\ p' \lpref_{a} p}}\ \ {\sum\limits_{\substack{a':\\a' \lpref_{p'} a}}{z_{a,p',a'}}} - \sum\limits_{\substack{a':\\ a \lpref_p a'}}{z_{a',p,a}} \\\leq c(p),\ \  \ \ \forall (a,p)\in E
\end{equation}
	\begin{equation}\label{eq:lp7}
	z_{a',p,a} \geq 0,\ \ \ \forall (a',p) \in E, a \lpref_p a'
\end{equation}
}
\caption{Linear Program and its dual for the $\MINSUMSPC$ problem}
\label{fig:lp-dual}
\end{figure}

\begin{figure}[!ht]
\centering
 \includegraphics[scale=0.9]{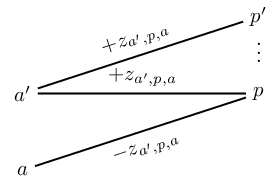}
	\vspace{0.3cm}
	\caption{Let $(a',p,a)$ be a valid triplet and
	$p' \mpref_{a'} p$. The edges shown in the figure are those whose dual constraint contains the variable $z_{a',p,a}$ in either positive or negative form.}
    \label{fig:envy_dual}
\end{figure}

In the dual LP, we have two kinds of variables, the $y$ variables which correspond to every agent and the $z$ variables which correspond to every valid triplet in the primal program. The dual constraint (Eq.~\ref{eq:lp5}) is for every edge $(a, p)$ in $E$. The $y_a$ variable corresponding to an agent $a$ appears in the dual constraint corresponding to every edge incident on $a$. The value $y_a$ can be interpreted as the cost paid by agent $a$ for matching $a$ to one of the programs in $\mathcal{N}(a)$. For an edge $(a, p)$ and an agent $a' \mpref_p a$, the dual variable $z_{a', p, a}$ appears in negative form in exactly one constraint and it is for the edge $(a, p)$. The same dual variable $z_{a',p,a}$ appears in positive form in the constraint for every edge $(a',p')$ such that $p'=p$ or $p' \mpref_{a'} p$ (refer Fig.~\ref{fig:envy_dual}). The value of $z_{a', p, a}$ can be interpreted as the cost paid by agent $a$ in matching $a'$ to a program $p'$ such that $p'=p$ or $p' \mpref_{a'} p$ to resolve potential envy-pair $(a',a)$ if $a$ gets matched to $p$. Following are the useful facts about the linear program and its dual.

\noindent{\bf Fact 1.}
Let $a$ be a fixed agent. If $y_a$ is incremented by a positive value $\Delta$ then it increments the left-hand side (lhs) of the  dual constraint of {\em every} edge $(a,p)$ by $\Delta$ and it does not affect the dual constraint of any edge incident on agent $a' \neq a$.\qed

\noindent{\bf Fact 2.}
Let $(a',p,a)$ be a fixed valid triplet. If $z_{a',p,a}$ is incremented by a positive value $\Delta$ then it increments the lhs of the dual constraint of {\em every} edge $(a',p')$ by $\Delta$ such that $p' = p$ or $p' \mpref_{a'} p$, reduces the lhs of the dual constraint of {\em exactly one} edge $(a,p)$ by $\Delta$ and does not affect the dual constraint of any edge incident on agent $a'' \notin \{a,a'\}$.\qed

For a given dual setting and an edge, if Eq.~\ref{eq:lp5} is satisfied with equality then we call such an edge as a {\em tight edge}, otherwise it is a {\em slack edge}. For an edge $(a,p)$, $slack(a,p)$ denotes its slack. When referring to a $z$ variable, when a specific agent or program occurring in it does not matter, we use $\times$ in its place.

\begin{definition}[Threshold agent] Let $M$ be a matching in the instance. For every program $p$, $thresh(p)$ is the most-preferred agent $a$, if it exists, such that $p \mpref_a M(a)$, otherwise $thresh(p)$ is $\bot$.
\end{definition}

The definition of threshold agent is similar to the threshold resident defined in \cite{MNNR18} and a barrier (vertex) defined in \cite{Vazirani}. We remark that the threshold agent depends on the matching $M$, hence when $M$ gets modified, the threshold agents for programs may change. 

\begin{definition}[Matchable edge]
For an envy-free matching $M$, and an agent $a$ (matched or unmatched), we say that an edge $(a, p) \notin M$ is matchable if the dual constraint on $(a, p)$ is tight and $a = thresh(p)$, otherwise the edge is non-matchable.
\end{definition}

It is straightforward to verify that for an envy-free matching $M$, if we match agent $a$ along a matchable edge then the resultant matching remains envy-free.
		
\subsection{An $\ell_a$-approximation algorithm for $\CCQCC$}\label{sec:two_cost_algo}

\label{sec:lpalgo}

A reader may find the discussion in the Appendix about the challenges involved in designing a primal-dual algorithm for the general case of the \MINSUMSPC\ problem. In this section, we show that when the $\MINSUMSPC$ instance has only two distinct costs $c_1$ and $c_2$ where $c_1 < c_2$, we can circumvent the challenges and obtain an $\ell_a$-approximation algorithm for the $\MINSUMSPC$ problem. We recall from Theorem~\ref{thm:minsumsp_hardness} that even in this restricted setting, the problem remains {\sf NP}-hard.

\noindent {\bf High-level idea of the algorithm. } Our LP based algorithm begins with an initial feasible dual setting and an envy-free matching $M$ which need not be $\AAA$-perfect. As long as $M$ is not $\AAA$-perfect, we pick an unmatched agent $a$ and increase the dual variable $y_a$. We show that for an unmatched agent such an increase is possible and {\em all} edges incident on $a$ become tight due to the update. However, none of the edges incident on $a$ may be matchable (since for every $p \in \mathcal{N}(a)$, $thresh(p) \neq a$). Under the restricted setting of two distinct costs we ensure that after a bounded number of updates to the $z$ variables, at least one edge incident on $a$ is matchable. Throughout we maintain the following invariants with respect to the matching $M$. 

\begin{itemize}
\item $M$ is envy-free, not necessarily $\AAA$-perfect and every matched edge is tight.
\item For an agent $a$ (matched or unmatched), for every $p \mpref_{a} M(a)$, 
	either (i) $(a,p)$
		is tight and $thresh(p) \neq a$ or (ii) $slack(a,p) = c_2-c_1$.
\end{itemize}

Recall that when the matching is modified, thresholds may change,
due to which a tight, non-matchable edge may become matchable.
As long as there exists such an edge, we match it.
This is achieved by the {\em free-promotions} routine. 
The free-promotions routine checks if there exists a matchable edge $(a,p)$. If there is no such edge, the routine terminates.
Otherwise, it matches $(a,p)$, re-computes the threshold agents and repeats the search.
Checking for a matchable edge and computing threshold agents takes
$O(m)$ time where $m = |E|$.
the free-promotions routine runs in $O(m^2)$ time.

\noindent{\bf Description of the algorithm. }
Algorithm~\ref{algo:dualalgo_2} gives the pseudo-code.
In the Appendix, we give an illustrative example 
which depicts the 
key steps of the algorithm on a $\CCQCC$ instance.

We begin with an empty matching $M$ and by setting all $y$ variables to $c_1$ and all $z$ variables to $0$ (line~\ref{line:init1}). Following this, for every
agent $a$ with a cost $c_1$ program in $\mathcal{N}(a)$
we match the agent to its most-preferred program with cost $c_1$ ({\bf for} loop at line~\ref{line:forloop}).
Next, we compute the threshold agent for every program w.r.t. $M$.
As long as $M$ is not $\AAA$-perfect, we pick an arbitrary unmatched agent $a$ and update the dual variables as follows.
	
\begin{algorithm}[!ht]
	\begin{algorithmic}[1]
		\State let $M = \emptyset$, all $y$ variables are set to $c_1$ and all $z$ variables are set to $0$\label{line:init1}
		\For {every agent $a \in \AAA$ s.t. $\exists p \in \mathcal{N}(a)$ such that $c(p) = c_1$}\label{line:forloop}
		    \State let $p$ be the most-preferred program in $\mathcal{N}(a)$ s.t. $c(p) = c_1$ and let
		    $M = M \cup \{(a,p)\}$\label{line:c1p}
		\EndFor
		\State compute $thresh(p)$ for every program $p\in \BBB$\label{line:beforeloop}
		\While {$M$ is not $\AAA$-perfect}\label{line:loop1a2}
			\State let $a$ be an unmatched agent\label{line:picka}
			\While {$a$ is unmatched}\label{line:loop_a}
				\State set $y_a = y_a + c_2 - c_1$ \label{line:yra2}
				\If {there exists a matchable edge incident on $a$}
					\State $M = M \cup \{(a,p) \mid (a,p)$ is the most-preferred matchable edge for  $a\}$\label{line:lineif}
						\State perform free-promotions routine and re-compute thresholds
				\Else
					\State $\BBB(a) = \{ p \in \mathcal{N}(a) \mid p \mpref_a M(a), (a,p)$
						is tight  and $thresh(p) \neq a\}$\label{line:ba}
					\While {$\BBB(a) \neq \emptyset$}\label{line:loop2a2}
						\State let $a'$ be the threshold agent of some program in $\BBB(a)$  \label{line:rprimea2}
						\State let $\BBB(a,a')$ denote the set of programs in $\BBB(a)$ whose threshold agent is $a'$\label{line:paaprime}
						\State let $p$ be the least-preferred program for $a'$ in $\BBB(a, a')$  \label{line:pickp}
						\State set $z_{a',p,a} = c_2-c_1$ \label{line:za2}
						\State let $(a',p')$ be the most-preferred matchable edge incident on $a'$.
						Unmatch $a'$ if matched and
						let $M = M \cup \{(a', p')\}$ \label{line:pro1a2}
	    			    		\State execute free-promotions routine, re-compute thresholds and the set $\BBB(a)$\label{line:bar}
		        		\EndWhile
				\EndIf
			\EndWhile
		\EndWhile
		\State return $M$
	\end{algorithmic}
	\caption{Algorithm to compute an $\ell_a$-approximation of $\MINSUMSPC$ on $\CCQCC$}
	\label{algo:dualalgo_2}
\end{algorithm}

\begin{enumerate}
    \item For the agent $a$, we increase $y_a$ by $c_2-c_1$. We ensure that the dual setting is feasible and all edges incident on $a$ become tight for the dual constraint in Eq.~\ref{eq:lp5}. 
		\label{step:atight}
    Although this step makes all edges incident on $a$ tight, they may not be necessarily matchable. Recall that a tight edge $(a, p)$ is matchable if $thresh(p) = a$. 
    \item If there is a program $p$ such that $(a, p)$ is matchable, then $a$ is immediately matched to the most-preferred such program $p$ (line~\ref{line:lineif})
    and we are done with matching agent $a$. Since the matching is modified, we execute the free-promotions routine.

    \item In case there is no such program for which $a$ is the threshold agent, we update carefully selected $z$ variables in order to either promote the threshold agent (if matched) or match the (unmatched) threshold agent via the following steps.

        \begin{enumerate}
		    \item We compute the set $\BBB(a)$ of programs $p \in \mathcal{N}(a)$
			    such that the dual constraint on edge $(a,p)$ is tight and
			    $thresh(p) \neq a$ and $p \mpref_a M(a)$ (line~\ref{line:ba}).
			    In other words, $\BBB(a)$ is the set of programs in the neighbourhood of $a$
				such that $p$ is higher-preferred over $M(a)$ and edge $(a,p)$ is tight but not matchable.
			    \label{step:ba}
		    \item By the definition of $\BBB(a)$, for every $p_j \in \BBB(a)$,
			    there exists $thresh(p_j) = a' \neq a$.
				We pick an arbitrary agent $a'$ that is a threshold of some program in $\BBB(a)$ (line~\ref{line:rprimea2}).
			    Note that the agent $a'$ can be the threshold agent of more than one program in $\BBB(a)$,
				and we let $\BBB(a,a')$ denote the set of programs in $\BBB(a)$ for whom $a'$ is the threshold.
				Let $p$ be the least-preferred program for $a'$ in $\BBB(a,a')$ (line~\ref{line:pickp}).
				\label{step:pickp}
			\item 	Our goal is to match $a'$ to a program $p'$ such that $p' = p$ or $p' \mpref_{a'} p$.
				By the choice of $a, a'$ and $p$ and from the primal LP, 
				$(a',p, a)$ is a valid triplet and therefore there exists a 
				dual variable $z_{a',p,a}$ (refer Fig.~\ref{fig:envy_dual}).
				We set $z_{a',p,a}$ to $c_2-c_1$ (line~\ref{line:za2}).
				We ensure that this update maintains dual feasibility.
			    \label{step:zupdt}
		    \item 
			    Recall that the variable $z_{a',p,a}$ appears in the positive form in the dual
			    constraint of every edge $(a',p')$ such that $p' = p$ or $p' \mpref_{a'} p$.
		    We ensure that this update results in making all edges $(a',p')$ tight
				and at least one of these 
				becomes matchable.
				We match $a'$ along the most-preferred matchable edge (line~\ref{line:pro1a2}).
				Recall that $z_{a',p, a}$ variable appears in negative form in the dual constraint of edge $(a,p)$,
				hence edge $(a,p)$ becomes slack after this update.
			    \label{step:matchaprime}
		    \item Since $M$ is modified, we execute the free-promotions routine.
			    If a tight edge incident on $a$ becomes matchable, then $a$ is matched inside the free-promotions routine. \label{step:fp}
		    \item We remark that the set $\BBB(a)$ computed in line~\ref{line:ba} is dependent on the matching $M$, specifically $M(a)$ and the threshold agents w.r.t. $M$. In order to maintain a specific slack value on the edges that is useful in maintaining dual feasibility and ensuring progress, we re-compute the set $\BBB(a)$ (line~\ref{line:bar}) and re-enter the loop in line~\ref{line:loop2a2} if $\BBB(a) \neq \emptyset$. \label{step:recompBa}
		\end{enumerate}
\end{enumerate}

\subsection{Proof of correctness}

We start by observing the following properties.

\smallskip
\noindent{\bf (P1)} At line~\ref{line:beforeloop}, no agent is assigned to any program with cost $c_2$ and 
    for every agent $a$ (matched or unmatched), every program $p \mpref_a M(a)$ has cost $c_2$.

\noindent{\bf (P2)}
    A matched agent never gets demoted.

\noindent {\bf (P3)} A tight edge incident on a matched agent 
remains tight.

\noindent {\bf (P4)} All matched edges are tight at the end of the algorithm.

{\bf (P1)} is a simple observation about the matching at line~\ref{line:beforeloop}.
Whenever a matched agent $a$ changes its partner from $M(a)$ to $p'$, 
we have $thresh(p') = a$. By the definition of the threshold agent, $p' \mpref_a M(a)$,
which implies {\bf (P2)}.
Note that the only edge that can become slack during the execution is the edge $(a,p)$ 
which is incident on an unmatched agent $a$ (line~\ref{line:za2}).
This implies {\bf (P3)}.
We observe that when the edge is matched, it is tight.
By {\bf (P3)}, a matched edge (being incident on a matched agent) always remains tight,
implying {\bf (P4)}.

\smallskip

Now, we proceed to prove Theorem~\ref{thm:minsumccq_2_cost_approximation}. In the following lemma, we prove that the matching $M$ computed by the algorithm is envy-free.

\begin{lemma}\label{lem:mefm}
Matching $M$ is envy-free throughout the execution of the algorithm.
\end{lemma}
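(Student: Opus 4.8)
The plan is to prove Lemma~\ref{lem:mefm} by showing that the matching $M$ remains envy-free after every operation that modifies it, since $M$ starts out envy-free (indeed, $M$ is initially empty, and the initial assignment of each agent to its most-preferred cost-$c_1$ program in the \textbf{for} loop at line~\ref{line:forloop} cannot create envy, as we will argue). There are exactly three places in Algorithm~\ref{algo:dualalgo_2} where $M$ is modified: (i) the direct matching of $a$ along a matchable edge at line~\ref{line:lineif}; (ii) the matching (or promotion) of the threshold agent $a'$ along a matchable edge at line~\ref{line:pro1a2}; and (iii) the matchings performed inside the free-promotions routine. I would handle all three uniformly by invoking the observation already recorded in the paper immediately after the definition of a matchable edge: if $M$ is envy-free and we match an agent along a \emph{matchable} edge, the resulting matching remains envy-free. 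Thus the entire proof reduces to verifying that \emph{every} edge along which $M$ is augmented is a matchable edge at the moment it is used.

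First I would establish the base case: the matching produced at line~\ref{line:beforeloop} is envy-free. Here every matched agent is assigned to its most-preferred cost-$c_1$ neighbour, so by property \textbf{(P1)} every program that an agent $a$ prefers to $M(a)$ has cost $c_2$ and hence is matched to no agent at this stage (no agent is assigned to any cost-$c_2$ program). Consequently, if $a$ were to envy some $a'$ matched to $p \mpref_a M(a)$, the program $p$ would have to be a matched cost-$c_2$ program, which does not exist; so there is no envy-pair. For the inductive step, I would argue that each of the three augmentation operations matches an agent along an edge that the algorithm has explicitly certified to be matchable: line~\ref{line:lineif} matches $a$ along ``the most-preferred matchable edge'', line~\ref{line:pro1a2} matches $a'$ along ``the most-preferred matchable edge incident on $a'$'', and the free-promotions routine matches an edge only when it detects a matchable edge (by its description, it searches for a matchable edge and matches it). In each case, by definition of matchability the edge $(a,p)$ is tight and satisfies $thresh(p)=a$, which is precisely the condition under which the post-matching observation guarantees preservation of envy-freeness.

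The step I expect to require the most care is ensuring that the edge really \emph{is} matchable at the instant of augmentation, because the threshold agents and the matching are mutated between operations; in particular, between setting $z_{a',p,a}$ at line~\ref{line:za2} and matching $a'$ at line~\ref{line:pro1a2}, and again inside free-promotions where thresholds are recomputed after each match. I would lean on the algorithm's structural guarantees---namely that Step~\ref{step:matchaprime} ensures the $z$-update makes at least one edge incident on $a'$ matchable, and that free-promotions recomputes thresholds before each search---so that at every augmentation the ``most-preferred matchable edge'' referenced actually exists and is matchable with respect to the current $M$. Since each individual augmentation is along a matchable edge and therefore preserves envy-freeness, and the matching is modified only through these augmentations, $M$ is envy-free throughout the execution of the algorithm, completing the proof.
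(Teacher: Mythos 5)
Your proposal is correct and follows essentially the same route as the paper's proof: establish envy-freeness of the initial matching after the cost-$c_1$ assignments using \textbf{(P1)}, then induct over the three places where $M$ is modified, observing that each modification matches an agent along a matchable edge and hence (by the observation following the definition of matchable edges, together with \textbf{(P2)}) preserves envy-freeness. The minor difference in how you phrase the base case (arguing no cost-$c_2$ program is occupied, rather than splitting into matched/unmatched agent pairs) is immaterial.
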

\begin{proof}
	Matching $M$ is trivially envy-free after line~\ref{line:init1}.
	Any two agents $a$ and $a'$ that are matched in line~\ref{line:c1p}
	are matched to a program with cost $c_1$ and by the choice made in line~\ref{line:c1p},
	it is clear that they do not form an envy-pair.
	By {\bf (P1)}, 
	every unmatched agent $a$ has only cost $c_2$ programs in $\mathcal{N}(a)$
	thus, no unmatched agent envies an agent matched in line~\ref{line:c1p}.
	Thus, $M$ is envy-free before entering the loop at line~\ref{line:loop1a2}.

	Suppose $M$ is envy-free before a modification in $M$ inside the loop.
	We show that it remains envy-free after the modification.
	Matching $M$ is modified either at line~\ref{line:lineif} or line~\ref{line:pro1a2}
	or inside the free-promotions routine. 
	In all these places, only a matchable edge $(a_i,p_j)$ is matched.
	Therefore no agent $a' \neq a_i$ envies $a_i$ after this modification.
	Before this modification $a_i$ did not envy $a' \neq a_i$
	and by {\bf (P2)} $a_i$ (if matched) is not demoted, therefore
	$a_i$ does not envy $a' \neq a_i$ 
	after the modification. Thus, $M$ remains envy-free.
\end{proof}

Next, we proceed to prove the dual feasibility, termination, and $\AAA$-perfectness.
We make the following observation about the innermost {\bf while} loop (line~\ref{line:loop2a2}).

\begin{lemma}\label{lem:boundp}
    Let $a$ be a fixed unmatched agent selected in line~\ref{line:picka}
	and consider an iteration of the loop at line~\ref{line:loop_a} during which
	the algorithm enters {\bf else} part.
	Suppose during an iteration of the loop at line~\ref{line:loop2a2},
	for some $p_k \in \mathcal{N}(a)$, $p = p_k$ is selected at line~\ref{line:pickp}.
	Then at the end of iteration, $slack(a,p_k) = c_2-c_1$ and 
	$p \neq p_k$ during subsequent iterations of the loop.
	Therefore, at most $\ell_a$ many distinct $z_{\times,p_k,a}$ variables are
	updated during the iteration of the loop at line~\ref{line:loop_a}.
\end{lemma}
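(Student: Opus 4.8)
The goal is to track what happens to a single edge $(a,p_k)$ once the program $p_k$ has been selected as $p$ in line~\ref{line:pickp} of an iteration of the inner loop (line~\ref{line:loop2a2}). The claim has two parts: first, that immediately after the $z$-update the slack on $(a,p_k)$ becomes exactly $c_2 - c_1$; and second, that $p_k$ is never selected again in subsequent iterations of the loop at line~\ref{line:loop2a2} for this fixed $a$, so that at most $\ell_a$ distinct variables $z_{\times,p_k,a}$ are ever touched while processing $a$ in line~\ref{line:loop_a}. I would prove the first part by a direct computation using Fact~2, and the second part by combining the slack statement with the invariant maintained by the algorithm.

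\textbf{First part (slack becomes $c_2-c_1$).} When $p=p_k$ is selected, the edge $(a,p_k)$ is tight (this is exactly the condition for membership in $\BBB(a)$ at line~\ref{line:ba}). We then set $z_{a',p_k,a} = c_2 - c_1$ in line~\ref{line:za2}, where previously this variable was $0$. By Fact~2, the variable $z_{a',p_k,a}$ appears in \emph{negative} form in the dual constraint of exactly one edge, namely $(a,p_k)$. Hence increasing it from $0$ to $c_2-c_1$ decreases the left-hand side of the dual constraint on $(a,p_k)$ by exactly $c_2-c_1$, turning a tight edge into one with $slack(a,p_k) = c_2 - c_1$. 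The key point I must verify is that no \emph{other} update inside this iteration — the matching modification in line~\ref{line:pro1a2}, the free-promotions routine, or threshold recomputation in line~\ref{line:bar} — alters the dual value on the edge $(a,p_k)$. This follows because those steps only change the matching $M$ (not the dual variables $y$ and $z$), and the dual constraint value on $(a,p_k)$ depends only on the dual variables; hence the slack computed right after line~\ref{line:za2} persists through the rest of the iteration.

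\textbf{Second part ($p\neq p_k$ in subsequent iterations).} After the update, $slack(a,p_k) = c_2 - c_1 > 0$, so $(a,p_k)$ is no longer tight. Membership of a program $p$ in the recomputed set $\BBB(a)$ (line~\ref{line:bar}) requires, among other things, that the edge $(a,p)$ be tight. Since $(a,p_k)$ has strictly positive slack, $p_k \notin \BBB(a)$ in any later iteration, so $p_k$ can never again be selected as $p$ in line~\ref{line:pickp}. The only way this reasoning could fail is if some later update re-tightened $(a,p_k)$; I would rule this out by appealing to the invariant that, while $a$ remains unmatched, the only dual change affecting edges incident on $a$ is the increment of $y_a$ in line~\ref{line:yra2} (which occurs in the outer loop at line~\ref{line:loop_a}, not inside line~\ref{line:loop2a2}) and the $z$-updates of line~\ref{line:za2} (each of which \emph{decreases} the lhs of the edge it targets). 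Thus within a single execution of the loop at line~\ref{line:loop2a2}, each program $p_k \in \mathcal{N}(a)$ is selected at most once, and since $|\mathcal{N}(a)| \le \ell_a$, at most $\ell_a$ distinct variables $z_{\times,p_k,a}$ are updated.

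\textbf{Main obstacle.} The delicate point is bookkeeping the effect of the intervening matching changes (line~\ref{line:pro1a2} and the free-promotions routine) and threshold recomputation: I must argue cleanly that these affect only the matching and the threshold labels, never the dual slack on $(a,p_k)$, so that the slack value $c_2-c_1$ is genuinely stable once set. A careful separation of ``which operations touch dual variables'' from ``which operations touch $M$ and thresholds'' is what makes both parts of the lemma go through, and this is where I would spend the bulk of the argument.
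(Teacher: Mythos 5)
Your proposal is correct and follows essentially the same argument as the paper's proof: use Fact~2 to show the $z$-update drops the tight edge $(a,p_k)$ to slack exactly $c_2-c_1$, observe that no other dual update inside the loop at line~\ref{line:loop2a2} touches this edge (so it stays slack and is excluded from the recomputed $\BBB(a)$), and bound the count by $|\mathcal{N}(a)|\le\ell_a$. Your extra care in separating dual-variable updates from matching/threshold updates is exactly the observation the paper makes more tersely.
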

\begin{proof}
By the choice of $p_k$, 
the edge $(a,p_k)$ was tight before this iteration.
	By {\bf Fact 2}, the update on $z_{\times,p_k,a}$ reduces the lhs of the dual constraint of
the edge $(a,p_k)$ by $c_2-c_1$.
Thus, after this update, $slack(a,p_k) = c_2-c_1$.
Therefore, when $\BBB(a)$ is re-computed at line~\ref{line:bar}, $p_k \notin \BBB(a)$.
	Also observe that no other dual update in $z_{\times,p_j,a}$ inside the loop at line~\ref{line:loop2a2}
	for $p_j \neq p_k$
	affects the slack of edge $(a,p_k)$.
Thus, in a subsequent iteration of this loop, $p_k$ is never selected as $p$ again.

	For every $p_k$ selected as $p$ in line~\ref{line:pickp},
	a distinct $z_{\times,p_k,a}$ variable is updated.
	Thus, there are at most $\mid\hspace{-0.1cm}\BBB(a)\hspace{-0.1cm}\mid$ many distinct $z_{\times,p_k,a}$ variables are
	updated inside the loop at line~\ref{line:loop2a2} in an iteration of the loop at line~\ref{line:loop_a}.
	By observing that $\BBB(a) \subseteq \mathcal{N}(a)$, we get $\mid\hspace{-0.1cm}\BBB(a)\hspace{-0.1cm}\mid \leq \ell_a$, hence the claim follows.
\end{proof}

Recall that if edge $(\hat{a},\hat{p})$ is non-matchable then
either $(\hat{a},\hat{p})$ is slack or $thresh(\hat{p}) \neq \hat{a}$.
In our algorithm, we maintain a stronger invariant: 
for every agent $a$ and for every program $p$ higher-preferred 
over $M(a)$, we maintain that either {\em all} non-matchable edges $(a,p)$ are slack or 
	{\em all} non-matchable edges $(a,p)$ are tight and
for {\em every} such edge,
$thresh(p) \neq a$.
Moreover, we also maintain a specific slack value when the edges are slack.
We categorize agents based on these two cases (see Fig.~\ref{fig:type1} and Fig.~\ref{fig:type2}).
\begin{definition}[type-1 and type-2 agents]
An agent $a$ is called a {\em type-1} agent if for every program $p \mpref_{a} M(a)$,
$slack(a,p) = c_2-c_1$. 
An agent $a$ is called a {\em type-2} agent if $a$ is matched and
for every program $p \mpref_{a} M(a)$,
$slack(a,p) = 0$ and $thresh(p) \neq a$.
\end{definition}

\begin{figure}[!ht]
	\centering
	\includegraphics[scale=0.9]{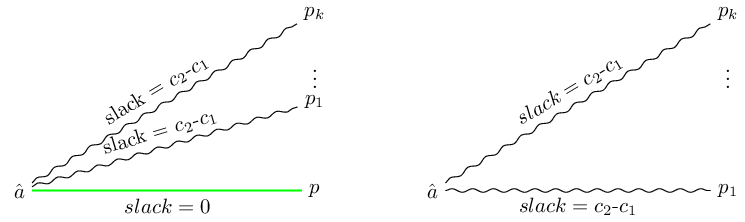}
		\caption{Type-1 agent $\hat{a}$: $slack(\hat{a}, p_j) = c_2 - c_1$, $\forall p_j \mpref_{\hat{a}} p = M(\hat{a})$, if $\hat{a}$ is matched, otherwise, $\forall p_j \in \mathcal{N}(\hat{a})$}
		\label{fig:type1}
	\end{figure}

\begin{figure}[!ht]
	\centering
	\includegraphics[scale=0.9]{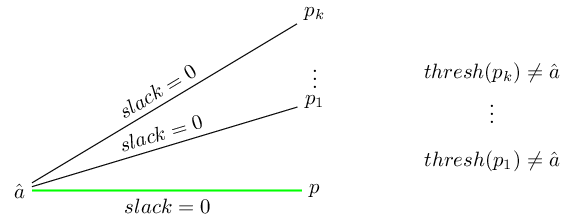}
		\caption{Type-2 agent $\hat{a}$: $slack(\hat{a},p_j) = 0$ $\forall p_j \mpref_{\hat{a}} p = M(\hat{a})$}
        \label{fig:type2}
\end{figure}

We remark that type-1 agents could be either matched or unmatched but type-2 agents are always matched.
Recall that if $a' = a_j$ is unmatched then $M(a_j) = \bot$ and therefore, every program $p_j \in \mathcal{N}(a_j)$
satisfies the condition that $p_j \mpref_{a_j} M(a_j) = \bot$.
We claim that a type-1 agent is selected as $a'$ at most once inside the loop at line~\ref{line:loop2a2}.

\begin{lemma}\label{lem:boundaprime}
	Let $a_j$ be a type-1 agent such that $a' = a_j$ is selected in an arbitrary iteration of the loop
	at line~\ref{line:loop2a2}. Then, at the termination of the loop, $a_j$ is a type-2 agent
	and in subsequent iterations of the loop, $a' \neq a_j$.
\end{lemma}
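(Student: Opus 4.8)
The plan is to analyze what happens to the agent $a_j$ during the single iteration of the loop at line~\ref{line:loop2a2} in which it is selected as $a'$, and show two things: first, that $a_j$ ends the iteration as a type-2 agent, and second, that this status persists so that $a_j$ is never selected as $a'$ again in a later iteration. The starting point is the hypothesis that $a_j$ is type-1 when selected, which by definition means $slack(a_j, p) = c_2 - c_1$ for every program $p \mpref_{a_j} M(a_j)$ (or for every $p \in \mathcal{N}(a_j)$ if $a_j$ is unmatched). The key mechanism is the update $z_{a',p,a} = c_2 - c_1$ at line~\ref{line:za2}, together with \textbf{Fact 2}: since $a_j = a'$ was type-1, before the update each relevant edge $(a_j, p')$ with $p' = p$ or $p' \mpref_{a_j} p$ had slack exactly $c_2 - c_1$, and the positive appearance of $z_{a',p,a}$ reduces the lhs of each such constraint by $c_2 - c_1$, driving every one of these edges to tightness. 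Thus all edges $(a_j, p')$ with $p' = p$ or $p' \mpref_{a_j} p$ become tight.

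Next I would argue that after this update at least one of these newly-tight edges is matchable and $a_j$ is matched (or promoted) to its most-preferred matchable edge $p'$ at line~\ref{line:pro1a2}, with the subsequent free-promotions routine only promoting $a_j$ further. The crucial consequence is that $a_j$'s new partner $M(a_j) = p'$ satisfies $p' = p$ or $p' \mpref_{a_j} p$. I then need to verify the two defining conditions of a type-2 agent for every program $q \mpref_{a_j} M(a_j)$. Every such $q$ is higher-preferred than $p'$, hence higher-preferred than $p$, so $q$ lies among the edges that were just driven tight, giving $slack(a_j, q) = 0$. For the threshold condition, since $a_j$ was matched to $p'$ along the \emph{most-preferred} matchable edge, no edge $(a_j, q)$ with $q \mpref_{a_j} p'$ was matchable at the moment of matching; as a tight edge, non-matchability forces $thresh(q) \neq a_j$. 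This establishes that $a_j$ is type-2 at the termination of the iteration.

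Finally, to show $a' \neq a_j$ in all subsequent iterations, I would use the type-2 status just proved together with the invariant preservation established by the surrounding analysis. The agent $a'$ selected at line~\ref{line:rprimea2} is always the threshold of some program in $\BBB(a)$, and membership in $\BBB(a)$ requires that edge to be tight with $thresh(p) \neq a$; but a type-2 agent $a_j$ has $thresh(q) \neq a_j$ for every $q \mpref_{a_j} M(a_j)$, and by \textbf{(P2)} $a_j$ is never demoted, so $a_j$ cannot become the threshold of any program it would envy through. I would need to confirm that the later $z$-updates (for triplets involving $a \neq a_j$) and the free-promotions do not disturb $a_j$'s slacks or make it a threshold again; this follows because those updates touch only constraints of the specific agents appearing in the triplet (Fact 2) and free-promotions only promote agents along matchable edges, preserving \textbf{(P2)}.

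The main obstacle I anticipate is the threshold/matchability bookkeeping in the second paragraph: carefully justifying that matching $a_j$ along the most-preferred matchable edge leaves \emph{every} strictly-higher-preferred tight edge non-matchable with $thresh \neq a_j$, and that this is stable under the free-promotions routine and the recomputation of thresholds at line~\ref{line:bar}. One must rule out the possibility that a later promotion of some other agent vacates a seat and reinstates $a_j$ as a threshold for a program it prefers, which is precisely where the non-demotion invariant \textbf{(P2)} and the precise slack-value invariant on non-matchable edges must be invoked.
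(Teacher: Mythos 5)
Your proposal is correct and follows essentially the same route as the paper's proof: the type-1 slack of $c_2-c_1$ together with \textbf{Fact 2} makes every edge $(a_j,p')$ with $p'=p$ or $p'\succ_{a_j}p$ tight, the existence of the matchable edge $(a_j,p)$ (since $thresh(p)=a_j$) and the choice of the most-preferred matchable edge at line~\ref{line:pro1a2} yield the two type-2 conditions, and persistence of type-2 status under the free-promotions routine and \textbf{(P2)}/\textbf{(P3)} rules out $a_j$ ever again being a threshold of a program it prefers, hence ever being selected as $a'$. One small wording slip: the positive occurrence of $z_{a',p,a}$ \emph{increases} the left-hand side of the dual constraints on those edges $(a_j,p')$, thereby reducing their slack from $c_2-c_1$ to zero, rather than "reducing the lhs" as you wrote --- your conclusion that these edges become tight is nonetheless the correct one.
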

\begin{proof}
Since $a_j$ is a type-1 agent, for every program $p_j \mpref_{\hat{a}} M(\hat{a})$,
$slack(a_j, p_j) = c_2-c_1$.
Suppose $p = p_k$ is selected in line~\ref{line:pickp}.
Then by {\bf Fact 2}, for every $p_t$ such that $p_t = p_k$ or $p_t \mpref_{a_j} p_k$,
the dual update in line~\ref{line:za2} results in making all $(a_j,p_t)$ edges tight.
Also, since $thresh(p_k) = a_j$, at least one of these newly tight edges (specifically, $(a_j, p_k)$)
becomes matchable.
Therefore, $M(a_j)$ is modified inside the iteration (line~\ref{line:pro1a2}),
implying that $a_j$ is either matched or promoted.
	The choice of $M(a_j)$, that is, $p'$ in line~\ref{line:pro1a2}
	is such that for every $p_j \mpref_{a_j} M(a_j) = p'$, the edge $(a_j,p_j)$ is tight and $thresh(p_j) \neq a_j$.
	Thus, when the iteration ends, $a_j$ is a type-2 agent.

	By {\bf (P3)}, the tight edges incident on $a_j$ remain tight throughout the algorithm.
	In subsequent iterations, agent $a_j$ may further get promoted by the free-promotions routine 
such that for every $p_j \mpref_{a_j} M(a_j)$, $slack(a_j,p_j) = 0$ and $thresh(p_j) \neq a_j$.
Therefore, $a_j$ remains a type-2 agent in all subsequent iterations of the loop.
	This implies that $a_j$ is not the threshold for any program $p_j \mpref_{a_j} M(a_j)$,
	in particular for any program $p_j \in \mathcal{N}(a)$ for the chosen $a$.
Thus, during subsequent iterations of the loop,
$a' \neq a_j$.
\end{proof}

In Lemma~\ref{lem:c1_unm_cond},
we establish that at a specific step during the algorithm, every agent is either type-1 or type-2.
This property is crucial in showing dual feasibility and termination.

\begin{lemma}\label{lem:c1_unm_cond}
	Before every iteration of the loop starting at line~\ref{line:loop_a}, an agent $\hat{a}$ is either
	a type-1 agent or a type-2 agent.
\end{lemma}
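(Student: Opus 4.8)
The plan is to prove Lemma~\ref{lem:c1_unm_cond} by induction on the iterations of the loop at line~\ref{line:loop_a}, maintaining the stated dichotomy as a loop invariant. For the base case, I would examine the matching immediately after line~\ref{line:beforeloop}, before the first entry into any iteration of the inner while loop. By property {\bf (P1)}, at this point no agent is matched to a cost-$c_2$ program, and for every agent $\hat{a}$, every program $p \mpref_{\hat{a}} M(\hat{a})$ has cost $c_2$. Since all $z$ variables are still $0$ and each $y_a = c_1$, the slack of an edge $(\hat{a},p)$ with $c(p)=c_2$ equals $c(p) - y_{\hat{a}} = c_2 - c_1$. Hence every program higher-preferred than $M(\hat{a})$ has $slack(\hat{a},p) = c_2 - c_1$, so every agent is a type-1 agent initially, establishing the base case.

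For the inductive step, I would assume the invariant holds before some iteration of the loop at line~\ref{line:loop_a} and argue it is restored before the next iteration. The work done in a single iteration consists of the increment to $y_a$ at line~\ref{line:yra2}, possibly a match of $a$ along a matchable edge (line~\ref{line:lineif}), or otherwise the sequence of $z$-updates and promotions inside the loop at line~\ref{line:loop2a2}. I would track the three classes of agents that can be affected: the selected unmatched agent $a$, the threshold agents $a'$ chosen in line~\ref{line:rprimea2}, and all other agents. Agents in the last class are untouched by {\bf Fact 1} and {\bf Fact 2}, since the dual updates only move $y_a$ and $z_{\times,\times,a}$ variables, which affect only edges incident on $a$ and on the relevant $a'$; hence their type is preserved directly. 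For each chosen threshold agent $a'$, Lemma~\ref{lem:boundaprime} shows that if $a'$ was type-1 it becomes type-2 by the end of the iteration in which it is selected, and remains type-2 thereafter; I would additionally need to confirm that a threshold agent that is already type-2 is never selected as $a'$, which follows because a type-2 agent is not the threshold of any program higher-preferred than its partner.

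The main obstacle is handling the selected agent $a$ itself. After the $y_a$ increment at line~\ref{line:yra2}, all edges incident on $a$ become tight, so $a$ is momentarily neither type-1 (tight, not slack-$(c_2-c_1)$) nor type-2 (possibly unmatched, and some thresholds may equal $a$). The key point is to show that by the time the iteration of the loop at line~\ref{line:loop_a} completes, $a$ has been restored to one of the two types: either $a$ gets matched along a matchable edge (inside line~\ref{line:lineif} or via free-promotions after a $z$-update), in which case I must verify that for every $p \mpref_a M(a)$ the edge is tight with $thresh(p) \neq a$, making $a$ type-2; or, using Lemma~\ref{lem:boundp}, each $z_{\times,p_k,a}$ update restores $slack(a,p_k) = c_2 - c_1$ on the corresponding edge, and once $\BBB(a)$ becomes empty the loop at line~\ref{line:loop2a2} terminates with every higher-preferred edge at $a$ slack by $c_2-c_1$, making $a$ type-1. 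The care required here is that the invariant is phrased as holding \emph{before every iteration} of the loop at line~\ref{line:loop_a}, so I need only guarantee that $a$ is type-1 or type-2 at the boundary between iterations, not in the transient interior; this is exactly where Lemma~\ref{lem:boundp} (bounding the inner loop and pinning the slack to $c_2-c_1$) and the free-promotions invariant do the heavy lifting.

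Finally, I would observe that the condition ``$thresh(p) \neq a$'' required for the type-2 characterization is consistent with envy-freeness of $M$ (Lemma~\ref{lem:mefm}) and with the fact that matching proceeds only along matchable edges, so no higher-preferred program can retain $a$ as its threshold once $a$ is matched there or promoted past it. Combining the preservation for untouched agents, the type-1-to-type-2 transition for threshold agents via Lemma~\ref{lem:boundaprime}, and the restoration argument for $a$, the invariant holds before the next iteration, completing the induction.
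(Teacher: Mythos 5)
Your overall strategy -- induction on iterations of the loop at line~\ref{line:loop_a}, the base case via {\bf (P1)} and the initial dual setting, and a case split on whether an agent is the selected $a$, a chosen threshold agent $a'$, or neither, leaning on Lemma~\ref{lem:boundaprime} and Lemma~\ref{lem:boundp} -- is the same as the paper's. However, there are two places where your argument as written is incomplete or would fail.

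First, your treatment of the ``all other agents'' class is too quick. You argue their type is ``preserved directly'' because the dual updates do not touch their edges, but an agent's type depends not only on slacks but also on its current partner $M(\hat{a})$ and on the thresholds of the programs it prefers to $M(\hat{a})$ -- and both of these can change during the iteration even when no dual variable incident on $\hat{a}$ moves, because the free-promotions routine may promote $\hat{a}$ or reassign other agents in a way that alters $thresh(p)$. The paper devotes a separate case to exactly this: if such an agent $\hat{a}$ is promoted inside free-promotions, its edges were already tight, so by the induction hypothesis it must have been type-2, and the choice of its new partner inside the routine (most-preferred matchable edge) together with the termination condition of free-promotions guarantees it is still type-2 afterwards. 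You need this argument; without it the invariant is not established for agents that are merely bystanders to the dual updates but are moved by free-promotions.

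Second, your dichotomy for the selected agent $a$ assigns the wrong type to one subcase. You claim that whenever $a$ gets matched during the iteration it ends up type-2 (all higher-preferred edges tight with $thresh(p)\neq a$). That is correct only when $a$ is matched at line~\ref{line:lineif}. If instead $a$ is matched inside the loop at line~\ref{line:loop2a2} by the free-promotions routine, then by that point several $z_{\times,p_j,a}$ variables have been set, and by Lemma~\ref{lem:boundp} each such update makes $slack(a,p_j)=c_2-c_1$; the paper shows that \emph{every} $p_j$ preferred to the final partner is selected as $p$ in some inner iteration, so $a$ emerges as a \emph{matched type-1} agent, not type-2. The verification you propose (``tight with $thresh(p)\neq a$'') would fail there. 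The conclusion of the lemma is unharmed since type-1 also suffices, but the case analysis must be corrected to prove the right property in that subcase.
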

\begin{proof}
We prove this by induction.
Before the first iteration of the loop at line~\ref{line:loop_a},
	suppose agent $\hat{a}$ is matched. Then {\bf (P1)} and the initial dual setting together imply that
	for every program $p_j \mpref_{\hat{a}} M(\hat{a})$, $slack(\hat{a},p_j) = c_2-c_1$.
	Therefore $\hat{a}$ is a matched type-1 agent.
	Suppose $\hat{a}$ is unmatched. Then, by {\bf (P1)}, every program $p_j \in \mathcal{N}(\hat{a})$,
	$c(p_j) = c_2$, therefore the initial dual setting implies that $slack(\hat{a}, p_j) = c_2-c_1$.
	This implies that $\hat{a}$ is an unmatched type-1 agent.

	Consider an arbitrary agent $\hat{a}$.
	Suppose that $\hat{a}$ is either type-1 or type-2
	before $l$-th iteration of the loop.
	It is clear that $a$ selected in line~\ref{line:picka} is different than $a'$ selected at line~\ref{line:rprimea2}.
	During the $l$-th iteration, either $a = \hat{a}$ in line~\ref{line:picka}
	or $a' = \hat{a}$ in line~\ref{line:rprimea2} or $\hat{a}$ is promoted inside the free-promotions routine.
	We show that in each of the cases, $\hat{a}$ is either type-1
	or type-2 before $(l+1)$-th iteration begins.

        \begin{enumerate}[label={(\roman*)}]
	\item {\bf $a = \hat{a}$ in line~\ref{line:picka}: }
		It implies that $\hat{a}$ is unmatched.
		By induction hypothesis, $\hat{a}$ is a type-1 agent, therefore
		for every $p_j \in \mathcal{N}(\hat{a})$,
			$slack(\hat{a}, p_j) = c_2-c_1$.
			Then, the update in line~\ref{line:yra2} results in making all edges incident on $\hat{a}$ tight.
			We consider the following two cases -- $\hat{a}$ remains unmatched during
			the $l$-th iteration or $\hat{a}$ gets matched.
			\begin{itemize}
				\item {\bf $\hat{a}$ remains unmatched during the $l$-th iteration: }
			Then the while loop at line~\ref{line:loop2a2} must have been executed.
			During an iteration of the loop at line~\ref{line:loop2a2},
			if $p = p_j$ then the slack of the edge $(\hat{a}, p_j)$ becomes $c_2-c_1$
					after the dual update in line~\ref{line:za2} (by {\bf Fact 2}).
			We show that for every $p_j \in \mathcal{N}(\hat{a})$, 
			there is some iteration of the loop at line~\ref{line:loop2a2}
			such that $p = p_j$ is selected, thereby implying that when the loop terminates,
			for every edge $(\hat{a}, p_j)$, slack becomes $c_2-c_1$.
					Once this is shown, it is clear that before the $(l+1)$-th iteration,
					$\hat{a}$ is a type-1 agent.

			Suppose for contradiction that for some program $p_j$, $p = p_j$ is never selected.
			Since the edge $(\hat{a}, p_j)$ is tight before the loop execution began, it must be the case
			that either $p_j \lpref_{\hat{a}} M(\hat{a})$ or $thresh(p_j) = \hat{a}$.
			The first case implies that $M(\hat{a}) \neq \bot$, a contradiction that $\hat{a}$ remains unmatched during
			the $l$-th iteration. In the second case, since $thresh(p_j) = \hat{a}$, the edge $(\hat{a},p_j)$
			was matchable inside the free-promotions routine, thus $\hat{a}$ must have been matched inside the free-promotions routine, leading to a contradiction again.
			Thus, for every $p_j \in \mathcal{N}(\hat{a})$, there is some iteration of the loop during which
					$p = p_j$. This implies that when the loop at line~\ref{line:loop2a2}
					terminates, for every $p_j \in \mathcal{N}(\hat{a})$,
					$slack(\hat{a}, p_j) = c_2-c_1$.

		\item {\bf $\hat{a}$ gets matched during the $l$-th iteration: }
			Recall that all edges incident on $\hat{a}$ are tight after the dual update in line~\ref{line:yra2}.
			If $\hat{a}$ is matched at line~\ref{line:lineif} then 
			the $l$-th iteration immediately terminates. Thus, before the $(l+1)$-th iteration, 
					for every $p_j \mpref_{\hat{a}} M(\hat{a})$, $slack(\hat{a}, p_j) = 0$
					and by the choice made in line~\ref{line:lineif}, $thresh(p_j) \neq \hat{a}$,
					implying that $\hat{a}$ is a type-2 agent.

			If $\hat{a}$ is matched inside the loop at line~\ref{line:loop2a2} then the free-promotions routine must have matched it.
					Consider the last iteration of the loop at line~\ref{line:loop2a2}
					during which the free-promotions routine matched or promoted $\hat{a}$ 
					and let $M(\hat{a}) = p_t$.
					We will show that for $p_j \mpref_{\hat{a}} p_t$, $slack(a_j,p_j) = c_2-c_1$,
					thereby implying that $\hat{a}$ is a matched type-1 agent
					before $(l+1)$-th iteration begins.

					By Lemma~\ref{lem:boundp}, it is enough to show that for every $p_j \mpref_{\hat{a}} p_t$,
					$p = p_j$ is chosen is some iteration of the loop at line~\ref{line:loop2a2}.
					Suppose not. Then, there exists some $p_j$ such that $(\hat{a}, p_j)$ is tight
					after the loop at line~\ref{line:loop2a2} terminates.
					By the choice of $p_t$ inside the free-promotions routine, 
					$(\hat{a}, p_j)$ was non-matchable, implying that $thresh(p_j) \neq \hat{a}$.
					Hence during the last iteration of the loop, when $\BBB(\hat{a})$ was re-computed
					in line~\ref{line:bar}, $p_j \in \BBB(\hat{a})$, that is, $\BBB(\hat{a}) \neq \emptyset$.
					This contradicts that the loop terminated after this iteration.
					Therefore, for every $p_j \mpref_{\hat{a}} p_t$, $p_j$ was selected in some iteration of the loop at line~\ref{line:loop2a2}, thereby implying that before the $(l+1)$-th iteration of the loop at line~\ref{line:loop_a},
					$\hat{a}$ is a matched type-1 agent.
			\end{itemize}

		\item {\bf $a' = \hat{a}$ at line~\ref{line:rprimea2}}:
		Consider the first iteration of the loop at line~\ref{line:loop2a2} when this happens.
		Note that the dual update in line~\ref{line:yra2} does not affect the slack on edges incident on $\hat{a}$.
		Since $\hat{a}$ is a threshold for some program $p_j \mpref_{\hat{a}} M(\hat{a})$,
		by the induction hypothesis, $\hat{a}$ is a type-1 agent.
		Therefore, for every $p_j \mpref_{\hat{a}} M(\hat{a})$, $slack(\hat{a},p_j) = c_2-c_1$ 
		before this iteration of the loop at line~\ref{line:loop2a2}.
		By Lemma~\ref{lem:boundaprime}, $\hat{a}$ is a type-2 agent when the loop terminates.
		Therefore when $(l+1)$-th iteration of the loop at line~\ref{line:loop_a} begins,
		$\hat{a}$ is a type-2 agent.

	\item {\bf $a \neq \hat{a}$ and $a' \neq \hat{a}$ but $\hat{a}$ is promoted inside the free-promotions routine}:
		First note that none of the dual updates in the $l$-th iteration affect any edge incident on $\hat{a}$.
		Thus, if $\hat{a}$ is promoted inside the free-promotions routine, 
		then by the induction hypothesis, $\hat{a}$ must be a type-2 agent.
		Thus, for every $p_j \mpref_{\hat{a}} M(\hat{a})$, $slack(\hat{a}, p_j) = 0$ and $thresh(p_j) \neq \hat{a}$
		and some update in the matching must have made one of these edges matchable, that is,
		for some tight edge $(\hat{a},p_j)$, $thresh(p_j) = \hat{a}$.
		Consider the last iteration of the loop at line~\ref{line:loop2a2} when the
		free-promotions routine promoted $\hat{a}$.
		Then, by the choice of $M(\hat{a})$ inside the routine, for every program $p_j \mpref_{\hat{a}} M(\hat{a})$,
		edge $(\hat{a}, p_j)$ is non-matchable. This implies that for every such $p_j$,
		$thresh(p_j) \neq \hat{a}$. Thus, $\hat{a}$ remains a type-2 agent
		when the $(l+1)$-th iteration begins.
\end{enumerate}
This completes the proof of the lemma.
\end{proof}

Next, we show that the dual setting is feasible.

\begin{lemma}\label{lem:dualfeas}
	The dual setting is feasible throughout the algorithm.
\end{lemma}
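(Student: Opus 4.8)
The plan is to argue by induction over the sequence of dual updates performed by Algorithm~\ref{algo:dualalgo_2}, showing that each update preserves the dual constraint Eq.~\ref{eq:lp5} for every edge. For the base case, observe that immediately after line~\ref{line:init1} we have $y_a = c_1$ for every agent and every $z$ variable equal to $0$, so the left-hand side of the constraint for an edge $(a,p)$ is exactly $c_1$; since $c_1 \le c_2$ and $c(p) \in \{c_1,c_2\}$, we get $c_1 \le c(p)$ and the setting is feasible. The matching-only steps (the \textbf{for} loop at line~\ref{line:c1p} and the free-promotions routine) change no dual variable, so it suffices to check that the two dual updates, at line~\ref{line:yra2} and line~\ref{line:za2}, each preserve feasibility.

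First I would handle the update $y_a = y_a + c_2 - c_1$ at line~\ref{line:yra2}. This is applied to the agent $a$ selected at line~\ref{line:picka}, which is unmatched. By Lemma~\ref{lem:c1_unm_cond}, at the start of this iteration $a$ is either type-1 or type-2, and since type-2 agents are always matched, $a$ must be type-1; as $M(a) = \bot$, every program $p \in \mathcal{N}(a)$ satisfies $p \succ_a M(a)$, so $slack(a,p) = c_2 - c_1$ for every edge incident on $a$. By \textbf{Fact 1}, the increment of $y_a$ raises the left-hand side of exactly these edges, each by $c_2 - c_1$, and leaves all other constraints untouched. Hence each affected edge has its slack reduced from $c_2 - c_1$ to $0$, and every constraint still holds.

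Next I would handle the update $z_{a',p,a} = c_2 - c_1$ at line~\ref{line:za2}. Here $a'$ is the threshold agent of the program $p$, so by definition $p \succ_{a'} M(a')$ and $thresh(p) = a'$; the latter rules out $a'$ being type-2, so by Lemma~\ref{lem:c1_unm_cond} the agent $a'$ is type-1 at this point. By Lemma~\ref{lem:boundp}, $z_{a',p,a}$ was still $0$, so this is a genuine increment by $c_2 - c_1$. By \textbf{Fact 2}, the only constraints whose left-hand side increases are those of edges $(a',p')$ with $p' = p$ or $p' \succ_{a'} p$; since $p \succ_{a'} M(a')$, every such $p'$ satisfies $p' \succ_{a'} M(a')$, so the type-1 property gives $slack(a',p') = c_2 - c_1$, and the increment brings each of these to $0$. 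The single edge $(a,p)$ whose left-hand side decreases only gains slack, and no other edge is affected, so feasibility is preserved.

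The crux, and the step I expect to be the main obstacle, is precisely the identification, at the moment of each dual update, of the affected edges as exactly those carrying slack $c_2 - c_1$. This is where the two-cost restriction is essential and where the bookkeeping of Lemma~\ref{lem:c1_unm_cond} does the real work: the $y$-update touches only the edges of an unmatched (hence type-1) agent, and the $z$-update touches only the edges of $a'$ that are higher-preferred than $M(a')$, where $a'$ is a threshold agent and therefore cannot be type-2 and must be type-1. Given these type classifications, each update exactly consumes the available slack $c_2-c_1$, so the induction closes and the dual setting remains feasible throughout.
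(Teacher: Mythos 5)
Your proposal is correct and follows essentially the same route as the paper: both arguments reduce feasibility to the two dual updates, use Lemma~\ref{lem:c1_unm_cond} to classify the affected agent as type-1 (the unmatched $a$ for the $y$-update, and the threshold agent $a'$ for the $z$-update, since being a threshold of a program above $M(a')$ rules out type-2), and then observe via \textbf{Fact 1} and \textbf{Fact 2} that each update exactly consumes the slack $c_2-c_1$ on the affected edges. The only cosmetic difference is that the paper additionally spells out why the type dichotomy still applies at the moment of the $z$-update mid-iteration (the first time $a'$ is selected its edges are untouched, and by Lemma~\ref{lem:boundaprime} it is selected only once), a point you leave implicit.
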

\begin{proof}
It is clear that the dual setting is feasible before entering the loop after line~\ref{line:loop_a}
	for the first time.
	We show that if the dual setting is feasible
	before an arbitrary dual update (either line~\ref{line:yra2} or line~\ref{line:za2})
	then it remains feasible after the update.

\begin{itemize}
	\item {\bf Update at line~\ref{line:yra2}:} 
		Since $a$ is unmatched, by Lemma~\ref{lem:c1_unm_cond},
		$a$ is a type-1 agent and therefore, the slack on every edge 
		incident on $a$ is $c_2-c_1$.
		By {\bf Fact 1}, this update increases the lhs of
		every edge incident on $a$ by $c_2-c_1$ and the iteration
		of the loop at line~\ref{line:loop_a} terminates.
		Therefore the dual setting is feasible.
	\item {\bf Update at line~\ref{line:za2}:}
		We note that the update in line~\ref{line:za2} increases
		the lhs of a subset of edges incident on agent $a'$ (by {\bf Fact 2}).
		Therefore we show that for an arbitrary agent $a_j$ selected
		as $a'$, the dual setting on the affected edges is feasible
		after the update.

		Consider the first iteration of the loop at line~\ref{line:loop2a2}
		wherein an arbitrary $a_j$ is selected as $a'$ in line~\ref{line:rprimea2}.
		Since $a \neq a' = a_j$, the type of $a_j$ before 
		execution of the loop at line~\ref{line:loop2a2} began is same
		as its type before entering the loop at line~\ref{line:loop_a}.
		Suppose $a_j$ is a type-2 agent then the fact that 
		$a_j$ is threshold at some program in $\BBB(a)$ contradicts 
		that for every program $p_j \mpref_{a_j} M(a_j)$, $thresh(p_j) \neq a_j$.
		Therefore, $a_j$ is a type-1 agent.
		This implies that for every $p_j \mpref_{a_j} M(a_j)$, the slack 
		of the edge $(a_j,p_j)$ is $c_2-c_1$, therefore the dual update in
		line~\ref{line:za2} maintains dual feasibility.
		By Lemma~\ref{lem:boundaprime}, this is the only
		iteration of the loop at line~\ref{line:loop2a2}
		when $a' = a_j$.
		Therefore, when the execution of 
		loop at line~\ref{line:loop2a2} terminates (followed by immediate termination of 
		the loop at line~\ref{line:loop_a}), the dual setting remains feasible.
\end{itemize}
This completes the proof of the lemma.
\end{proof}

Now, we show that the algorithm terminates in polynomial time
and computes an $\AAA$-perfect matching $M$.
\begin{lemma}\label{lem:terminates}
	Algorithm~\ref{algo:dualalgo_2} terminates by computing an $\AAA$-perfect matching in polynomial time.
\end{lemma}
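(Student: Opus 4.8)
The plan is to prove two things separately: that the matching $M$ returned is $\AAA$-perfect, and that the algorithm runs in polynomial time. The $\AAA$-perfectness is essentially immediate from the loop structure: the outer \textbf{while} loop at line~\ref{line:loop1a2} only exits when $M$ is $\AAA$-perfect, so it suffices to argue that once this loop is entered with an unmatched agent $a$, the inner \textbf{while} loop at line~\ref{line:loop_a} terminates with $a$ matched. For this I would combine the invariants already established: by Lemma~\ref{lem:c1_unm_cond} the picked agent $a$ is a type-1 agent, so after the update in line~\ref{line:yra2} all edges incident on $a$ become tight (using \textbf{Fact 1} and $slack(a,p)=c_2-c_1$). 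If some edge is matchable, $a$ is matched at line~\ref{line:lineif} and the inner loop ends. Otherwise, I need to show the inner \textbf{while} loop (line~\ref{line:loop_a}) makes progress toward matching $a$, which is where termination and correctness intertwine.

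The key quantitative claim is that the loop at line~\ref{line:loop2a2} runs for only a bounded number of iterations, and that its termination coincides with $a$ being matched. By Lemma~\ref{lem:boundp}, each program $p_k \in \mathcal{N}(a)$ is selected as $p$ at most once (after which $slack(a,p_k)=c_2-c_1$ and $p_k$ leaves $\BBB(a)$), so the loop at line~\ref{line:loop2a2} runs at most $\ell_a \le |\mathcal{N}(a)|$ times. Each such iteration sets a $z$ variable, matches or promotes the corresponding threshold agent $a'$ (line~\ref{line:pro1a2}), and runs free-promotions. I would then argue that when $\BBB(a)$ becomes empty, $a$ must be matched: if $a$ were still unmatched, then by \textbf{(P1)} every $p_j \in \mathcal{N}(a)$ satisfies $p_j \mpref_a M(a)=\bot$ and is tight, so $p_j$ is either in $\BBB(a)$ (contradicting emptiness) or matchable (meaning $thresh(p_j)=a$), but a matchable edge incident on $a$ would have matched $a$ inside free-promotions — a contradiction. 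Hence $a$ is matched when the inner loop at line~\ref{line:loop_a} terminates, and the single iteration of that loop suffices (since $a$ becoming matched ends it). This shows each execution of the outer loop body matches at least the chosen agent $a$ and never unmatches previously matched agents (by \textbf{(P2)}), so the number of outer iterations is at most $|\AAA|$.

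For the polynomial running-time bound I would account for the costs layer by layer. The free-promotions routine runs in $O(m^2)$ time as stated. Each iteration of the loop at line~\ref{line:loop2a2} performs $O(m)$ work for the dual update and threshold/$\BBB(a)$ recomputation plus one free-promotions call, giving $O(m^2)$ per iteration; with at most $\ell_a$ iterations, one execution of the loop at line~\ref{line:loop_a} costs $O(\ell_a \cdot m^2)$. The outer loop at line~\ref{line:loop1a2} executes at most $|\AAA|$ times, yielding a polynomial total, which establishes termination in polynomial time. I would also cite Lemma~\ref{lem:mefm} (envy-freeness) and Lemma~\ref{lem:dualfeas} (dual feasibility) as maintained throughout, so the returned $M$ is the desired $\AAA$-perfect envy-free (hence stable, under \MINSUMSPC) matching.

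The main obstacle I expect is the termination argument for the inner structure — specifically, pinning down precisely \emph{why} emptiness of $\BBB(a)$ forces $a$ to be matched rather than leaving $a$ stuck with all its neighbourhood edges slack. This requires carefully coupling the slack-value bookkeeping from Lemma~\ref{lem:boundp} with the threshold/matchable-edge semantics and the guarantee (from the type-1/type-2 analysis) that no tight edge incident on $a$ can remain non-matchable once its threshold agent $a'$ has been promoted away. The rest — the outer loop bound and the time accounting — is routine once this core progress claim is nailed down.
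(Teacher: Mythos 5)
Your high-level structure is reasonable, but the core progress claim is wrong, and it is wrong in a way that changes the whole accounting. You assert that when $\BBB(a)$ becomes empty the agent $a$ must be matched, arguing that every $p_j\in\mathcal{N}(a)$ is then tight and hence either in $\BBB(a)$ or matchable. But programs do not leave $\BBB(a)$ by becoming matchable: by Lemma~\ref{lem:boundp}, once $p_k$ is selected at line~\ref{line:pickp} the update $z_{\times,p_k,a}=c_2-c_1$ \emph{reduces} the lhs of the dual constraint on $(a,p_k)$, so $(a,p_k)$ becomes slack with $slack(a,p_k)=c_2-c_1$ and drops out of $\BBB(a)$ for that reason. (Also, \textbf{(P1)} is a statement about costs at line~\ref{line:beforeloop}; it gives you no tightness information here.) Consequently the loop at line~\ref{line:loop2a2} can terminate with every edge incident on $a$ slack and $a$ still unmatched; the loop at line~\ref{line:loop_a} then runs \emph{another} iteration, incrementing $y_a$ again. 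This is not a corner case — it is exactly what the paper's cost analysis (Lemma~\ref{lem:costana}) accounts for with $\kappa(a)$ iterations and $y_a=c_1+\kappa(a)(c_2-c_1)$, where the algorithm "must enter the \textbf{else} part" in iterations $1$ through $\kappa(a)-1$. Your claim that a single iteration of the loop at line~\ref{line:loop_a} suffices, and the resulting $O(\ell_a\cdot m^2)$ bound per unmatched agent, are therefore unsupported.

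The paper closes this gap with a global potential argument that your proposal is missing: in every iteration of the loop at line~\ref{line:loop_a}, either $a$ is matched, or the loop at line~\ref{line:loop2a2} executes at least once (since $\BBB(a)=\mathcal{N}(a)\neq\emptyset$ when no edge incident on $a$ is matchable after the $y_a$ update) and the selected threshold agent $a'$ is matched or strictly promoted at line~\ref{line:pro1a2}, never demoted by \textbf{(P2)}. Since the total number of matchings plus promotions over the entire run is $O(m)$, the loop at line~\ref{line:loop_a} runs $O(m)$ times for a fixed $a$, giving $O(m^3)$ per unmatched agent and $O(m^3|\AAA|)$ overall. To repair your proof you need to replace the "emptiness of $\BBB(a)$ forces $a$ matched" step with this promotion-counting argument (or an equivalent monovariant); the rest of your outline — $\AAA$-perfectness from the outer loop condition, the per-iteration work bounds, and the appeal to Lemmas~\ref{lem:mefm} and~\ref{lem:dualfeas} — is fine.
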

\begin{proof}
	We first show that in every iteration of the loop in line~\ref{line:loop_a},
	either an unmatched agent is matched or
	at least one agent is promoted:
	by Lemma~\ref{lem:c1_unm_cond} and {\bf Fact 1},
	after the dual update in line~\ref{line:yra2} all edges incident on $a$ become tight.
	Either $a$ gets matched in line~\ref{line:lineif}
	or the loop in line~\ref{line:loop2a2} executes at least once.
	Since $\BBB(a) \neq \emptyset$ every time the loop at line~\ref{line:loop2a2}
	is entered, an agent $a'$ is selected in line~\ref{line:rprimea2}.
	By the choice of $a'$, Lemma~\ref{lem:c1_unm_cond}, {\bf Fact 2}
	and the choice of $p$ in line~\ref{line:pickp},
	the dual update in line~\ref{line:za2} ensures that at least one edge $(a',p_j)$, for $p_j \mpref_{a'} M(a')$
	becomes matchable and $a'$ gets matched along that edge.
	By {\bf (P2)}, this modification does not demote $a'$ (if $a'$ was already matched).
	Therefore, either an unmatched agent (either $a$ in line~\ref{line:lineif}
	or $a'$ in line~\ref{line:pro1a2}) 
	is matched or at least one agent ($a'$ in line~\ref{line:pro1a2}) is promoted during an iteration.

	Thus after $O(m)$ iterations of the loop in line~\ref{line:loop_a},
	a fixed unmatched agent $a$ gets matched and the loop in line~\ref{line:loop_a} terminates.
	As mentioned earlier, the free-promotions routine takes $O(m^2)$ time.
	Thus, the loop in line~\ref{line:loop_a} terminates in $O(m^3)$ time for a fixed unmatched agent $a$
	and the loop in~\ref{line:loop1a2} terminates in $O(m^3$$\mid\hspace{-0.1cm}\AAA\hspace{-0.1cm}\mid)$ time. 
	By the termination condition of the loop, $M$ is an $\AAA$-perfect matching.
\end{proof}
	
\noindent{\bf Remark on the running time. }
We observe that the initial setting of dual variables takes 
$O(m$$\mid\hspace{-0.1cm}\AAA\hspace{-0.1cm}\mid)$ time because there are $O(m$$\mid\hspace{-0.1cm}\AAA\hspace{-0.1cm}\mid)$ valid triplets.
Since the algorithm guarantees {\bf (P2)}, with careful implementation of the free-promotions routine
and efficiently computing the threshold agents, the running time of the algorithm can be improved.
\qed

Finally, we show that the matching $M$ computed by Algorithm~\ref{algo:dualalgo_2}
is an $\ell_a$-approximation. 
\begin{lemma}\label{lem:costana}
Matching $M$ computed by Algorithm~\ref{algo:dualalgo_2} is an $\ell_a$-approximation of $\MINSUMSPC$.
\end{lemma}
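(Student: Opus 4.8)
The plan is to establish the $\ell_a$-approximation guarantee via LP duality. Since Lemma~\ref{lem:dualfeas} guarantees that the final dual setting $(y,z)$ is feasible, the value of the dual objective $\sum_{a \in \AAA} y_a$ is a lower bound on the optimal value of the primal LP, which in turn lower-bounds the cost $c(\OPT)$ of any integral optimal solution. So it suffices to charge the cost $c(M)$ of the computed matching against $\ell_a \cdot \sum_{a \in \AAA} y_a$. The key structural facts I would exploit are that every matched edge is tight (property \textbf{(P4)}) and that the final $y_a$ values are controlled: each $y_a$ is either $c_1$ (if $a$ was never selected as an unmatched agent in line~\ref{line:picka}) or $c_1 + (c_2 - c_1) = c_2$ (if $a$ was selected and had its $y_a$ incremented once in line~\ref{line:yra2}).

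\smallskip
\noindent First I would write $c(M) = \sum_{(a,p) \in M} c(p)$ and use tightness of each matched edge to rewrite $c(p)$ via the dual constraint in Eq.~\ref{eq:lp5}. For a matched edge $(a,p) \in M$ we have
\[
c(p) = y_a + \sum_{\substack{p':\ p'=p\ \text{or}\ p' \lpref_a p}}\ \sum_{\substack{a':\ a' \lpref_{p'} a}} z_{a,p',a'} - \sum_{\substack{a':\ a \lpref_p a'}} z_{a',p,a}.
\]
Summing over all matched edges, the dual objective $\sum_a y_a$ appears directly (since $M$ is $\AAA$-perfect, every agent contributes its $y_a$ exactly once). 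The remaining task is to bound the net contribution of the $z$ variables. Here I would use Lemma~\ref{lem:boundp}, which shows that for each fixed unmatched agent $a$ selected in line~\ref{line:picka}, the variables $z_{\times,p_k,a}$ updated during the processing of $a$ are indexed by \emph{distinct} programs $p_k \in \mathcal{N}(a)$, so at most $\ell_a$ many $z$ variables (each set to value $c_2 - c_1$) are created per such agent. This caps the total $z$-mass and lets me charge the positive $z$-contributions against the $y$ values, giving a factor of at most $\ell_a$.

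\smallskip
\noindent The cleanest route is to relate $c(M)$ to $\sum_a y_a$ and then to $c(\OPT)$ through the chain of inequalities
\[
c(M) \;\le\; \ell_a \cdot \sum_{a \in \AAA} y_a \;\le\; \ell_a \cdot c(\OPT).
\]
The second inequality is weak LP duality together with feasibility (Lemma~\ref{lem:dualfeas}), and is essentially free. The first inequality is the heart of the argument: I would bound the cost contributed by each matched agent by $\ell_a$ times its dual payment $y_a$, accounting for the $z$ terms. Concretely, because every $z_{a',p,a}$ variable set by the algorithm has value exactly $c_2 - c_1 \le c_2 \le \ell_a \cdot c_1 \le \ell_a \cdot y_{a}$ only when the relevant denominators align, I would track the net positive $z$-contribution per agent and absorb it into the $\ell_a$ factor, using that agents matched in the $c_1$-phase (line~\ref{line:c1p}) have $y_a = c_1 = c(M(a))$ and incur no augmentation-relevant $z$ terms, while agents whose edges were made tight through $y_a$-increments satisfy $c(M(a)) = c_2 = y_a$.

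\smallskip
\noindent \textbf{The main obstacle} I anticipate is bookkeeping the $z$ variables in the summed dual constraints: each $z_{a',p,a}$ appears with a positive sign in the constraints of several edges incident on $a'$ and with a negative sign in the single constraint for $(a,p)$, so when I sum over matched edges only a subset of these appearances is captured, and the surviving terms must be shown to have the right sign and magnitude. The delicate point is that an agent $a'$ may be promoted multiple times, so I must argue that the relevant $z_{a',p,a}$ for the \emph{final} partner $M(a')$ is the one that survives and is bounded, which is exactly what Lemma~\ref{lem:boundp} (distinctness of updated $p_k$) and the type-1/type-2 structure (Lemmas~\ref{lem:boundaprime} and~\ref{lem:c1_unm_cond}) are designed to control. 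Once that accounting is pinned down, the factor of $\ell_a$ falls out of the at-most-$\ell_a$ distinct $z$-updates per agent, completing the proof.
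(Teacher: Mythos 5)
Your overall frame — use dual feasibility and weak duality to get $\sum_{a\in\AAA} y_a \le c(\OPT)$, use \textbf{(P4)} to expand $c(p)$ for each matched edge via Eq.~\ref{eq:lp5}, and then bound the residual $z$-mass — is exactly the paper's frame. But there are two genuine gaps. First, your claim that each $y_a$ is incremented at most once (so $y_a \in \{c_1, c_2\}$) is wrong: line~\ref{line:yra2} executes once per iteration of the \textbf{while} loop at line~\ref{line:loop_a}, which for a fixed agent $a$ picked at line~\ref{line:picka} runs $\kappa(a)$ times before $a$ is matched, so $y_a = c_1 + \kappa(a)(c_2-c_1)$ with $\kappa(a)$ possibly large. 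Relatedly, Lemma~\ref{lem:boundp} bounds the number of distinct $z_{\times,p_k,a}$ updates by $\ell_a$ \emph{per iteration} of that loop, not per agent, so the total positive $z$-mass chargeable to $a$ is $\kappa(a)\,\ell_a\,(c_2-c_1)$, not $\ell_a(c_2-c_1)$. (The side remark $c_2 \le \ell_a\cdot c_1$ has no basis; $c_1$ may be $0$.)

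Second, and more fundamentally, the accounting you describe cannot yield the claimed factor. Writing $c(M) = \sum_{a} y_a + S(Z)$ and bounding $S(Z) \le \ell_a \sum_a y_a$ via the at-most-$\ell_a$-updates-per-iteration bound gives only $c(M) \le (\ell_a+1)\sum_a y_a$ — the paper explicitly presents this as its ``simpler analysis'' yielding $(\ell_a+1)$. To reach $\ell_a$ you must show that per iteration at most $\ell_a - 1$ of the updated $z$ variables actually survive in $S(Z)$. The paper does this by a cancellation argument you flag as an ``obstacle'' but do not resolve: in each of the first $\kappa(a)-1$ iterations, the variable $z_{a_j,p_k,a}$ indexed by the program $p_k = M(a)$ at termination appears positively in the dual constraint of the matched edge $(a_j, M(a_j))$ and negatively in that of the matched edge $(a,p_k)$, so it cancels in $S(Z)$; and in the final iteration no variable $z_{\times,\overline{p},a}$ is ever set for the program $\overline{p}$ along which $a$ gets matched, so again at most $\ell_a-1$ variables are charged. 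Without this step the proof establishes only an $(\ell_a+1)$-approximation, not the stated $\ell_a$.
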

\begin{proof}
	Let $\OPT$ be an optimal matching and $c(M)$ and $c(\OPT)$ denote the cost of $M$ and $\OPT$ respectively.
	By the LP duality, $c(\OPT) \geq \sum\limits_{a\in\AAA}{y_a}$.
	By {\bf (P4)}, $(a,p) \in M$ implies that the edge $(a,p)$ is tight.
	Thus, we have
\begin{align*}
c(M) = \sum\limits_{(a,p) \in M}{c(p)}
	&= \sum\limits_{(a,p) \in M} \Big(y_a + \sum_{\substack{p' = p\ \text{or}\\ p'\lpref_{a} p}}\ \ {\sum\limits_{a' \lpref_{p'} a}{z_{a,p',a'}}} - \sum\limits_{a \lpref_p a'}{z_{a',p,a}}\Big)\\
	&= \sum\limits_{a\in\AAA}{y_a} + \underbrace{\sum\limits_{(a,p) \in M} \Big(\sum_{\substack{p' = p\ \text{or}\\ p' \lpref_{a} p}}\ \ {\sum\limits_{a' \lpref_{p'} a}{z_{a,p',a'}}} - \sum\limits_{a\lpref_p a'}{z_{a',p,a}}\Big)}_{S(Z)}
\end{align*}
	where the first equality is from Eq.~\ref{eq:lp1}, the second equality is from Eq.~\ref{eq:lp5} and the third equality follows because $M$ is $\AAA$-perfect.
	Let $S(Z)$ denote the second summation in the above cost.
	Our goal is to show that $S(Z)$ is upper-bounded by $(\ell_a-1)\sum\limits_{a\in\AAA}{y_a}$ thereby implying that $c(M) \leq \ell_a\cdot\sum\limits_{a\in\AAA}{y_a}$.

We first note that all the $z$ variables are set to $0$ initially
	and they are updated only inside the loop at line~\ref{line:loop2a2}.
	We charge the update in every $z$ variable to a specific unmatched agent $a$ picked at line~\ref{line:picka} and upper-bound the total update in $z$ charged to $a$ in terms of $y_a$.
	Let $A'$ be the set of agents unmatched before the loop at line~\ref{line:loop1a2} is entered.
    During every iteration of the loop in line~\ref{line:loop1a2},
	an unmatched agent $a$ from $A'$ is picked and the loop in line~\ref{line:loop_a} executes until $a$ is matched.
	Suppose that after picking $a$ in line~\ref{line:picka}, the loop in line~\ref{line:loop_a} runs for $\kappa(a)$ iterations.
	Then, $y_a$ is incremented by $c_2-c_1$ for $\kappa(a)$ times and since $a$ is matched,
	it is not picked again at line~\ref{line:picka}.
	Thus, at the end of algorithm, $y_a = c_1 + \kappa(a) (c_2-c_1)$, that is $y_a \geq \kappa(a) (c_2-c_1)$.

	We first present a simpler analysis that proves an $(\ell_a+1)$-approximation.
	Recall that the $z$ variables are non-negative (Eq.~\ref{eq:lp7}).
	Thus, we upper-bound the total value of $z$ variables appearing in positive form in $S(Z)$.
	During the iterations $1$ to $\kappa(a)-1$, the algorithm must enter the {\bf else} part
	and in the $\kappa(a)\text{-}th$ iteration, the loop may or may not enter the {\bf else} part.
	Suppose the algorithm enters the {\bf else} part. Then by Lemma~\ref{lem:boundp},
	for a fixed $a$ when the algorithm enters the {\bf else} part,
	at most $\ell_a$ many $z$ variables are set to $c_2-c_1$.
	Thus, at most $\kappa(a) \ell_a (c_2-c_1)$ total update in $S(Z)$ occurs during execution
	of the loop in line~\ref{line:loop_a} when agent $a$ is picked.
	We charge this cost to agent $a$, thus agent $a \in A'$ is charged at most $\ell_a y_a$.
	Thus,
\begin{align*}
	c(M) = \sum\limits_{a\in\AAA}{y_a} + S(Z)
	&\leq \sum\limits_{a\in\AAA\setminus A'}{y_a} + \sum\limits_{a\in A'}{y_a} + \sum\limits_{a\in A'}{\ell_a y_a}\\
	&\leq (\ell_a+1) \sum\limits_{a\in\AAA}{y_a}
	\leq (\ell_a+1) c(\OPT)
\end{align*}

    Now, we proceed to a better analysis that shows an $\ell_a$-approximation.
	Recall that if $(a',p, a)$ is a valid triplet then the variable $z_{a',p, a}$ appears in the dual constraint
    of possibly multiple edges incident on $a'$ in positive form and in the dual constraint of exactly one edge,
	that is, the edge $(a,p)$ in negative form. We show that there exist certain valid triplets such that
	the corresponding $z$ variable occurring in positive form in the dual constraint of a matched edge 
	also appears in negative form in the dual constraint of another matched edge, thereby canceling out 
	their contribution in $S(Z)$. Thus, it is enough to upper-bound the update in $z$ variables that
	are {\em not} cancelled. We prove that the total update in such
    $z$ variables that is charged to an agent $a \in A'$ can be upper-bounded by $(\ell_a-1) y_a$
    instead of $\ell_a y_a$ as done earlier.

	Let $a \in A'$ be an arbitrary agent.
	Suppose that after $a$ is selected at line~\ref{line:picka},
	$a$ is matched to some program $\overline{p}$ and that $M(a) = p_k$ at the end of the algorithm.
	By {\bf (P2)}, $p_k = \overline{p}$ or $p_k \mpref_{a} \overline{p}$.
	Also, during iterations $1$ to $\kappa(a)-1$,
	$thresh(p_k) \neq a$ and the loop in line~\ref{line:loop2a2} executes.
	It implies that in each of the iterations, there exists an agent $a_j$
	such that $thresh(p_k) = a_j$ and $z_{a_j,p_k,a}$ is updated.
	Also, $a_j$ was matched to $p'$ such that $p' = p_k$ or $p' \mpref_{a_j} p_k$.
	By {\bf (P2)}, at the end of the algorithm, $M(a_j) = p'$ or $M(a_j) \mpref_{a'} p'$.
	Thus, the variable $z_{a_j,p_k,a}$ appears in positive form in the dual constraint of 
	the edge $(a_j,M(a_j))$.
	Since $(a,p_k) \in M$ and the variable $z_{a_j,p_k,a}$ appears in negative form in the 
	dual constraint of edge $(a,p_k)$.
	Therefore, the variable $z_{a_j,p_k,a}$ cancels out in $S(Z)$.
	This implies that for each of the iterations $1$ to $\kappa(a)-1$,
	at most $\ell_a-1$ many $z$ variables are set to $c_2-c_1$ such that they may not cancel out.
	We charge the update in these variables to $a$.

	In the last $\kappa(a)$-th iteration, $a$ gets matched.
	If $a$ is matched at line~\ref{line:lineif} then no $z$ variable is updated during this iteration.
	Otherwise, $a$ is matched in one of the iterations of the loop in line~\ref{line:loop2a2}
	by the free-promotions routine.
	Recall that by our assumption, $a$ is matched to $\overline{p}$ in this step.
	By the choice of $\overline{p}$ in the free-promotions routine,
	the edge $(a,\overline{p})$ must have been matchable, that is, it is tight and $thresh(\overline{p}) = a$.
	The fact that edge $(a,\overline{p})$ was tight implies (by {\bf Fact 2}) that no variable
	of the form $z_{\times,\overline{p},a}$ was updated so far inside the loop at line~\ref{line:loop2a2}
	during the $\kappa(a)$-th iteration.
	When $\BBB(a)$ is re-computed, $\overline{p} \notin \BBB(a)$ because $M(a) = \overline{p}$ at this step.
	Thus, in the subsequent iterations of the loop in line~\ref{line:loop2a2},
	no agent $a'$ could have selected $\overline{p}$ in line~\ref{line:pickp}.
	This implies that no $z$ variable of the form $z_{\times,\overline{p},a}$ is updated
	during the rest of the execution of the loop at line~\ref{line:loop2a2} of the $\kappa(a)$-th iteration.
	This implies that during the $\kappa(a)$-th iteration,
	the $z$ variables that are set to $c_2-c_1$ are of the form
	$z_{\times,\hat{p},a}$ where 
	$\overline{p} \neq \hat{p}$. By the fact that $\overline{p} \in \mathcal{N}(a)$, $\hat{p} \in \mathcal{N}(a)$
	and $\mid\hspace{-0.1cm}\mathcal{N}(a)\hspace{-0.1cm}\mid$ $\leq \ell_a$, the number such $z$ variables is at most $\ell_a-1$.

	Thus, during $\kappa(a)$ many iterations for the agent $a \in A'$ at most $\kappa(a) (\ell_a-1) (c_2-c_1)$ total update in $S(Z)$ is charged to $a$.
	Recall that $y_a \geq \kappa(a) (c_2-c_1)$.
	Thus, agent $a \in A'$ contributes at most $(\ell_a-1) y_a$ in $S(Z)$.
This gives
\begin{align*}
	c(M) = \sum\limits_{a\in\AAA}{y_a} + S(Z)
	&\leq \sum\limits_{a\in\AAA\setminus A'}{y_a} + \sum\limits_{a\in A'}{y_a} + \sum\limits_{a\in A'}{(\ell_a-1)\cdot y_a}\\
	&\leq \ell_a \sum\limits_{a\in\AAA}{y_a} \leq \ell_a \cdot c(\OPT)
\end{align*}
	This completes the proof of the lemma.
\end{proof}

This establishes Theorem~\ref{thm:minsumccq_2_cost_approximation}. 
\section{Hardness and inapproximability}\label{sec:hardness}

In this section, we prove hardness and inapproximability results for \MINSUMSPC. These results are extended versions of the hardness results in~\cite{iwoca}. Recall that under the \MINSUMSPC setting, envy-freeness and stability are equivalent.

\subsection{Constant factor inapproximability of \MINSUMSPC}\label{sec:ConstantFactorConstruction}

In this section, we show that \MINSUMSPC cannot be approximated within any constant factor. This result holds even when there are only two distinct costs in the instance and the preference lists follow master list ordering on both sides. 
We present a reduction from the Set cover problem.

\noindent {\bf The Set cover problem: } Given a universe $\mathcal{U}$ of $n$ elements $\{e_1, e_2, \ldots, e_n\}$ and a collection $\mathcal{C}$ of $m$ sets $\{C_1, C_2, \ldots, C_m\}$
where each set $C_j \subseteq \mathcal{U}$, a \textit{set cover} is a set $T\subseteq\mathcal{C}$ such that $\bigcup_{C\in T}C = \mathcal{U}$. Given an integer $k$, the goal is to decide whether there is a set cover with cardinality at most $k$.

A subset $C_j$ is said to cover an element $e_i$ if $e_i\in C_j$. A set $T\subseteq \mathcal{C}$ is said to cover element $e_i$ if there exists $C_j\in T$ such that $C_j$ covers $e_i$.

\noindent{\bf Reduction to \MINSUMSPC: }
Given a set cover instance $I=(\mathcal{U},\mathcal{C}, k)$, the corresponding \MINSUMSPC instance $I'$ is constructed as follows: for every element $e_i\in \mathcal{U}$, there is an element-agent $a_i$. For every subset $C_j\in\mathcal{C}$, there is a subset-program $c_j$ and $n$ dummy agents $u_j^l$ and $n$ dummy programs $w_j^l$ where $1 \le l \le n$. Therefore, in the reduced instance, there are $n+nm$ agents and $m+mn$ programs.

Next, we define the cost of the programs and the preference lists in the reduced instance.
For all subset-programs, $c(c_j) = 1$ and for all dummy programs, $c(w_j^l) = 0$. Note that in the reduced instance, there are only two distinct costs. For a set $Q$, let $\langle Q \rangle$ denote the elements of $Q$ ordered in a fixed but arbitrary way. Every element-agent $a_i$ ranks all subset-programs $c_j$ such that the corresponding element $e_i$ is in the subset $C_j$. Every dummy agent $u_j^i$ corresponding to the subset $C_j$ prefers the subset-program $c_j$ over the dummy program corresponding to $C_j$. Every subset-program $c_j$ prefers its corresponding dummy agents $u_j^l$ over the element-agents $a_i$ corresponding to elements in the subset $C_j$. Every dummy program ranks only its dummy agents. The preference lists are shown below. Note that $1 \le i \le n$, $1 \le j \le m$. \\
        
\begin{minipage}[c]{0.3\linewidth}
        {\bf Preference lists of agents}
    \begin{align*}
         a_i &: \langle\{c_j|e_i\in C_j\}\rangle \\
          u^l_j &: c_j \succ w^l_j 
    \end{align*}
\end{minipage} 
\begin{minipage}[c]{0.6\linewidth}
        {\bf Preference lists of programs}
    \begin{align*}
        \hspace{0.1cm} c_j (0, 1) &: \langle\{u_j^l| 1 \le l \leq n \}\rangle \succ \langle\{a_i|e_i\in C_j\} \\
        w^l_j (0, 0) &: u^l_j
    \end{align*}

\end{minipage}
        
\vspace{0.3cm}
    
\noindent We define $k'$ as $(k+1)n$. We also assume that $k > 1$, without loss of generality.

\begin{lemma}
\label{lem:setCoverToMatching}
Given a set cover $T$ in $I$ such that $|T| \le k$, we can construct an $\mathcal{A}$-perfect envy-free matching $M$ in $I'$ such that $c(M) \le k'$.
\end{lemma}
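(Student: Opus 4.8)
The plan is to take a set cover $T$ with $|T| \le k$ and explicitly construct a matching $M$, then verify it is $\mathcal{A}$-perfect, envy-free, and has cost at most $k' = (k+1)n$. First I would decide how each type of agent is matched. The dummy agents are the easy case: every dummy agent $u_j^l$ should be matched to its dummy program $w_j^l$. Since each $w_j^l$ has cost $0$ and ranks only $u_j^l$, and each dummy agent is the sole neighbor of its dummy program, these edges are ``free'' (zero cost) and cannot participate in any envy involving the dummy programs. This matches all $nm$ dummy agents at no cost. The remaining agents to match are the $n$ element-agents $a_i$.

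Next I would match every element-agent $a_i$ to a subset-program $c_j$ with $C_j \in T$ such that $e_i \in C_j$; such a $C_j$ exists precisely because $T$ is a set cover. This matches all $n$ element-agents, so together with the dummy assignment $M$ is $\mathcal{A}$-perfect. For the cost: the only programs incurring cost are the subset-programs $c_j$ with $c(c_j)=1$, and only those in $T$ receive any element-agents. Each $c_j \in T$ now holds its dummy agents (no --- wait, the dummy agents went to the dummy programs, so $c_j$ holds only element-agents), and since there are only $n$ element-agents total, the total number of agents matched to subset-programs is at most $n$. Because $q(c_j)=0$, each such matched agent contributes cost $c(c_j)=1$, giving total cost at most $n$, which is well within $k'=(k+1)n$. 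I would need to be slightly careful about the intended cost bound: the $(k+1)n$ slack presumably anticipates the reverse direction of the reduction, so here I only need the loose bound $c(M) \le n \le k'$.

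The main obstacle is verifying \emph{envy-freeness}, and this is where I would spend the most care. There are two potential sources of envy. First, a dummy agent $u_j^l$ prefers $c_j$ over its assigned $w_j^l$; it would envy an element-agent $a_i$ matched to $c_j$ only if $c_j$ ranks $u_j^l$ above $a_i$ --- but by construction $c_j$ ranks \emph{all} its dummy agents above \emph{all} its element-agents, so indeed $u_j^l \succ_{c_j} a_i$, and $u_j^l$ \emph{does} prefer $c_j$ to $w_j^l$. This looks like a genuine envy-pair and is the crux of the argument. To kill it I must ensure that whenever some $a_i$ is matched to $c_j$, the dummy agents of $c_j$ do not in fact have justified envy --- which forces the construction to match the dummy agents $u_j^l$ to $c_j$ itself rather than to $w_j^l$ for the programs $c_j \in T$ being used. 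So I would revise the dummy assignment: for $c_j \in T$, I must think about whether the dummies belong at $c_j$; since $q(c_j)=0$, placing dummies at $c_j$ also costs $1$ each. This interacts directly with the cost bound and is exactly why the slack $(k+1)n$ appears: each used set $C_j \in T$ contributes its $n$ dummy agents plus the element-agents routed to it, and summing over the $\le k$ used sets plus the element-agents yields a bound around $(k+1)n$.

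The second, easier envy source is among element-agents: $a_i$ matched to $c_j$ and $a_{i'}$ matched to $c_{j'}$ with $a_i$ preferring $c_{j'}$. Since all subset-programs prefer dummies over element-agents and the element-agents are ranked by $c_{j'}$ in a master-list order, I would argue envy is avoided because any higher-preferred program $c_{j'}$ that $a_i$ could envy into is already ``full'' of agents that $c_{j'}$ ranks above $a_i$ (its dummies, and element-agents placed earlier in the master list), so $a_i$ has no justified envy. Here the master-list structure on both sides is what I would lean on to get a clean consistency argument. Once both envy sources are ruled out, $M$ is envy-free; combined with $\mathcal{A}$-perfectness and the cost bound $c(M) \le k'$, the lemma follows. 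I expect the careful bookkeeping of \emph{which} dummy agents sit at $c_j$ versus $w_j^l$, and the resulting exact cost accounting, to be the delicate part that determines the precise constant $k'$.
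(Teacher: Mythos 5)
Your construction of the dummy-agent side is, after your mid-proof correction, exactly the paper's: for each used set $C_j\in T$ the $n$ dummy agents $u_j^l$ are matched to $c_j$ itself (killing their justified envy toward element-agents at $c_j$), and the dummies of unused sets go to their zero-cost dummy programs; the resulting cost accounting $c(M)=(|T|+1)n\le(k+1)n=k'$ is also the paper's. However, there is a genuine gap in how you match the element-agents. You match each $a_i$ to \emph{an arbitrary} $c_j$ with $C_j\in T$ and $e_i\in C_j$, and then try to dismiss envy among element-agents by saying any higher-preferred $c_{j'}$ is ``already full of agents that $c_{j'}$ ranks above $a_i$.'' That reasoning does not establish envy-freeness: justified envy only requires that \emph{some} agent matched to $c_{j'}$ be ranked below $a_i$, and the presence of higher-ranked agents at $c_{j'}$ is irrelevant. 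With an arbitrary assignment this failure actually occurs: if $C_1,C_2\in T$ both contain $e_1$ and $e_2$, the master list puts $c_1\succ c_2$ and $a_1\succ a_2$, and you match $a_1\to c_2$ and $a_2\to c_1$, then $a_1$ prefers $c_1$ to $M(a_1)$ and $c_1$ prefers $a_1$ to $a_2\in M(c_1)$, so $(a_1,a_2)$ is an envy pair.

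The fix, which is what the paper does, is to match each element-agent $a_i$ to its \emph{highest-preferred} program in $C_{open}=\{c_j\mid C_j\in T\}$. Then every program $a_i$ prefers to $M(a_i)$ lies outside $C_{open}$, receives no element-agents, and has its dummy agents parked at dummy programs, so it is empty and cannot be the source of envy. With that one change your argument closes; the rest of your proof ($\mathcal{A}$-perfectness, the dummy-agent envy analysis, and the cost bound) is correct and matches the paper.
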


\begin{proof}
Let $C_{open} = \{c_j \mid C_j\in T\}$. Note that $|C_{open}| = |T|$. Construct $M$ by matching every element-agent $a_i$ to its highest-preferred program in $C_{open}$. Dummy agents corresponding to programs in $C_{open}$ are matched to their respective programs, and the remaining dummy agents are matched to their dummy programs. Thus the matching $M$ is $\mathcal{A}$-perfect.

We show that the constructed matching $M$ is envy-free. For element-agents $a_i$, any program $p$ such that $p \succ_{a_i} M(a_i)$ will have no agents matched to it (by construction). The same holds for dummy agents corresponding to programs not in $C_{open}$. The dummy agents corresponding to programs in $C_{open}$ are matched to their highest-preferred programs. Therefore, no agent envies another agent.

Next, we show that $c(M) \le k'$. For every program $c_j$ in $C_{open}$, some number of element-agents and all dummy agents of $c_j$ are matched to it in $M$. There are $n$ such dummy agents per program, while the element-agents together contribute a cost of $n$ over all programs. Therefore, $c(M) = |C_{open}|n+n = (|T|+1)n$. Given that $|T| \le k$, we get $c(M) \le (k+1)n = k'$.
\end{proof}
    
\begin{lemma}
\label{lem:CCQToSetCoverConstantApprox}
Given an $\mathcal{A}$-perfect envy-free matching $M$ in $I'$ such that $c(M)\leq \alpha k'$ for some constant $\alpha > 1$, we can construct a set cover $T$ in $I$ such that $|T|\leq 2\alpha k$.
\end{lemma}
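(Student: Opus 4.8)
The plan is to run the reduction of Lemma~\ref{lem:setCoverToMatching} in reverse: read off the ``opened'' subset-programs from $M$ and argue both that they form a set cover and that there cannot be too many of them.

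First I would record what the cost of an $\mathcal{A}$-perfect envy-free matching looks like in $I'$. Since every program has initial quota $0$, matching $|M(p)|$ agents to $p$ forces $\tilde q(p) = |M(p)|$, so $c(M) = \sum_p |M(p)| \cdot c(p)$. As every dummy program has cost $0$ and every subset-program has cost $1$, this collapses to $c(M) = \sum_j |M(c_j)|$; that is, the cost is exactly the number of agents (element-agents and dummy agents together) assigned to subset-programs. I then define $C_{open}$ to be the set of subset-programs $c_j$ to which at least one element-agent is matched, and set $T = \{C_j \mid c_j \in C_{open}\}$, so that $|T| = |C_{open}|$.

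The key structural step is to show that opening a subset-program is expensive. Suppose $c_j \in C_{open}$, witnessed by an element-agent $a_i \in M(c_j)$. Consider any of the $n$ dummy agents $u_j^l$; by $\mathcal{A}$-perfectness it is matched to $c_j$ or to $w_j^l$. If it were matched to $w_j^l$, then since $c_j \succ_{u_j^l} w_j^l$ and $c_j$ ranks $u_j^l$ above the element-agent $a_i$, the pair $(u_j^l, a_i)$ would be an envy-pair, contradicting envy-freeness. Hence all $n$ dummy agents of $c_j$ are matched to $c_j$, giving $|M(c_j)| \ge n+1 > n$. This is precisely where the dummy gadget does its work: each opened subset is charged a full $n$ in the objective, so the cost upper-bounds the number of opened subsets (up to the scaling by $n$).

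Next I would verify that $T$ is a set cover. By $\mathcal{A}$-perfectness every element-agent $a_i$ is matched, and its preference list contains only subset-programs $c_j$ with $e_i \in C_j$; whichever $c_j$ it is matched to lies in $C_{open}$ by definition and satisfies $e_i \in C_j$, so $T$ covers $e_i$. For the cardinality bound, summing only over opened programs gives $c(M) \ge \sum_{c_j \in C_{open}} |M(c_j)| \ge |C_{open}| \cdot n$, and combining with the hypothesis $c(M) \le \alpha k' = \alpha(k+1)n$ yields $|C_{open}| \le \alpha(k+1)$. Therefore $|T| = |C_{open}| \le \alpha(k+1) \le 2\alpha k$, where the final inequality uses $k+1 \le 2k$, i.e. the standing assumption $k > 1$. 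The main obstacle is isolating the envy-freeness forcing argument correctly: one must check that a misplaced dummy agent genuinely blocks (using that $c_j$ prefers dummies to element-agents, which is exactly how the preference lists were constructed), forcing every opened subset to cost at least $n$. The rest is bookkeeping; the only arithmetic subtlety is carrying the ``$+1$'' through to $2\alpha k$, which is why the reduction fixes $k' = (k+1)n$ and assumes $k > 1$.
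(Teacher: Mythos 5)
Your proposal is correct and follows essentially the same route as the paper: define $C_{open}$ from the element-agents' matches, use envy-freeness to force all $n$ dummy agents of each opened subset-program onto that program (so each opened program costs at least $n$), verify coverage via $\mathcal{A}$-perfectness, and finish with the arithmetic $|T| \le \alpha(k+1) \le 2\alpha k$. The only cosmetic difference is that the paper tallies the element-agents' contribution of $n$ separately to get $c(M) \ge n(1+|T|)$, whereas you fold it into $|M(c_j)| \ge n+1$ and then relax to $n$; both yield the same bound.
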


\begin{proof}

Given $M$, define the set of programs $C_{open} = \{c_j \mid \exists a_i \text{ such that } M(a_i) = c_j\}$. Let $T = \{C_j \mid c_j\in C_{open}\}$. We show that the set $T$ is a set cover. 
Suppose for the contradiction that $T$ is not a set cover. Then there exists an element $e_i$ which is not covered by $T$. None of the potential partners of the corresponding element-agent $a_i$ are in $C_{open}$, so $a_i$ must be unmatched in $M$. This implies that $M$ is not $\mathcal{A}$-perfect, leading to a contradiction. Thus, $T$ is a set cover.

Next, we show that the size of $T$ is at most $2\alpha k$. Note that $|T| = |C_{open}|$. Since $M$ is an $\mathcal{A}$-perfect matching, each $a_i$ is matched in $M$, thereby contributing a cost of one unit each in $c(M)$. Moreover, since $M$ is envy-free, if an agent $a_i$ is matched to program $c_j$ then the $n$ dummy agents $u_j^l$ corresponding to $c_j$ must be matched to $c_j$ in $M$. They together contribute a cost of $n|C_{open}|$ since such $c_j \in C_{open}$. The dummy agents corresponding to each program $c_j \notin C_{open}$ may be matched to either $c_j$ or $w_j^l$. Thus, $c(M) \geq n+n|C_{open}| = n(1+|T|)$. Given that $c(M) \leq \alpha k'$, we get $n(|T|+1) \leq \alpha(k+1)n$.
This implies that $|T| \leq |T| + 1 \leq 2\alpha k$.

\end{proof}
         
Suppose, for the sake of contradiction, that there exists an $\alpha$-approximation algorithm for the \MINSUMSPC problem. For a given set cover instance, we can use the above reduction to construct a \MINSUMSPC instance. Then, using the $\alpha$-approximation algorithm and Lemma~\ref{lem:setCoverToMatching} and Lemma~\ref{lem:CCQToSetCoverConstantApprox}, we can get a $2\alpha$-approximation algorithm for the Set Cover problem. However, the Set Cover problem cannot be approximated within any constant factor unless $\Poly = \NP$~\cite{dinur2014}. This implies that for any constant $\alpha$, \MINSUMSPC does not admit an $\alpha$-approximation algorithm. 

Finally, we note that the following master list ordering over agents and programs holds in the reduced \MINSUMSPC instance.

\begin{align*}
    u_1^1\succ u_1^2\succ\dots\succ u_1^n\succ u_2^1\succ u_2^2\succ \dots \succ u_m^{n-1}\succ u_m^n \succ a_1 \succ a_2 \succ \dots \succ a_n \\
    c_1 \succ c_2 \succ \dots\succ c_m \succ w_1^1 \succ w_1^2 \succ \dots \succ w_1^n \succ w_2^1 \succ w_2^2 \succ \dots \succ w_m^{n-1} \succ w_m^n
\end{align*}

This establishes Theorem~\ref{thm:minsumsp_hardness}.

\noindent \textbf{Remark: }Note that in the \MINSUMSPC problem, stable matchings and envy-free matchings are the same. Thus, if we had an $\alpha$-approximation algorithm for the \MINSUMSP problem, we would have an $\alpha$-approximation algorithm for \MINSUMSPC, which would contradict Theorem \ref{thm:minsumsp_hardness}. Thus, the \MINSUMSP problem is also constant-factor inapproximable.
As mentioned earlier, this also follows from Theorem 2 in~\cite{chen}.

\subsection{$(\ell_a - \epsilon)$-inapproximability of \MINSUMSPC}

In this section, we show that our algorithmic result for the \MINSUMSPC problem  (Theorem~\ref{thm:minsumccq_2_cost_approximation}) is tight modulo Unique Games Conjecture~\cite{khot2008}. Specifically, we show that for any $\epsilon>0$, \MINSUMSPC does not admit an $(\ell_a-\epsilon)$ approximation algorithm.
We present a reduction from the Vertex cover problem, which is a special case of the Set cover problem defined in Section~\ref{sec:ConstantFactorConstruction}. The construction of the \MINSUMSPC instance presented below is similar to the construction presented in Section~\ref{sec:ConstantFactorConstruction}.

We first note the following: if \MINSUMSPC admits an approximation algorithm with guarantee $(\ell_a-\alpha)$ then it admits an approximation algorithm with guarantee $(\ell_a-\beta)$ for any constants $\beta<\alpha$. Therefore, it is enough to show that \MINSUMSPC does not admit an approximation algorithm with guarantee $(\ell_a-\epsilon)$ for $\epsilon \leq \frac{1}{2}$.

\bigskip

\noindent\textbf{The Vertex cover problem:} Given a graph $G(V, E)$ where $|V| = n$ and $|E| = m$, a set $T\subseteq V$ is called a vertex cover if for every edge $e\in E$, there is a vertex $v\in T$ such that $e$ is incident on $v$. Given an integer $k$, our goal is to decide whether there exists a vertex cover with cardinality at most $k$. An instance $(G,k)$ of the Vertex cover problem can be reduced to the instance of the Set cover problem by taking $\mathcal{U} = E$, $\mathcal{C} = \{\mathcal{N}(v) \mid v\in V\}$ and the same value of $k$.

\bigskip

\noindent{\bf Reduction to \MINSUMSPC: } Given a vertex cover instance $I$, construct the corresponding set cover instance $\overline{I}$. Then construct the \MINSUMSPC instance $I'$ from $\overline{I}$ as presented in Section~\ref{sec:ConstantFactorConstruction} with the following change: instead of constructing $n$ dummy agents and programs per subset $C_j\in \mathcal{C}$, construct $f(n,\epsilon)$-many such dummy agents and dummy programs corresponding to each subset $C_j$, where $f(n,\epsilon) = \frac{2n(1-\epsilon)}{\epsilon} \ge 1$ (since $\epsilon \leq \frac{1}{2}$).

Note that in the reduced instance, there are $n + f(n, \epsilon) m$ agents and $m + f(n, \epsilon) m$ programs.
The preference lists for all agents and programs are constructed identically as presented in Section~\ref{sec:ConstantFactorConstruction}. 
We define $k' = n + k f(n, \epsilon)$.

\begin{lemma}
    \label{lem:ellainapp2}
    Given a vertex cover $T$ in $I$ such that $|T| \le k$, we can construct an $\mathcal{A}$-perfect envy-free matching $M$ in $I'$ such that $c(M) \le k'$.
\end{lemma}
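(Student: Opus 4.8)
The plan is to mirror the construction in Lemma~\ref{lem:setCoverToMatching}, adapting it to the new instance $I'$ where each subset $C_j$ now contributes $f(n,\epsilon)$ dummy agents and dummy programs rather than $n$ of each. Given a vertex cover $T$ with $|T| \le k$, I would first open exactly the programs corresponding to $T$, setting $C_{open} = \{c_j \mid C_j \in T\}$, so that $|C_{open}| = |T| \le k$. I would then construct $M$ by matching every element-agent $a_i$ to its most-preferred subset-program in $C_{open}$; this is well-defined precisely because $T$ is a vertex cover, so every element $e_i$ (i.e.\ every edge of the vertex cover graph) is covered by some $C_j \in T$, guaranteeing $a_i$ has at least one acceptable program in $C_{open}$. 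For the dummy agents, I would match each $u_j^l$ whose subset-program $c_j \in C_{open}$ to $c_j$, and match every remaining dummy agent $u_j^l$ (those with $c_j \notin C_{open}$) to its dummy program $w_j^l$. Since every agent receives a partner, $M$ is $\mathcal{A}$-perfect.

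Next I would verify envy-freeness, reusing the argument of Lemma~\ref{lem:setCoverToMatching} essentially verbatim. For an element-agent $a_i$, any program $p$ with $p \succ_{a_i} M(a_i)$ is a subset-program not in $C_{open}$, and by construction no agent is matched to such a program, so $a_i$ envies no one. A dummy agent $u_j^l$ with $c_j \notin C_{open}$ is matched to $w_j^l$; the only program it prefers is $c_j$, which again has no agents matched to it, so no envy arises. A dummy agent with $c_j \in C_{open}$ is matched to its top choice $c_j$ and hence cannot envy. Thus $M$ is envy-free.

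Finally I would bound the cost. The only programs incurring cost are the subset-programs $c_j$ (cost $1$), since all dummy programs have cost $0$. Each open program $c_j \in C_{open}$ has all $f(n,\epsilon)$ of its dummy agents matched to it, contributing $f(n,\epsilon)$ to the cost per open program, for a total of $|C_{open}|\, f(n,\epsilon)$. Separately, the $n$ element-agents are distributed across the open programs and together contribute exactly $n$ to the total cost (one per element-agent). Hence
\[
c(M) = n + |C_{open}|\, f(n,\epsilon) \le n + k\, f(n,\epsilon) = k'.
\]
I do not anticipate a genuine obstacle here, since the lemma is the "easy direction" of the reduction: the construction is explicit and the verification is direct. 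The only point requiring a small amount of care is the cost accounting—specifically keeping the per-element-agent contribution ($n$ total) cleanly separated from the dummy-agent contribution ($|C_{open}|\, f(n,\epsilon)$)—so that the bound matches the definition $k' = n + k\,f(n,\epsilon)$ exactly rather than off by the structure of how element-agents split among open programs.
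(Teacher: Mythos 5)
Your proposal is correct and follows exactly the same route as the paper's proof, which simply reuses the construction and verification of Lemma~\ref{lem:setCoverToMatching} with $n$ dummy agents per subset replaced by $f(n,\epsilon)$, yielding $c(M) = n + |C_{open}|\,f(n,\epsilon) \le n + k\,f(n,\epsilon) = k'$. The paper states this more tersely by reference to the earlier lemma; your writeup just spells out the same details explicitly.
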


\begin{proof}
        Let $C_{open} = \{c_j \mid C_j\in T\}$. Note that $|C_{open}| = |T|$, as in Lemma~\ref{lem:setCoverToMatching}. 
        Again, $M$ is constructed by matching every element-agent $a_i$ to its highest-preferred program in $C_{open}$. Dummy agents corresponding to programs in $C_{open}$ are matched to their respective programs, and the remaining dummy agents are matched to their dummy programs. As in Lemma~\ref{lem:setCoverToMatching}, we see that $M$ is $\mathcal{A}$-perfect and envy-free. Moreover, $c(M) \le k'$.
\end{proof}

\begin{lemma}\label{lem:ellainapp1}
    If $I'$ admits an $\mathcal{A}$-perfect envy-free matching $M$ with $c(M) \leq \left (\ell_a - \epsilon \right)k'$ for $\frac 12 \ge \epsilon > 0$, then $I$ admits a vertex cover $T$with $|T| \leq \left(2-\epsilon' \right)k$ where $\epsilon' = \frac{\epsilon}{2}$.
\end{lemma}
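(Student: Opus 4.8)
The plan is to mirror the reverse construction of Lemma~\ref{lem:CCQToSetCoverConstantApprox}, recover a vertex cover from $M$, and then carry the cost accounting through the enlarged gadget size $f(n,\epsilon)$. First I would set $C_{open} = \{c_j \mid \exists a_i \text{ with } M(a_i) = c_j\}$ and $T = \{C_j \mid c_j \in C_{open}\}$, viewing each $C_j$ as the vertex $v_j$ whose incident-edge set it encodes. Exactly as in Lemma~\ref{lem:CCQToSetCoverConstantApprox}, $\mathcal{A}$-perfectness of $M$ forces every element-agent $a_i$ (equivalently, every edge $e_i$) to be matched to some $c_j \in C_{open}$ with $e_i \in C_j$, so every edge is incident to a chosen vertex and $T$ is a vertex cover. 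I would also record that $\ell_a = 2$ in $I'$: each element-agent lists precisely the two subset-programs corresponding to the endpoints of its edge, and each dummy agent has a list of length two, so the hypothesis reads $c(M) \le (2-\epsilon)k'$.

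The core step is a lower bound on $c(M)$. Since dummy programs have cost $0$ and every subset-program has cost $1$, we have $c(M) = \sum_{c_j} |M(c_j)|$. For each $c_j \in C_{open}$ some element-agent is matched to $c_j$, and envy-freeness then forces all $f(n,\epsilon)$ dummy agents $u_j^l$ to be matched to $c_j$: otherwise such a $u_j^l$, which $c_j$ prefers over any element-agent and which prefers $c_j$ to its only other option $w_j^l$, would envy the element-agent at $c_j$. This is the same forcing argument used in Lemma~\ref{lem:CCQToSetCoverConstantApprox}, now with gadget size $f(n,\epsilon)$ in place of $n$. Summing over the open programs and adding the contribution of the $n$ element-agents, all of which are matched inside $C_{open}$, gives $c(M) \ge |T|\, f(n,\epsilon) + n$.

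Combining this bound with the hypothesis and the definition $k' = n + k\,f(n,\epsilon)$ yields $|T|\,f(n,\epsilon) + n \le (2-\epsilon)\bigl(n + k\,f(n,\epsilon)\bigr)$, which rearranges to $|T| \le (2-\epsilon)k + \frac{(1-\epsilon)n}{f(n,\epsilon)}$. The choice $f(n,\epsilon) = \frac{2n(1-\epsilon)}{\epsilon}$ is engineered precisely so that the error term collapses to a size-independent constant: $\frac{(1-\epsilon)n}{f(n,\epsilon)} = \frac{\epsilon}{2}$, whence $|T| \le (2-\epsilon)k + \frac{\epsilon}{2}$. Finally, using $k \ge 1$ I would absorb this additive slack into the multiplicative factor: since $(2-\tfrac{\epsilon}{2})k - (2-\epsilon)k = \tfrac{\epsilon}{2}k \ge \tfrac{\epsilon}{2}$, we obtain $|T| \le (2-\tfrac{\epsilon}{2})k = (2-\epsilon')k$, as claimed.

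The main obstacle I anticipate is not any isolated step but the quantitative calibration. The number of element-agents is fixed at $n$ and cannot be amplified, so it produces an unavoidable additive error; the role of $f(n,\epsilon)$ is to dilute this error down to the constant $\tfrac{\epsilon}{2}$, which the $k \ge 1$ rounding step can then swallow to sharpen the factor from $2-\epsilon$ to $2-\tfrac{\epsilon}{2}$. Making the envy-freeness-driven lower bound $c(M) \ge |T|\,f(n,\epsilon) + n$ tight, in particular that \emph{all} dummies of every open program are forced in, is exactly what makes the calibration land on the claimed bound.
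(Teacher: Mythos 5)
Your proposal is correct and follows essentially the same route as the paper's proof: the same recovery of $T$ from $C_{open}$, the same envy-freeness-driven lower bound $c(M) \ge n + |T|\,f(n,\epsilon)$, and the same algebra using $\ell_a = 2$, the choice of $f(n,\epsilon)$, and $k \ge 1$ to land on the $(2-\epsilon/2)k$ bound. The only difference is presentational: you spell out the dummy-agent forcing argument explicitly, whereas the paper defers it to Lemma~\ref{lem:CCQToSetCoverConstantApprox}.
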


\begin{proof}

Given an $\mathcal{A}$-perfect envy-free matching $M$, define the set of opened programs $C_{open}$ and construct a vertex cover $T$ using $C_{open}$, as done in the proof of Lemma~\ref{lem:CCQToSetCoverConstantApprox}. Hence, $|T| = |C_{open}|$. As it follows from Lemma~\ref{lem:CCQToSetCoverConstantApprox}, $T$ is a valid set cover. Next, we show that $|T|\leq \left(2-\frac{\epsilon}{2} \right)k$. Since we have $f(n, \epsilon)$-many dummy agents for every program, specifically for every $c_j \in C_{open}$, the cost $c(M) \geq n + |T| \cdot f(n, \epsilon)$. Also, in the reduced instance $I'$, we have $\ell_a=2$ since each element of $\overline{I}$ is contained in exactly two sets (corresponding to the two end-points of that edge in $I$), so the element-agents have only two programs in their preference lists, dummy agents always have only two programs in their preference list. Therefore, we get

\begin{align*}
    n +|T| \cdot f(n, \epsilon) & \leq c(M) \le (\ell_a-\epsilon)k'
             = (\ell_a - \epsilon)(n + kf(n, \epsilon))\\
    \implies |T| &\leq (\ell_a - \epsilon)k + \frac{(\ell_a-\epsilon-1)n}{f(n, \epsilon)}
    = (2-\epsilon)k + \frac{(1-\epsilon)n}{f(n, \epsilon)} & (\ell_a = 2)
\end{align*}

Substituting $f(n, \epsilon) = \frac{n(1 - \epsilon)}{\left(\frac{\epsilon}{2}\right)}$ followed by $k \geq 1$, we get

\begin{align*}
    |T| &\leq (2-\epsilon)k + \frac{\epsilon}{2}
    \leq (2-\epsilon)k + \frac{k\epsilon}{2}
    =\left(2-\frac{\epsilon}{2} \right)k = (2 - \epsilon') k
\end{align*}

where $\epsilon' \equiv \frac{\epsilon}2 > 0$.
\end{proof}

If the \MINSUMSPC problem admits an $(\ell_a-\epsilon)$-approximation algorithm ($\epsilon > 0$), then using Lemma~\ref{lem:ellainapp1} and Lemma~\ref{lem:ellainapp2} we get a $(2-\epsilon')$-approximation algorithm ($\epsilon' > 0$) for the Vertex cover problem (where $\epsilon' = \frac{\epsilon}{2}$). However, under the Unique Games Conjecture, it is known that the Vertex cover problem cannot be approximated within a factor of $(2-\epsilon')$ for $\epsilon' > 0$~\cite{khot2008}. Therefore, the \MINSUMSPC problem does not admit an $(\ell_a-\epsilon)$-approximation algorithm, for any $\epsilon>0$.

We further notice that in the reduced instance, there is a master list ordering over agents and programs, as described in Section \ref{sec:ConstantFactorConstruction}.
This establishes Theorem~\ref{thm:minsumsp_inapproximability}.

\noindent \textbf{Remark:} An argument similar to that in Section~\ref{sec:ConstantFactorConstruction} shows that Theorem~\ref{thm:minsumsp_inapproximability} implies $(\ell_a-\epsilon)$-inapproximability for the \MINSUMSP problem as well.
\section{Concluding remarks}\label{sec:concl}
In this work we propose and investigate the generalized capacity planning problem
for the many-to-one matchings under two-sided preferences.
Motivated by the need to match every agent,
we propose a setting wherein costs control the extent to
which a program is matched. We aim to compute a stable matching in an
optimally cost-augmented instance such that it matches every agent.
We investigate two optimization problems.
We prove that the $\MINMAXSP$ problem is efficiently solvable
but the $\MINSUMSP$ problem turns out to be $\NP$-hard.
We present approximation algorithms for $\MINSUMSP$ with varying approximation guarantees
and an improved approximation algorithm for a special hard case.
A specific open direction is to bridge the gap between the upper bound and lower bound for 
general instances of the $\MINSUMSP$ problem.
It is also interesting to extend the LP algorithm for general instances.

\bibliography{refs}

\begin{thebibliography}{10}
\providecommand{\url}[1]{\texttt{#1}}
\providecommand{\urlprefix}{URL }
\providecommand{\doi}[1]{https://doi.org/#1}

\bibitem{AAST03}
Abdulkadiro\u{g}lu, A., S\"{o}nmez, T.: School {C}hoice: {A} {M}echanism
  {D}esign {A}pproach. American Economic Review  \textbf{93}(3),  729--747
  (2003). \doi{10.1257/000282803322157061}

\bibitem{ijcai_AbeKI22}
Abe, K., Komiyama, J., Iwasaki, A.: Anytime {C}apacity {E}xpansion in {M}edical
  {R}esidency {M}atch by {M}onte {C}arlo {T}ree {S}earch. In: Proceedings of
  the Thirty-First International Joint Conference on Artificial Intelligence,
  {IJCAI} 2022, Vienna, Austria, 23-29 July 2022. pp.~3--9 (2022).
  \doi{10.24963/ijcai.2022/1}

\bibitem{AFACAN2024277}
Afacan, M.O., Dur, U., {Van der Linden}, M.: Capacity design in school choice.
  Games and Economic Behavior  \textbf{146},  277--291 (2024).
  \doi{10.1016/j.geb.2024.05.002}

\bibitem{BCCKP19}
Baswana, S., Chakrabarti, P.P., Chandran, S., Kanoria, Y., Patange, U.:
  Centralized {A}dmissions for {E}ngineering {C}olleges in {I}ndia. Interfaces
  \textbf{49}(5),  338--354 (2019). \doi{10.1287/inte.2019.1007}

\bibitem{capplan}
Bobbio, F., Carvalho, M., Lodi, A., Rios, I., Torrico, A.: Capacity {P}lanning
  in {S}table {M}atching: {A}n {A}pplication to {S}chool {C}hoice. In:
  Proceedings of the Twenty-Fourth ACM Conference on Economics and Computation.
  p.~295 (2023). \doi{10.1145/3580507.3597771}

\bibitem{capvar}
Bobbio, F., Carvalho, M., Lodi, A., Torrico, A.: Capacity {V}ariation in the
  {M}any-to-one {S}table {M}atching (2022). \doi{10.48550/ARXIV.2205.01302}

\bibitem{chen}
Chen, J., Cs{\'{a}}ji, G.: Optimal {C}apacity {M}odification for
  {M}any-{T}o-{O}ne {M}atching {P}roblems. In: Proceedings of the Twenty-Second
  International Conference on Autonomous Agents and Multiagent Systems, {AAMAS}
  2023. pp. 2880--2882 (2023),
  \url{https://dl.acm.org/doi/10.5555/3545946.3599110}

\bibitem{dinur2014}
Dinur, I., Steurer, D.: Analytical approach to parallel repetition. In:
  Proceedings of the Forty-Sixth Annual ACM symposium on Theory of Computing.
  pp. 624--633 (2014). \doi{10.1145/2591796.2591884}

\bibitem{Vazirani}
Gajulapalli, K., Liu, J.A., Mai, T., Vazirani, V.V.: Stability-{P}reserving,
  {T}ime-{E}fficient {M}echanisms for {S}chool {C}hoice in {T}wo {R}ounds. In:
  Proceedings of the Fortieth {IARCS} Annual Conference on Foundations of
  Software Technology and Theoretical Computer Science, ({FSTTCS} 2020). pp.
  21:1--21:15 (2020). \doi{10.4230/LIPIcs.FSTTCS.2020.21}

\bibitem{GS62}
Gale, D., Shapley, L.S.: College admissions and the stability of marriage. The
  American Mathematical Monthly  \textbf{69}(1),  9--15 (1962),
  \url{http://www.jstor.org/stable/2312726}

\bibitem{KavithaN11PopularVar}
Kavitha, T., Nasre, M.: Popular matchings with variable item copies. Theor.
  Comput. Sci.  \textbf{412}(12-14),  1263--1274 (2011).
  \doi{10.1016/J.TCS.2010.12.067}

\bibitem{KNP12}
Kavitha, T., Nasre, M., Nimbhorkar, P.: Popularity at minimum cost. J. Comb.
  Optim.  \textbf{27}(3),  574--596 (2014). \doi{10.1007/S10878-012-9537-0}

\bibitem{khot2008}
Khot, S., Regev, O.: Vertex cover might be hard to approximate within
  2-$\epsilon$. Journal of Computer and System Sciences  \textbf{74}(3),
  335--349 (2008). \doi{10.1016/j.jcss.2007.06.019}

\bibitem{MNNR18}
Krishnapriya, A.M., Nasre, M., Nimbhorkar, P., Rawat, A.: {How Good Are Popular
  Matchings?} In: Proceedings of the Seventeenth International Symposium on
  Experimental Algorithms (SEA 2018). pp. 9:1--9:14 (2018).
  \doi{10.4230/LIPIcs.SEA.2018.9}

\bibitem{iwoca}
Limaye, G., Nasre, M.: Optimal {C}ost-{B}ased {A}llocations {U}nder
  {T}wo-{S}ided {P}references. In: Proceedings of the Thirty-Fourth
  International Workshop on Combinatorial Algorithms (IWOCA 2023). pp. 259--270
  (2023), \url{https://link.springer.com/chapter/10.1007/978-3-031-34347-6_22}

\bibitem{real_world_example_3}
Othman, A., Sandholm, T., Budish, E.: Finding approximate competitive
  equilibria: efficient and fair course allocation. In: Proceedings of the
  Ninth International Conference on Autonomous Agents and Multiagent Systems
  (AAMAS 2010). p. 873–880 (2010),
  \url{https://dl.acm.org/doi/10.5555/1838206.1838323}

\bibitem{robards2001applying}
Robards, P.A.: Applying two-sided matching processes to the {U}nited {S}tates
  {N}avy {E}nlisted assignment process. Tech. rep., Naval Postgraduate School
  Monterey CA (2001), \url{https://hdl.handle.net/10945/10845}

\bibitem{Roth}
Roth, A.E.: On the {A}llocation of {R}esidents to {R}ural {H}ospitals: {A}
  {G}eneral {P}roperty of {T}wo-{S}ided {M}atching {M}arkets. Econometrica
  \textbf{54}(2),  425--427 (1986), \url{http://www.jstor.org/stable/1913160}

\bibitem{Santhini}
Santhini, K.A., Sankar, G.S., Nasre, M.: Optimal matchings with one-sided
  preferences: Fixed and cost-based quotas. In: Proceedings of the Twenty-First
  International Conference on Autonomous Agents and Multiagent Systems, {AAMAS
  2022}. pp. 696--704 (2022),
  \url{https://dl.acm.org/doi/10.5555/3535850.3535929}

\bibitem{WR18}
Wu, Q., Roth, A.E.: The lattice of envy-free matchings. Games and Economic
  Behavior  \textbf{109},  201--211 (2018). \doi{10.1016/j.geb.2017.12.016}

\bibitem{yang2003two}
Yang, W., Giampapa, J., Sycara, K.: {Two-sided matching for the US Navy
  Detailing Process with Market Complication}. Tech. rep., Robotics Institute,
  Carnegie-Mellon University (2003),
  \url{https://www.ri.cmu.edu/publications/two-sided-matching-for-the-u-s-navy-detailing-process-with-market-complication/}

\end{thebibliography}

\appendix

\section{Appendix}

\subsection{Challenges in designing a primal-dual algorithm for \MINSUMSPC}

A standard primal-dual approach for the $\MINSUMSPC$ problem would be to begin with
 a dual feasible solution. The algorithm then repeatedly updates the dual till we obtain a primal feasible solution using the tight edges w.r.t. to
the dual setting.
We illustrate the challenges in using such an approach for the general $\MINSUMSPC$ problem. 
Consider the $\MINSUMSPC$ instance in Fig.~\ref{fig:challenges_1}.
Recall that the tuple $(q,c)$ preceding a program indicates that the initial quota and cost of that program
is $q$ and $c$ respectively. Since the instance in Fig.~\ref{fig:challenges_1}
is of the \MINSUMSPC problem, the initial quotas are 0 for each program.

\begin{figure}[!ht]
\begin{minipage}{0.4\textwidth}
	\begin{align*}
		\s_1 &: \cc_1 \mpref \cc_0\\
		\s_2 &: \cc_1 \mpref \cc_0\\
		\s_3 &: \cc_1 \mpref \cc_0\\
		\s_4 &: \cc_1 \mpref \cc_2 \mpref \cc_0\\
		\s_5 &: \cc_2 \mpref \cc_3
	\end{align*}
\end{minipage}\hfill
	\begin{minipage}{0.4\textwidth}
		\begin{align*}
			(0,0)\ \cc_0 &: \s_1 \mpref \s_2 \mpref \s_3 \mpref \s_4\\
			(0,1)\ \cc_1 &: \s_1 \mpref \s_2 \mpref \s_3 \mpref \s_4\\
			(0,6)\ \cc_2 &: \s_4 \mpref \s_5\\
			(0,11)\ \cc_3 &: \s_5
		\end{align*}
	\end{minipage}\hfill

	\caption{A $\MINSUMSPC$ instance used in illustrating the challenges}
	\label{fig:challenges_1}
\end{figure}

Assume that we begin with an initial dual setting where all dual variables are set to $0$. 
The matching $M  = \{(a_1, p_0), (a_2, p_0), (a_3, p_0), (a_4,p_0)\}$ obtained on the tight edges
is envy-free but does not match agent $a_5$ and hence is not primal feasible.
Since no edge incident on $a_5$ is tight (slack on $(a_5, p_2)$ and $(a_5, p_3)$ is 6 and 11 respectively) we 
can set $y_5$ to 6 while maintaining dual feasibility. We observe that while this update makes the edge $(a_5, p_2)$
tight, adding the edge to the matching $M$ introduces an envy pair -- namely $a_4$ envying $a_5$. 
We note that this is our first difficulty,
that is, while there are tight edges incident on an unmatched
agent, none of them may be matchable. 

The second difficulty stems from the following: in order to match $a_5$ along the (non-matchable) tight edge $(a_5,p_2)$ we must first resolve the potential envy pair $(a_4, a_5)$, that is, we must {\em promote} agent $a_4$. With the current dual setting, $y_4$ cannot be increased hence a natural way is to
update a $z$ variable. This can indeed be achieved by setting $z_{{a_4}, {p_2}, {a_5}} = 1$, thus making $(a_4, p_1)$ tight. However, as encountered earlier, this edge is not matchable, since matching $a_4$ to $p_1$ introduces several other envy pairs. Note that this chain of potential envy resolutions is triggered by the unmatched agent $a_5$. Since, this chain can be arbitrarily long, several $z$ updates may be required.
It is not immediate if these updates in $z$ variables can be charged to an update in some $y$ variable, thereby achieving a reasonable approximation ratio.

However, as seen in Section~\ref{sec:two_cost_algo}, for the restricted hard case of two distinct costs, we are able to resolve these challenges.

\subsection{Example illustrating the execution of Algorithm~\ref{algo:dualalgo_2}}

\begin{figure}[!h]
\begin{center}
\begin{minipage}{0.2\textwidth}
\begin{align*}
      a_1 &: p_1\mpref p_2\mpref p_0\\
      a_2 &: p_2\mpref p_3\mpref p_0\\
      a_3 &: p_1\mpref p_2\mpref p_3\\
  \\[-10pt]
  \hline
  \\[-10pt]
              (0,0)\ p_0 &: a_1\mpref a_2\\
              (0,1)\ p_1 &: a_1\mpref a_3\\
              (0,1)\ p_2 &: a_1\mpref a_2\mpref a_3\\
              (0,1)\ p_3 &: a_2\mpref a_3
              \end{align*}
  \end{minipage}\hfill
      \begin{minipage}{0.65\textwidth}
      \vspace{0.5cm}
    \begin{itemize}
        \item $M = \{(a_1,p_0), (a_2,p_0)\}$
        \item \textcolor{blue}{(\ref{step:atight})} $a = a_3$, $y_{a_3} = 1$, tight edges on $a_3$ are $\{(a_3,p_1)$, $(a_3,p_2)$, $(a_3,p_3)\}$, $thresh(p_1) = thresh(p_2) = a_1$ and $thresh(p_3) = a_2$
        \item \textcolor{blue}{(\ref{step:ba})} $\BBB(a_3) = \{p_1, p_2, p_3\}$
        \item \textcolor{blue}{(\ref{step:pickp},\ref{step:zupdt})} let $a' = a_1$, then $p = p_2$, $z_{a_1,p_2,a_3} = 1$
        \item \textcolor{blue}{(\ref{step:matchaprime})} Tight edges on $a_1$ are $\{(a_1, p_1), (a_1, p_2)\}$, $p' = p_1$, $M = \{(a_1,p_1), (a_2,p_0)\}$, tight edges on $a_3$ are $\{(a_3,p_1), (a_3,p_3)\}$
        \item \textcolor{blue}{(\ref{step:fp})} $thresh(p_1) = a_3$, $M = \{(a_1,p_1), (a_2,p_0), (a_3,p_1)\}$
        \item \textcolor{blue}{(\ref{step:recompBa})} $\BBB(a_3) = \emptyset$
    \end{itemize}
              \end{minipage}
        \vspace{0.3cm}
        \caption{A $\CCQCC$ instance.
        An execution of Algorithm~\ref{algo:dualalgo_2} is illustrated by giving the state of the algorithm. The \textcolor{blue}{blue} content in the brackets correspond to the labels of steps mentioned in the description of the algorithm.}
    \label{fig:exinst}
    \end{center}
\end{figure}

\end{document}